\documentclass[journal,onecolumn,draftclsnofoot,12pt]{IEEEtran}

\IEEEoverridecommandlockouts

\makeatletter
\newcommand*\titleheader[1]{\gdef\@titleheader{#1}}
\AtBeginDocument{%
  \let\st@red@title\@title
  \def\@title{%
    \bgroup\normalfont\large\centering\@titleheader\par\egroup
    \vskip1.5em\st@red@title}
}
\makeatother


\usepackage{epsfig}
\usepackage{url}
\usepackage{upgreek}
\usepackage{bm}
\usepackage{cite}
\usepackage{subcaption}
\usepackage{amsmath}
\usepackage{amsfonts}
\usepackage{amssymb}
\usepackage{graphicx}
\usepackage{array}
\usepackage{nicefrac}
\usepackage{textcomp}
\usepackage{xcolor}
\usepackage{tikz}
\usepackage{bm}
\usepackage{algorithmicx}
\usepackage{algorithm}
\usepackage[noend]{algpseudocode}
\usepackage{varwidth}
\usepackage{multicol}
\usepackage{mathtools}
\usepackage{cuted}
\usepackage{stfloats}
\usepackage{xr}
\externaldocument{Appendix}

\usepackage{sansmath,etoolbox}

\appto{\sffamily}{\sansmath}
\appto{\rmfamily}{\unsansmath}


\def\sfA{\mathsf{A}}
\def\sfB{\mathsf{B}}
\def\sfC{\mathsf{C}}
\def\sfD{\mathsf{D}}

\def\sfH{\mathsf{H}}
\def\sfI{\mathsf{I}}

\def\sfP{\mathsf{P}}
\def\sfQ{\mathsf{Q}}
\def\sfR{\mathsf{R}}
\def\sfS{\mathsf{S}}
\def\sfT{\mathsf{T}}
\def\sfU{\mathsf{U}}
\def\sfV{\mathsf{V}}
\def\sfW{\mathsf{W}}
\def\sfX{\mathsf{X}}
\def\sfY{\mathsf{Y}}


\def\sfb{{\mathsf{b}}}
\def\sfc{{\mathsf{c}}}

\def\sfee{{\mathsf{e}}}

\def\sfh{{\mathsf{h}}}

\def\sfp{{\mathsf{p}}}
\def\sfq{{\mathsf{q}}}

\def\sfs{{\mathsf{s}}}
\def\sft{{\mathsf{t}}}
\def\sfu{{\mathsf{u}}}
\def\sfv{{\mathsf{v}}}
\def\sfw{{\mathsf{w}}}
\def\sfx{{\mathsf{x}}}
\def\sfy{{\mathsf{y}}}

\def\sf0{{\mathsf{0}}}

\usepackage{scalerel}
\usetikzlibrary{svg.path}

\definecolor{orcidlogocol}{HTML}{A6CE39}
\tikzset{
  orcidlogo/.pic={
    \fill[orcidlogocol] svg{M256,128c0,70.7-57.3,128-128,128C57.3,256,0,198.7,0,128C0,57.3,57.3,0,128,0C198.7,0,256,57.3,256,128z};
    \fill[white] svg{M86.3,186.2H70.9V79.1h15.4v48.4V186.2z}
                 svg{M108.9,79.1h41.6c39.6,0,57,28.3,57,53.6c0,27.5-21.5,53.6-56.8,53.6h-41.8V79.1z M124.3,172.4h24.5c34.9,0,42.9-26.5,42.9-39.7c0-21.5-13.7-39.7-43.7-39.7h-23.7V172.4z}
                 svg{M88.7,56.8c0,5.5-4.5,10.1-10.1,10.1c-5.6,0-10.1-4.6-10.1-10.1c0-5.6,4.5-10.1,10.1-10.1C84.2,46.7,88.7,51.3,88.7,56.8z};
  }
}

\newcommand\orcidicon[1]{\href{https://orcid.org/#1}{\mbox{\scalerel*{
\begin{tikzpicture}[yscale=-1,transform shape]
\pic{orcidlogo};
\end{tikzpicture}
}{|}}}}
\usepackage{hyperref} 

\newcommand{\ji}{\text{j}}
\newcommand{\D}{\text{d}}
\newcommand{\U}{\text{u}}
\newcommand{\R}{\text{Re}}
\newcommand{\I}{\text{Im}}

\newcommand{\T}{\text{T}}
\newcommand{\maxx}{\text{max}}

\newcommand{\Her}{\text{H}}
\newcommand{\diag}{\text{diag}}
\newcommand{\bdiag}{\text{blkDiag}}
\newcommand{\ndiag}{\text{nd}}
\newcommand{\tr}{\text{tr}}
\newcommand{\thh}{\text{th}}
\newcommand{\NBS}{N_\text{BS}}
\newcommand{{\BS}}{\text{BS}}
\newcommand{\vecc}{\text{vec}}
\newcommand{\SC}{N_\text{SC}}
\newcommand{\G}{N_\mathcal{G}}
\newcommand{\SG}{N_\mathcal{S}}
\newcommand{\PA}{\text{PA}}
\newcommand{\QPA}{\bar{\mathbf{Q}}_\PA}
\newcommand{\CP}{N_\text{CP}}
\newcommand{\ext}{\text{ext}}
\newcommand{\kn}{{k,n}}
\newcommand{\inn}{{i,n}}
\newcommand{\ijj}{{i,j}}
\newcommand{\CDL}{R^\text{DL}}
\newcommand{\CUL}{R^\text{UL}}
\newcommand*{\rom}[1]{\expandafter\@slowromancap\romannumeral #1@}

\DeclareMathOperator*{\argmax}{arg\,max}
\DeclareMathOperator*{\argmin}{arg\,min}

\newtheorem{theorem}{Theorem}[section]
\newtheorem{corollary}{Corollary}[theorem]
\newtheorem{lemma}[theorem]{Lemma}
\newenvironment{proof}[1]{\par\noindent\underline{Proof:}\space#1}{\hfill $\blacksquare$}

\def\BibTeX{{\rm B\kern-.05em{\sc i\kern-.025em b}\kern-.08em
    T\kern-.1667em\lower.7ex\hbox{E}\kern-.125emX}}
\begin{document}

\title{Multi-user Downlink Beamforming using Uplink Downlink Duality  with CEQs for Frequency Selective Channels}


\author{Khurram Usman~Mazher$^{\textsuperscript{\orcidicon{0000-0001-8760-391X}}}$,~\IEEEmembership{Student Member,~IEEE,} Amine~Mezghani$^{\textsuperscript{\orcidicon{0000-0002-7625-9436}}}$,~\IEEEmembership{Member,~IEEE,}
        and Robert W.~Heath Jr.$^{\textsuperscript{\orcidicon{0000-0002-4666-5628}}}$,~\IEEEmembership{Fellow,~IEEE}
\thanks{K. U. Mazher (khurram.usman@utexas.edu) is with the Wireless Networking and Communications Group, The University of Texas at Austin, Austin, TX 78712 USA.}
\thanks{A. Mezghani (amine.mezghani@umanitoba.ca) is with the Department of Electrical and Computer Engineering at the University of Manitoba, Winnipeg, MB R3T 2N2, Canada.}
\thanks{R. W. Heath Jr. (rwheathjr@ncsu.edu) is with 6GNC, Department of Electrical and Computer Engineering at the North Carolina State University, Raleigh, NC 27695 USA.}
}

\maketitle

\vspace{-1.2cm}
\begin{abstract}
High-resolution fully digital transceivers are infeasible at millimeter-wave (mmWave) due to their increased power consumption, cost, and hardware complexity. The use of low-resolution converters is one possible solution to realize fully digital architectures at mmWave. In this paper, we consider a setting in which a fully digital base station with constant envelope quantized (CEQ) digital-to-analog converters on each radio frequency chain communicates with multiple single antenna users with individual signal-to-quantization-plus-interference-plus-noise ratio (SQINR) constraints over frequency selective channels. We first establish uplink downlink duality for the system with CEQ hardware constraints and OFDM-based transmission considered in this paper. Based on the uplink downlink duality principle, we present a solution to the multi-user multi-carrier beamforming and power allocation problem that maximizes the minimum SQINR over all users and sub-carriers. We then present a \emph{per sub-carrier} version of the originally proposed solution that decouples all sub-carriers of the OFDM waveform resulting in smaller sub-problems that can be solved in a parallel manner. Our numerical results based on 3GPP channel models generated from \emph{Quadriga} demonstrate improvements in terms of ergodic sum rate and ergodic minimum rate over state-of-the-art linear solutions. We also show improved performance over non-linear solutions in terms of the coded bit error rate with the increased flexibility of assigning individual user SQINRs built into the proposed framework.
\end{abstract}


\vspace{-.2cm}
\begin{IEEEkeywords}
\vspace{-.3cm}
CEQ ADC, CEQ DAC, optimized dithering, uplink downlink duality, OFDM
\end{IEEEkeywords}

\section{Introduction}\label{sec:intro}


Low-resolution analog to digital converters (ADCs) and digital to analog converters (DACs) are the key to power-efficient fully digital massive multiple-input-multiple-output (MIMO) transceivers operating at large bandwidths \cite{SP,NYU,powerStuder}. In addition to a reduction in the power consumption and cost of the DACs, other components in the radio frequency (RF) chain can be tailored to low-resolution DACs (such as power amplifiers and baseband processing) making the architecture even more efficient. The distortion resulting from the low-resolution DACs can be compensated by oversampling in time/space and advanced signal processing algorithms. In this paper, we consider the setting where multiple single antenna users are communicating with a fully digital base station (BS) with constant envelope quantizer (CEQ) DACs on each RF chain using the orthogonal frequency division multiplexing (OFDM) waveform in the downlink (DL). We design the frequency domain precoders and power allocation by maximizing the minimum signal-to-quantization-plus-interference-plus-noise ratio (SQINR) across all users and sub-carriers. This optimization criterion has not been considered before for CEQ OFDM based DL transmissions.

\vspace{-0.cm}
\subsection{Prior work}
Prior work on multi-user (MU) MIMO-OFDM DL under low resolution DAC constraints can be grouped into linear methods \cite{StuderOFDMconf,StuderOFDM,amine2009,amine2016} and non-linear methods \cite{StuderSQUID,MSM1,MSM2,MSM3,magic,magic2,access,oob,oob2}. In the linear framework, information symbols are mapped to the antennas using a precoding matrix designed based on an optimization criterion. Most of the existing prior work on linear methods (under low resolution constraints) is limited to using precoders designed for the ideal $\infty$-resolution setting. The low-resolution DAC constraint is enforced by quantizing the signal before transmission. Prior work on MRT, ZF, and minimizing the mean square error (MMSE) based linear precoding has shown that large sum rates are achievable for MU-MIMO-OFDM DL despite the extreme distortion caused by 1-bit quanitzation \cite{StuderOFDMconf,StuderOFDM,amine2009,amine2016}. Further improvement in terms of the achieved SQINR and uncoded bit error rate (BER) for ZF precoding (for flat fading channels) was demonstrated by adding optimized dithering to the transmit signal before quantization \cite{amodh}. Nevertheless, linear methods \cite{StuderOFDMconf,StuderOFDM,amine2009,amine2016} have a significant performance gap from non-linear methods \cite{StuderSQUID,MSM1,MSM2,MSM3,magic,magic2,access,oob,oob2} particularly for large number of active users. One exception to this observation is our prior work in \cite{pw,pw2} where we demonstrated performance comparable to non-linear methods for flat fading channels.

Non-linear methods directly map each set of information symbols to the quantized transmit signal by solving a relaxed version of an NP-hard problem (due to CEQ constraints) based on some optimization criterion (such as MMSE). Non-linear methods for MU-MIMO-OFDM DL with low-resolution quantized phase DAC constraints based on the semi-definite relaxation and squared $\ell_\infty$-norm relaxation of the symbol error rate (SER) were proposed in \cite{StuderSQUID}. A different approach based on maximizing the safety margin (MSM) of the received symbols (drawn from a phase shift keying (PSK) constellation) from the decision boundaries was demonstrated in \cite{MSM1,MSM3} for frequency selective channels with 1-bit DAC constraints. That work was later generalized to CEQs \cite{MSM2}. Another non-linear method for CEQ MU-MIMO-OFDM DL approximated the solution to the MSE problem formulated in time domain using a greedy coordinate descent algorithm \cite{magic}. A slightly different version of that algorithm where the greedy minimization is replaced a round robin minimization has been reported recently \cite{magic2}. Another efficient solution to the MSE problem based on cyclic coordinate descent was proposed for constant envelope MU-MIMO-OFDM \cite{access}. A slightly different but closely related solution based on Gibbs sampling optimized a linear combination of the MSE and out-of-band (OOB) radiated power and showed a reduction in the OOB power by about 10 dB at the expense of reduced throughput \cite{oob,oob2}. The performance of the non-linear methods in \cite{StuderSQUID,MSM1,MSM3,MSM2,magic,magic2,access,oob,oob2} is comparable in terms of coded and uncoded BER. Some non-linear methods, however, result in a significant computational cost for systems with larger dimensions due to exponential increase in their complexity with the system dimensionality \cite{Studer}. The focus of research in this direction \cite{StuderSQUID,MSM1,MSM3,MSM2,magic,magic2,access,oob,oob2} has been to solve the NP-hard problem using various relaxations and approximations without sacrificing on the performance. Another important aspect is that each non-linear method needs to solve an optimization problem for every channel use during the coherence time. Lastly, most of the non-linear methods \cite{StuderSQUID,MSM1,MSM3,MSM2,magic,magic2,access,oob,oob2} have hyperparameters that need to be appropriately chosen according to the operating conditions.

The prior work on CEQ MU-MIMO-OFDM DL (for both linear and non-linear methods) have primarily focused on the MSE, the SER and the MSM optimization criterion. In this paper, we introduce a per-user and per-subcarrier target SQINR framework and propose a linear precoding solution based on maximizing the minimum (max-min) SQINR over all users and sub-carriers. 

\vspace{-0.3cm}
\subsection{Contributions}
In this paper, we provide a linear precoding based solution to the MU-MIMO-OFDM DL precoding problem under CEQ hardware constraints at the BS. The BS communicates with multiple single antennas users, with individual SQINR constraints, over frequency selective channels using OFDM. We linearize the resulting non-linear system using the \emph{Bussgang} decomposition \cite{StuderOFDM} to derive the proposed linear solution. The main contributions of this paper can be summarized~as:

\begin{itemize}
\item We establish UL-DL duality for the MU-MIMO-OFDM setting with CEQ ADC/DAC~constraints under an uncorrelated quantization noise assumption. This is different from the $\infty$-resolution setting because of the introduction of quantization noise into the DL/UL SQINR expressions. Furthermore, this is different from the flat fading case because the quantization noise depends on all the sub-carriers destroying the orthogonality inherent in OFDM.


\item Making use of the UL-DL duality result, we propose an alternating minimization solution to the MU-MIMO-OFDM DL beamforming (BF) problem with CEQ DAC constraints based on the max-min SQINR criterion. The solution jointly optimizes the power allocated to each user across all sub-carriers and the frequency domain DL BF matrix and does not have any hyper-parameters that need to be tuned. We give theoretical justification and demonstrate through numerical experiments that the bigger problem involving all sub-carriers can be broken down into smaller decoupled problems for each sub-carrier in the large system~limit. 

\item We introduce optimized dithering by adding dummy users in the system which operate in the null space of the true system users. Optimized dithering ensures that the quantization noise resulting from CEQs is uncorrelated, particularly when the number of users is small.

\item We demonstrate the superiority of the proposed solution over other linear and non-linear precoding solutions \cite{StuderOFDMconf,StuderOFDM,StuderSQUID,magic2} in terms of the ergodic sum rate, ergodic minimum rate, and coded BER using numerical experiments carried out over 3GPP channel models.



\end{itemize}

The work in this paper generalizes our prior work \cite{pw,pw2} to CEQs and frequency selective channels. Our prior work \cite{pw,pw2}, limited to frequency flat channels and 1-bit DACs, was an important step towards the development of the proposed framework. It was, however, not directly applicable to large bandwidth signals being transmitted over frequency selective channels which is the most probable use case for low-resolution DACs equipped fully digital architectures. The ideas presented in our prior work were also limited to 1-bit quantization. This paper generalizes the UL-DL duality principle proved in \cite{pw2} to frequency selective channels and CEQs. With this generalization, the alternating minimization algorithm proposed in \cite{pw2} is applicable to the frequency selective setting with a few minor changes in the structure of the involved matrices. This, however, results in a large dimensional problem comprising of all sub-carriers of all users. In this paper, we argue that this bigger problem can in fact be broken down into smaller sub-problems for each sub-carrier and verify this using numerical experiments. Lastly, we highlight a few important aspects of the precoding under hardware constraints problem that seem to have been neglected in the prior work. The results from prior work \cite{StuderSQUID,MSM1,MSM3,MSM2,magic,magic2,access,oob,oob2} obtained on independent and identically distributed (IID) Rayleigh fading channels show that all linear precoding strategies hit a floor at a certain SNR/transmit power and are significantly outperformed by non-linear precoding strategies. We demonstrate through our results that the non-linear precoding strategies also floor out at a certain SNR/transmit power for realistic channel models considered in this paper and are in fact outperformed by the proposed solution over a wide range of parameters.

The rest of this paper is organized as follows. In Section \ref{sec:systemModel}, we describe the OFDM system model and formulate the MU-DL-BF problem. In Section \ref{sec:duality}, we establish the UL-DL duality principle for frequency selective channels under CEQ constraints using the uncorrelated quantization noise approximation. In Section \ref{sec:solution}, we provide the details of the joint power allocation and beamforming optimization algorithm for the MU-DL-BF problem. We present numerical results in Section \ref{sec:results} before concluding the paper with directions for future work in Section \ref{sec:conc}.

\emph{Notation:} $\bm{B}$ is a matrix, $\bm{b}$ is a vector and $b$ is a scalar. $\bm{B}_i$ and $\bm{B}_{ij}$ denotes the $i^{\thh}$ row and $i^{\thh} , j^{\thh}$ entry of the matrix $\bm{B}$. $\bm{b}_i$ denote the $i^{\thh}$ entry of $\bm{b}$. The operator $(\cdot)^\T$, $(\cdot)^\Her$, and $(\cdot)^\ast$ denote the transpose, conjugate transpose and conjugate of a matrix/vector. $\diag (\bm{B})$ denotes a diagonal matrix containing only the diagonal elements of $\bm{B}$. $\bm{B}^\ndiag = \bm{B} - \diag (\bm{B})$ denotes the matrix $\bm{B}$ with its diagonal set to 0. $\tr(\bm{B})$, $\| \bm{B} \|_F$, and $\lambda_\maxx (\bm{B})$ denote the trace, Frobenius norm, and dominant eigenvalue of the matrix $\bm{B}$. $\vecc(\bm{B})$ represents the vectorization operation applied to $\bm{B}$. $\bdiag( \bm{B}_1, \dots \bm{B}_m )$ denotes a block-diagonal matrix with the matrices $\bm{B}_1, \dots \bm{B}_m$ on its diagonal. $\mathbf{F}_N$ is the FFT matrix of size $N \times N$ normalized by $\frac{1}{\sqrt{N}}$. $\bm{I}_N$ represents the identity matrix of size $N \times N$. The vector $\bm{1}_N$ ($\bm{0}_N$) denotes a vector of all ones (zeros) of length $N$. The matrix $\bm{R}_{\bm{b}}$ denotes the covariance matrix of the signal $\text{b}$. $\|\bm{b}\|_p$ is the $p$-norm of $\bm{b}$. $\bm{e}_k$ denotes the canonical basis vector with a 1 at the $k^\thh$ index and zeros elsewhere. $\R(a)$ and $\I(a)$ denote the real and imaginary parts of $a$. The notations $|\cdot| , {(\cdot)}^k$ and $ \angle(\cdot)$ denote the absolute value, $k^{\thh}$ power and phase operation applied to a scalar or element-wise to a vector/matrix. $\mathcal{N}( \bm{\mu} , \bm{\Sigma} )$ denotes a complex Gaussian multi-variate distribution with mean  $\bm{\mu}$ and covariance $\bm{\Sigma}$. 

\section{System model} \label{sec:systemModel}
We consider a DL scenario where a single BS with $\NBS$ antennas and RF chains, each equipped with a $b$-bit CEQ DAC, communicates with $K$ single antenna users using the OFDM waveform with $\SC$ sub-carriers, as illustrated in Fig. \ref{fig:system}. The quantization operation of the $b$-bit CEQ is represented by $\mathcal{Q}_b$. The $b$-bit CEQ takes values in the set $\mathcal{X}_b = \{ e^{\ji( \pi + 2\pi m )/2^b} \}$ for $m \in \{ 0, \dots, 2^b -1\}$ and can be efficiently implemented using polar amplifier based transmitter structures \cite{polar}. For the $\infty$-bit CEQ, $\mathcal{X}_\infty$ equals the complex unit-magnitude circle and the CEQ returns the unit-norm normalized version of its input. Note that $b = 2$ corresponds to the 1-bit DAC setting. We let $\mathcal{S}$ ( $\mathcal{G}$) denote the set of occupied (guard) subcarriers with $|\mathcal{S}| = \SG$ ($|\mathcal{G}| = \G$) and $\SC = \SG + \G$. Each OFDM symbol has a cyclic prefix (CP) of length $\CP$ to ensure that the linear convolution over the wireless channel can be replaced by circular convolution. For ease of exposition, all scalars/vectors/matrices defined in the frequency domain are in the san serif font (e.g. $\bm{\sfX}$). Similarly, vectorized version of matrices and block-diagonal matrices have a $\bar{(\cdot)}$ on top of them. Furthermore, we describe the system model and other development under the assumption that $\mathcal{G} = \emptyset$, i.e. all $\SC$ sub-carriers are active for ease of exposition. The analysis and algorithm development presented in this work can be mapped to any arbitrary $\mathcal{S}$. 
 
 \begin{figure}[t]
    	\begin{center}
    		\includegraphics[width=.99\textwidth,clip,keepaspectratio]{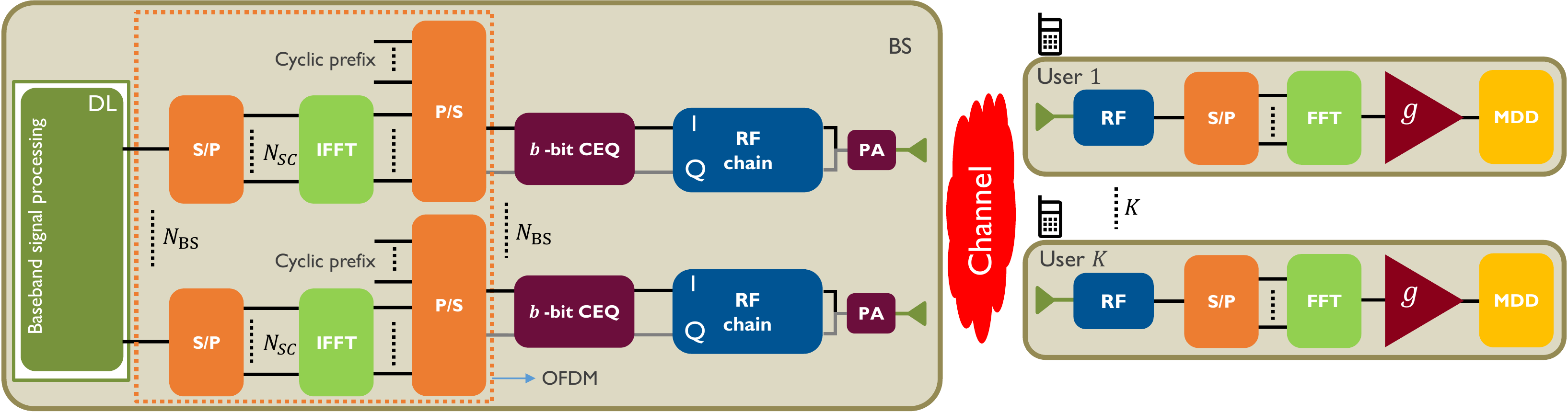}
    	\end{center}
\vspace{-.5cm}
    	\caption{Functional block diagram of the system model where a fully digital BS with ${\NBS}$ antennas and a $b$-bit CEQ ADC/DAC on each RF chain communicates with $K$ single antenna users using the OFDM waveform with $\SC$ sub-carriers.}
    	 \label{fig:system}
\vspace{-.7cm}
\end{figure}

 
 The $\ell^\thh$-tap of the $L$-tap channel from the BS to all $K$ users is denoted by $\mathbf{{H}}_{\ell} \in \mathbb{C}^{\NBS \times K}$. The frequency domain channel on the $n^\thh$ sub-carrier, $\bm{\sfH}_n \in \mathbb{C}^{\NBS \times K}$, is obtained by taking an $\SC$-size FFT of ${\mathbf{H}}_{\ell}$ over the channel tap dimension. The $k^\thh$ column of $\bm{\sfH}_n$, $\bm{\sfh}_\kn$, denotes the frequency domain channel of the $k^\thh$ user on the $n^\thh$ sub-carrier. The BS sends the IID  $\mathcal{N}( 0, 1 )$ signal $\sfs_\kn$ to the $k^\thh$ users on the $n^\thh$ sub-carrier for $1 \leq k \leq K$ and $1 \leq n \leq \SC$. With $\bm{\sfs}_n = [\sfs_{1,n}, \dots \sfs_\kn, \dots \sfs_{K,n}]^\T \in \mathbb{C}^{K}$ denoting the symbols on the $n^\thh$ sub-carrier, the information symbol matrix for all $\SC$ sub-carriers is given by $\bm{\sfS} =  [\bm{\sfs}_1, \dots \bm{\sfs}_{\SC}] \in \mathbb{C}^{K \times \SC}$. During the DL stage, the symbols $\bm{\sfs}_n$ are mapped to the antenna array using the BF matrix $\bm{\sfT}_n = [\bm{\sft}_{1,n} , \dots , \bm{\sft}_{K,n}] \in \mathbb{C}^{\NBS \times K}$, where $\| \bm{\sft}_{k,n} \|_2 = 1$. The BS has a total transmit power constraint of $P_{\BS}$~Watts.
 
We also describe the corresponding UL scenario (with $b$-bit ADCs at the BS) where the $K$ users send information symbols to the BS. The $b$-bit ADCs are not a requirement and can be thought of as a mathematical construct for the purpose of this paper. During the UL stage, the symbols $\bm{\sfs}_n$ are resolved at the BS using the BF matrix $\bm{\sfU}_n = [\bm{\sfu}_{1,n} , \dots , \bm{\sfu}_{K,n}] \in \mathbb{C}^{\NBS \times K}$. The $K$ users transmit under a \emph{sum} power constraint of $P_{\BS}$ Watts  equal to the total power constraint of the BS during the DL stage. We conclude this Section by introducing the proposed max-min problem formulation and the small angle approximation which will be used in Section \ref{sec:duality} for proving UL-DL duality under CEQ constraints.

\vspace{-0.3cm}
\subsection{Downlink SQINR for CEQ DACs} \label{sec:DL}
With $\bm{\sfq}_n = [\sfq_{1,n} , \dots , \sfq_{K,n}]^\T$ denoting the DL power allocation vector over the $n^\thh$ sub-carrier for all $K$ users, let $\bm{\sfq} = [\bm{\sfq}_1^\T, \dots \bm{\sfq}_{\SC}^\T]^\T$. With $\bm{\sfQ}_n \triangleq \diag( \sqrt{ \bm{\sfq}_n} ) \in \mathbb{C}^{K \times K}$, let $\bm{\sfx}_n = \bm{\sfT}_n \bm{\sfQ}_n \bm{\sfs}_n  \in \mathbb{C}^{{\NBS}} $ denote the beamformed signal of the $n^\thh$ sub-carrier at the BS just prior to cyclic prefix addition and conversion to time domain. Defining $\bm{\sfX}_\D = [\bm{\sfx}_1, \dots \bm{\sfx}_{\SC}] \in \mathbb{C}^{{\NBS} \times \SC}$, the $\infty$-resolution time domain signal matrix (which comprises one OFDM symbol without explicitly accounting for the CP) after the IFFT operation is given by ${\mathbf{X}}_\D = [{\mathbf{x}}_1, \dots {\mathbf{x}}_{\SC}] = \bm{\sfX}_\D \mathbf{F}_{\SC}^{\Her} \in \mathbb{C}^{{\NBS} \times \SC}$. This signal is sent over the wireless channel and received at the $K$ users corrupted by additive IID $\mathcal{N}( 0 , \sigma^2 )$ noise ${\mathbf{W}}_\D = [{\mathbf{w}}_1, \dots {\mathbf{w}}_{\SC}]  \in \mathbb{C}^{K \times \SC}$. The received time domain signal (after discarding the CP) ${\mathbf{Y}}_\D = [{\mathbf{y}}_1, \dots {\mathbf{y}}_{\SC}]  \in \mathbb{C}^{K \times \SC}$ is converted to the frequency domain using the FFT operation as $\bm{\sfY}_\D = [\bm{\sfy}_1,\dots \bm{\sfy}_{\SC}] = {\mathbf{Y}}_\D \mathbf{F}_{\SC}$. The noise statistics of the frequency domain noise $\bm{\sfW}_\D = [\bm{\sfw}_1,\dots \bm{\sfw}_{\SC}] = {\mathbf{W}}_\D \mathbf{F}_{\SC}$ are preserved under the unitary FFT operation. After this transformation, the signal received and sent on the $n^\thh$ sub-carrier are related by $\bm{\sfy}_n = \bm{\sfH}_n^\T \bm{\sfx}_n + \bm{\sfw}_n$. Define $\bar{\bm{\sfH}} = \bdiag(\bm{\sfH}_1,\dots \bm{\sfH}_{\SC})$, $\bar{\bm{\sfQ}} = \bdiag(\bm{\sfQ}_1,\dots \bm{\sfQ}_{\SC})$, $\bar{\bm{\sfT}} = \bdiag(\bm{\sfT}_1,\dots \bm{\sfT}_{\SC})$, $\bar{\bm{\sfy}}_\D = \vecc(\bm{\sfY}_\D)$, $\bar{\bm{\sfx}}_\D = \vecc(\bm{\sfX}_\D)$, $\bar{{\mathbf{x}}}_\D = \vecc({\mathbf{X}}_\D)$, $\bar{\bm{\sfw}}_\D = \vecc(\bm{\sfW}_\D)$ and $\bar{\bm{\sfs}} = \vecc( \bm{\sfS} )$. Using the matrix identity $\vecc(\bm{\sfA\sfB\sfC}) = \left(\bm{\sfC}^\T \otimes \bm{\sfA}\right) \vecc(\bm{\sfB})$, it can be shown that 
\begin{equation} \label{eq:downlinkSig_r1}
\begin{split}
\bar{\bm{\sfy}}_\D &= \bar{\bm{\sfH}}^\T \bar{\bm{\sfx}}_\D + \bar{\bm{\sfw}}_\D  = \bar{\bm{\sfH}}^\T \left( \mathbf{F}_{\SC} \otimes \bm{\sfI}_{{\NBS}} \right)  \underbrace{ \left( \mathbf{F}_{\SC}^{\Her} \otimes \bm{\sfI}_{{\NBS}} \right) \bar{\bm{\sfT}} \bar{\bm{\sfQ}} \bar{\bm{\sfs}} }_{\text{time domain signal } \bar{{\mathbf{x}}}_\D} + \bar{\bm{\sfw}}_\D = \bar{\bm{\sfH}}^\T \bar{\bm{\sfT}} \bar{\bm{\sfQ}} \bar{\bm{\sfs}} + \bar{\bm{\sfw}}_\D .
\end{split}
\end{equation}
The block-diagonal nature of the matrices in (\ref{eq:downlinkSig_r1}), which corresponds to the OFDM input-output system model with $\infty$-resolution DACs and has been used in prior work \cite{StuderOFDM}, emphasizes that each sub-carrier can be treated independently of the others.



Next we introduce CEQ in the system model and linearize the resulting non-linearity using \emph{Bussgang} decomposition. The vectorized time domain signal after the CEQ operation is given by $\bar{{\mathbf{z}}}_\D \triangleq \vecc( {\mathbf{Z}}_\D ) = \vecc( \mathcal{Q}_b ( {\mathbf{X}}_\D ) ) = \mathcal{Q}_b( \bar{{\mathbf{x}}}_\D )$. Using (\ref{eq:downlinkSig_r1}), the received symbols are now given by $\bar{\bm{\sfy}}_\D = \bar{\bm{\sfH}}^\T \left( \mathbf{F}_{\SC} \otimes \bm{\sfI}_{{\NBS}} \right) \mathcal{Q}_b \left( \left( \mathbf{F}^\Her_{\SC} \otimes \bm{\sfI}_{{\NBS}} \right) \bar{\bm{\sfT}} \bar{\bm{\sfQ}} \bar{\bm{\sfs}} \right) + \bar{\bm{\sfw}}_\D$. The orthogonal nature of the sub-carriers is destroyed due to the CEQ operation since the FFT and IFFT operations can not cancel each other out. The \emph{Bussgang} theorem \cite{SE} can be used to decompose the signal into a useful linear part and an uncorrelated distortion $\bar{\pmb{\eta}}_\text{d}$ with the covariance  $\mathbf{R}_{\bar{\pmb{\eta}}_\text{d}}$. Using (\ref{eq:downlinkSig_r1}), the covariance matrix of the DL signal $\bar{{\mathbf{x}}}_\D$ before CEQ quantization is given by $\mathbf{R}_{\bar{{\mathbf{x}}}_\D} = \left( \mathbf{F}^\Her_{\SC} \otimes \bm{\sfI}_{{\NBS}} \right) \bar{\bm{\sfT}} \bar{\bm{\sfQ}} \bar{\bm{\sfQ}}^\Her \bar{\bm{\sfT}}^\Her \left( \mathbf{F}^\Her_{\SC} \otimes \bm{\sfI}_{{\NBS}} \right)^\Her$. The Bussgang gain is defined as \cite{SE}
\begin{equation} \label{eq:downlinkSig_r8}
\begin{split}
\bar{\mathbf{A}}_\D &=  \zeta_b \diag \left( \mathbf{R}_{\bar{{\mathbf{x}}}_\D} \right)^{-\frac{1}{2}} \overset{(a)}{=} \bm{\sfI}_{\SC} \otimes \underbrace{\zeta_b \diag \left( \frac{1}{\SC} \sum_{n = 1}^{\SC} \bm{\sfT}_n {\bm{\sfQ}_n} \bm{\sfQ}_n^\Her \bm{\sfT}_n^\Her \right)^{-\frac{1}{2}}}_{\mathbf{A}_\D},
\end{split}
\end{equation}
where $(a)$ follows from the block-diagonal structure of the matrices involved in $\mathbf{R}_{\bar{{\mathbf{x}}}_\D}$. Here $\zeta_b$ is a constant that depends on the statistical properties of the $b$-bit CEQ and is given by $\zeta_b = \frac{2^b}{2\sqrt{pi}}\text{sin}\left( \frac{\pi}{2^b} \right)$ \cite{statistical}. Note that $\zeta_b =  \sqrt{\frac{2}{\pi}}$ for $b = 2$ and $\zeta_b =  \sqrt{\frac{\pi}{4}}$ for $b = \infty$. The signal after the CEQ DAC can be rewritten as $\bar{{\mathbf{z}}}_\D = \bar{\mathbf{A}}_\D \left( \mathbf{F}^\Her_{\SC} \otimes \bm{\sfI}_{{\NBS}} \right) \bar{\bm{\sfT}} \bar{\bm{\sfQ}} \bar{\bm{\sfs}} + \bar{\pmb{\eta}}_\text{d}$.



Power allocation done across users and sub-carriers before the CEQ DAC, captured by $\bar{\bm{\sfQ}}$, will be wiped out due to the quantization operation (due to multiplication by $\bar{\mathbf{A}}_\D$). Power allocation has to be done again on a \emph{per-antenna} basis in analog after the CEQ DAC operation. This is achieved by multiplication with the non-negative diagonal matrix $\QPA \in \mathbb{C}^{\NBS \SC \times \NBS \SC}$. It can be seen from the form of $\diag \left( \mathbf{R}_{\bar{{\mathbf{x}}}_\D} \right)$ in (\ref{eq:downlinkSig_r8}) that the resulting per-antenna power (for the $\infty-$resolution case) across the $\NBS$ antennas for each time instant $n \in \{1, \dots \SC \}$ is dependent on the beamformers and power allocation for all $\SC$ sub-carriers. Furthermore, because of the Kronecker structure, the resulting per-antenna power is the same for $n \in \{1, \dots \SC \}$. Hence we take the matrix $\QPA$ to have a Kronecker structure as well of the the form $\mathbf{I}_{\SC} \otimes \mathbf{Q}_\PA$, where $\mathbf{Q}_\PA \in \mathbb{C}^{\NBS \times \NBS}$ is the per-antenna power allocation at any time instant $n$ within the OFDM symbol. The total per-antenna power allocation at each time instant is constrained to be equal to the DL transmit power given by forcing $P_{\BS} = \text{tr}(\mathbf{Q}_\PA \mathbf{Q}_\PA^\Her)$. The linearized signal model~of the symbols received at the $K$ users after incorporating the per-antenna power allocation is given~by
\begin{equation} \label{eq:downlinkSig_r10}
\begin{split}
\bar{\bm{\sfy}}_\D = \bar{\bm{\sfH}}^\T \QPA \bar{\mathbf{A}}_\D \bar{\bm{\sfT}} \bar{\bm{\sfQ}} \bar{\bm{\sfs}} + \bar{\bm{\sfH}}^\T \left( \mathbf{F}_{\SC} \otimes \bm{\sfI}_{{\NBS}} \right) \QPA \bar{\pmb{\eta}}_\text{d} + \bar{\bm{\sfw}}_\D .
\end{split}
\end{equation}
We note here that the first $K$ columns of the block-diagonal matrix $\bar{\bm{\sfH}}$ correspond to the first sub-carrier, the next $K$ columns to the second sub-carrier and so on. Let $\bar{\bm{\sfh}}_\kn$ / $\bar{\bm{\sft}}_\kn$ denote the $((n-1)K + k)^\thh$ column of $\bar{\bm{\sfH}}$ / $\bar{\bm{\sfT}}$ which corresponds to the channel / beamformer of the $k^\thh$ user on the $n^\thh$ sub-carrier. Similarly let ${{\sfq}}_\kn$ denotes the $((n-1)K + k)^\thh$ entry of ${\bm{\sfq}}$ which corresponds to the power allocated to the $k^\thh$ user on the $n^\thh$ sub-carrier. With $\bar{\bm{\sfR}}_\kn = \bar{\bm{\sfh}}_\kn \bar{\bm{\sfh}}_\kn^\Her$, the DL SQINR for the $k^\thh$ user at the $n^\thh$ sub-carrier, $\gamma_\kn^{\text{DL}}(\bar{\bm{\sfT}} ,  \QPA , \bm{\sfq})$, is given by
\begin{equation}
\begin{split}
& \gamma_\kn^{\text{DL}}(\bar{\bm{\sfT}} ,  \QPA , \bm{\sfq}) = \\
& \frac{ \sfq_\kn \bar{\bm{\sft}}_\kn^\T \bar{\mathbf{A}}_\D \QPA \bar{\bm{\sfR}}_\kn \QPA^\Her \bar{\mathbf{A}}_\text{d}^\Her  \bar{\bm{\sft}}_\kn^*}{  \sum_{ \substack{i = 1\\ i \neq k}}^K \underbrace{ \sfq_\inn \bar{\bm{\sft}}_\inn^\T \bar{\mathbf{A}}_\D \QPA \bar{\bm{\sfR}}_\kn \QPA^\Her \bar{\mathbf{A}}_\D^\Her  \bar{\bm{\sft}}_\inn^* }_{\text{MUI}} + \underbrace{ \sigma^2 }_{\text{IID}} + \underbrace{ \tr \left( \left( \mathbf{F}_{\SC} \otimes \bm{\sfI}_{{\NBS}} \right) \QPA \mathbf{R}_{\bar{\pmb{\eta}}_\D} \QPA^\Her \left( \mathbf{F}_{\SC} \otimes \bm{\sfI}_{{\NBS}} \right)^\Her \bar{\bm{\sfR}}_\kn^\ast \right)}_{\text{QN}} } .
\label{eq:downlinkSINR_covar}
\end{split}
\end{equation}
The MUI term in the denominator of (\ref{eq:downlinkSINR_covar}) contains only the $(K-1)$ interfering terms of the $n^\thh$ sub-carrier whereas the quantization noise term has contributions from all $\SC$ sub-carriers.
The DL SQINR for the $k^\thh$ user on the $n^\thh$ sub-carrier in (\ref{eq:downlinkSINR_covar}) is thus a function of the beamformer matrix $\bar{\bm{\sfT}}$, the per-antenna power allocation matrix $\QPA$, and the power allocation vector $\bm{\sfq}$. 


\vspace{-0.3cm}
\subsection{Uplink SQINR for CEQ ADCs} \label{sec:UL}
Next we develop the system model for an OFDM based UL. Let $\bm{\sfp} = [\bm{\sfp}_1^\T, \dots \bm{\sfp}_{\SC}^\T]^\T$,  with $\bm{\sfp}_n = [\sfp_{1,n} , \dots , \sfp_{K,n}]^\T$ denoting the UL power allocation vector over the $n^\thh$ sub-carrier. With $\bm{\sfP}_n \triangleq \diag( \sqrt{\bm{\sfp}_n} ) \in \mathbb{C}^{K \times K}$, let $\bm{\sfx}_n = \bm{\sfP}_n \bm{\sfs}_n \in \mathbb{C}^{K}$ denote the signal on the $n^\thh$ sub-carrier at the $K$ users just prior to time domain conversion. Defining $\bm{\sfX}_\U = [\bm{\sfx}_1, \dots \bm{\sfx}_{\SC}] \in \mathbb{C}^{K \times \SC}$, the time domain signal matrix (for one OFDM symbol without the CP) after the IFFT operation is given by ${\mathbf{X}}_\U = [{\mathbf{x}}_1,\dots {\mathbf{x}}_{\SC}] = \bm{\sfX}_\U \mathbf{F}_{\SC}^\Her \in \mathbb{C}^{K \times \SC}$. The time domain signal received at the BS (after discarding the CP) denoted by ${\mathbf{V}}_\U = [{\mathbf{v}}_1, \dots {\mathbf{v}}_{\SC}] \in \mathbb{C}^{\NBS \times \SC}$ is perturbed by additive IID $\mathcal{N}( 0 , \sigma^2 )$ noise ${\mathbf{W}}_\U = [{\mathbf{w}}_1, \dots {\mathbf{w}}_{\SC}] \in \mathbb{C}^{\NBS \times \SC}$. The BS transforms the time domain signal into the frequency domain using the FFT operation denoted by $\bm{\sfV}_\U = [\bm{\sfv}_1,\dots \bm{\sfv}_{\SC}] = {\mathbf{V}}_\U \mathbf{F}_{\SC}$. The frequency domain noise $\bm{\sfW}_\U = {\mathbf{W}}_\U \mathbf{F}_{\SC}$ has the same statistics as ${\mathbf{W}}_\U$. At this point, the signals sent and received on the $n^\thh$ sub-carrier are related by $\bm{\sfv}_n = \bm{\sfH}_n \bm{\sfx}_n + \bm{\sfw}_n$ which follows from the well-known orthogonality of OFDM sub-carriers. The $K$ symbols on the $n^\thh$ sub-carrier are then resolved by the beamformer $\bm{\sfU}_n$ given by $\bm{\sfy}_n = \bm{\sfU}_n^\T \bm{\sfv}_n$. The received frequency domain symbols on all active sub-carriers can be collected in the matrix $\bm{\sfY}_\U = [\bm{\sfy}_1,\dots \bm{\sfy}_{\SC}]$. Define $\bar{\bm{\sfP}} = \bdiag(\bm{\sfP}_1,\dots \bm{\sfP}_{\SC})$, $\bar{\bm{\sfU}} = \bdiag(\bm{\sfU}_1,\dots \bm{\sfU}_{\SC})$, $\bar{\bm{\sfy}}_\U = \vecc(\bm{\sfY}_\U)$, $\bar{\bm{\sfv}}_\U = \vecc(\bm{\sfV}_\U)$, $\bar{{\mathbf{v}}}_\U = \vecc({\mathbf{V}}_\U)$ and $\bar{\bm{\sfw}}_\U = \vecc(\bm{\sfW}_\U)$. Using the vectorization identity from Section \ref{sec:DL}, $\bar{\bm{\sfy}}_\U$ is
\begin{equation} \label{eq:uplinkSig_r1}
\begin{split}
\bar{\bm{\sfy}}_\U &= \bar{\bm{\sfU}}^\T \bar{\bm{\sfv}}_\U = \bar{\bm{\sfU}}^\T \left( \mathbf{F}_{\SC} \otimes \bm{\sfI}_{{\NBS}} \right)  \underbrace{ \left( \mathbf{F}_{\SC}^{\Her} \otimes \bm{\sfI}_{{\NBS}} \right) \left( \bar{\bm{\sfH}}  \bar{\bm{\sfP}} \bar{\bm{\sfs}} + \bar{\bm{\sfw}}_\U \right) }_{\text{time domain signal } \bar{{\mathbf{v}}}_\U} = \bar{\bm{\sfU}}^\T \bar{\bm{\sfH}} \bar{\bm{\sfP}} \bar{\bm{\sfs}} + \bar{\bm{\sfU}}^\T \bar{\bm{\sfw}}_\U.
\end{split}
\end{equation}
Like its DL counterpart, it can be seen from the block-diagonal structure of (\ref{eq:uplinkSig_r1}) that each sub-carrier can be treated independently under the $\infty$-resolution assumption.



\sloppy Now we introduce CEQ quantization to the UL model and linearize it using the Bussgang decomposition. Using (\ref{eq:uplinkSig_r1}), the final symbols are given by $\bar{\bm{\sfy}}_\U = \bar{\bm{\sfU}}^\T \left( \mathbf{F}_{\SC} \otimes \bm{\sfI}_{{\NBS}} \right) \mathcal{Q}_b \left( \left( \mathbf{F}_{\SC}^\Her \otimes \bm{\sfI}_{{\NBS}} \right) \left( \bar{\bm{\sfH}}  \bar{\bm{\sfP}} \bar{\bm{\sfs}} + \bar{\bm{\sfw}}_\U \right) \right)$. The covariance matrix of the UL signal $\bar{{\mathbf{v}}}_\U$ before CEQ quantization is given by $\mathbf{R}_{\bar{{\mathbf{v}}}_\U} = \left( \mathbf{F}_{\SC}^\Her \otimes \bm{\sfI}_{{\NBS}} \right) \bar{\bm{\sfH}}  \bar{\bm{\sfP}}  \bar{\bm{\sfP}}^\Her \bar{\bm{\sfH}}^\Her \left( \mathbf{F}_{\SC}^\Her \otimes \bm{\sfI}_{{\NBS}} \right)^\Her + \sigma^2 \bm{\sfI}_{\SC \NBS}$. The Bussgang gain is defined as
\begin{equation} \label{eq:uplinkSig_r6}
\begin{split}
\bar{\mathbf{A}}_\U &= \zeta_b \diag \left( \mathbf{R}_{\bar{{\mathbf{v}}}_\U} \right)^{-\frac{1}{2}} \overset{(a)}{=} \bm{\sfI}_{\SC} \otimes \underbrace{\zeta_b \diag \left( \frac{1}{\SC} \sum_{n = 1}^{\SC} \left( \bm{\sfH}_n {\bm{\sfP}_n} \bm{\sfP}_n^\Her \bm{\sfH}_n^\Her + \sigma^2 \bm{\sfI}_{\NBS} \right) \right)^{-\frac{1}{2}}}_{\mathbf{A}_\U},
\end{split}
\end{equation}
where $(a)$ follows from the block-diagonal structure of the matrices involved in $\mathbf{R}_{\bar{{\mathbf{v}}}_\U}$.  Like its DL counterpart, the UL signal can be decomposed into a linear signal part and an uncorrelated distortion $\bar{\pmb{\eta}}_\U$ with covariance $ \mathbf{R}_{\bar{\pmb{\eta}}_\U}$ using the Bussgang decomposition. The linearized signal model of the symbols received at the BS during UL is given by
\begin{equation} \label{eq:uplinkSig_r7}
\begin{split}
\bar{\bm{\sfy}}_\U &= \bar{\bm{\sfU}}^\T \bar{\mathbf{A}}_\U \bar{\bm{\sfH}}  \bar{\bm{\sfP}} \bar{\bm{\sfs}} + \bar{\bm{\sfU}}^\T \bar{\mathbf{A}}_\U \bar{\bm{\sfw}}_\U + \bar{\bm{\sfU}}^\T \left( \mathbf{F}_{\SC} \otimes \bm{\sfI}_{{\NBS}} \right) \bar{\pmb{\eta}}_\U .
\end{split}
\end{equation}
Let $\bar{\bm{\sfu}}_\kn$ denotes the $((n-1)K + k)^\thh$ column of $\bar{\bm{\sfU}}$ and $\sfp_\kn$ denote the $((n-1)K + k)^\thh$ entry of $\bm{\sfp}$. The UL SQINR for the $k^\thh$ user at the $n^\thh$ sub-carrier, $\gamma_\kn^{\text{UL}}(\bar{\bm{\sfu}}_\kn , \bm{\sfp})$, is given by
\begin{equation}
\gamma_\kn^{\text{UL}}(\bar{\bm{\sfu}}_\kn , \bm{\sfp}) = \frac{ \sfp_\kn \bar{\bm{\sfu}}_\kn^\T \bar{\mathbf{A}}_\U  \bar{\bm{\sfR}}_\kn \bar{\mathbf{A}}_\U^\Her  \bar{\bm{\sfu}}_\kn^*}{ \bar{\bm{\sfu}}_\kn^\T \bigg(  \sum_{ \substack{i = 1\\ i \neq k }}^K \underbrace{ \sfp_\inn \bar{\mathbf{A}}_\U \bar{\bm{\sfR}}_\inn \bar{\mathbf{A}}_\U^\Her }_{\text{MUI}} + \underbrace{ \sigma^2 \bar{\mathbf{A}}_\U \bar{\mathbf{A}}_\U^\Her }_{\text{IID}} + \underbrace{ \left( \mathbf{F}_{\SC} \otimes \bm{\sfI}_{{\NBS}} \right) \mathbf{R}_{ \bar{\pmb{\eta}}_\U} \left( \mathbf{F}_{\SC} \otimes \bm{\sfI}_{{\NBS}} \right)^\Her }_{\text{QN}} \bigg) \bar{\bm{\sfu}}_\kn^*} .
\label{eq:uplinkSINR_covar}
\end{equation}
We note here that the UL SQINR in (\ref{eq:uplinkSINR_covar}) for the $k^\thh$-user depends only on the power allocation vector $\bm{\sfp}$ and the combiner for the $k^\thh$-user $\bar{\bm{\sfu}}_\kn$. This is in contrast to the DL SQINR in (\ref{eq:downlinkSINR_covar}) which depends on the power allocation vector $\bm{\sfq}$ and the beamformer matrix $\bar{\bm{\sfT}}$ of all $K$ users. We will make use of this observation in Section \ref{sec:joint} to replace the MU-MIMO-OFDM DL problem by its equivalent UL counterpart by making use of the UL-DL duality proved in Section~\ref{sec:duality}.

\vspace{-0.5cm}
\subsection{Problem formulation}\label{sec:problem}
In this paper, we maximize the minimum of the achieved DL SQINR to target SQINR ratio of all users and sub-carriers over the choice of the BF matrix $\bar{\bm{\sfT}}$ and power allocation vector $\bm{\sfq}$. This optimization criterion has not been considered before for MU-MIMO-OFDM DL precoding under CEQ hardware constraints. This formulation bridges the performance gap between linear and non-linear methods and provides more flexibility in terms of the ability to allocate individual user targets. With $\gamma_\kn$ denoting the target SQINR for the $k^\thh$ user on the $n^\thh$ sub-carrier, the MU-MIMO-OFDM DL precoding problem with individual SQINR constraints is given by
\begin{equation} \label{eq:DL_SINR}
\begin{aligned}
\CDL_\text{opt}(P_{\BS}) &= \max_{\bar{\bm{\sfT}},\bm{\sfq}} \min_{\substack{ {1\leq k \leq K} \\ {1 \leq n \leq \SC}}}  \frac{\gamma_\kn^{\text{DL}}(\bar{\bm{\sfT}} , \QPA , \bm{\sfq})}{\gamma_\kn} \\
& \textrm{s.t. }  \|\bm{\sfq}\|_1 \leq P_{\BS} \SC\\
& ||\bm{\sft}_\kn||_2 = 1, \quad 1\leq k \leq K, 1 \leq n \leq \SC. \\
\end{aligned}
\end{equation}
Dropping the maximization over $\bar{\bm{\sfT}}$ in (\ref{eq:DL_SINR}) results in the power allocation problem where the minimum achieved to target SQINR ratio has to be maximized only over all admissible power allocation vectors for a fixed DL-BF matrix $\bar{\bm{\sfT}}^\star$ 
\begin{equation} \label{eq:DL_SINR2}
\begin{aligned}
\CDL_\text{opt}(P_{\BS}, \bar{\bm{\sfT}}^\star) &= \max_{\bm{\sfq}}\min_{\substack{ {1\leq k \leq K} \\ {1 \leq n \leq \SC}}} \frac{\gamma_\kn^{\text{DL}}(\bar{\bm{\sfT}}^\star , \QPA , \bm{\sfq})}{\gamma_\kn} \\
& \textrm{s.t. } \|\bm{\sfq}\|_1 \leq P_{\BS} \SC.\\
\end{aligned}
\end{equation}
The corresponding MU-MIMO-OFDM UL problems are given in the same manner as (\ref{eq:DL_SINR}) and (\ref{eq:DL_SINR2})  with $\gamma_\kn^{\text{DL}}$ replaced by $\gamma_\kn^{\text{UL}}$. The solution to these problems for $\infty$-resolution converters are obtained by exploiting the UL-DL duality principle to cast the MU-DL-BF problem in terms of the easier-to-solve MU-UL-BF problem \cite{MUDL}. The UL-DL duality principle does not hold for the system with CEQ constraints due to the quantization noise. The quantization noise matrices $ \mathbf{R}_{\bar{\pmb{\eta}}_\D}$ in (\ref{eq:downlinkSINR_covar}) and $ \mathbf{R}_{\bar{\pmb{\eta}}_\U}$ in (\ref{eq:uplinkSINR_covar}) depend on the DL beamforming matrix and channel realization respectively and prevent a straightforward extension of the UL-DL duality principle. We show in Section \ref{sec:duality} that the UL-DL duality principle can be extended to CEQ constraints and an OFDM signal model under certain conditions and then use that result to find the solution to (\ref{eq:DL_SINR}). Towards that end, we next introduce an approximation that will later be used for proving the duality principle.

\vspace{-0.6cm}
\subsection{Small angle approximation} \label{sec:uqn}
With $\bar{{\mathbf{x}}}_\D$ and $\bar{{\mathbf{z}}}_\D$ denoting the DL time-domain signal before and after CEQ, the covariance matrix of the uncorrelated distortion $\bar{\pmb{\eta}}_\text{d} = \bar{{\mathbf{z}}}_\D - \bar{\mathbf{A}}_\D  \bar{{\mathbf{x}}}_\D$ resulting from the Bussgang~decomposition~is 
\begin{equation} \label{eq:upn2}
\mathbf{R}_{\bar{\pmb{\eta}}_\D} =  \mathbf{R}_{ \bar{{\mathbf{z}}}_\D }  - \bar{\mathbf{A}}_\D  \mathbf{R}_{ \bar{{\mathbf{x}}}_\D } \bar{\mathbf{A}}_\D^\Her.
\end{equation}
Using the definition of the Bussgang gain $\zeta_b$ from Section \ref{sec:DL}, define $\hat{\mathbf{X}}_{\D} =  \R \left( \frac{1}{\zeta_b^2} \bar{\mathbf{A}}_\D  \mathbf{R}_{ \bar{{\mathbf{x}}}_\D } \bar{\mathbf{A}}_\D^\Her \right)$ and $\hat{\mathbf{Y}}_{\D} =  \I \left( \frac{1}{\zeta_b^2} \bar{\mathbf{A}}_\D  \mathbf{R}_{ \bar{{\mathbf{x}}}_\D } \bar{\mathbf{A}}_\D^\Her \right)$. The diagonal entries of $\hat{\mathbf{X}}_{\D}$ and $\hat{\mathbf{Y}}_{\D}$ are equal to 1 and 0. It is known \cite{statistical} that the correlation matrix of the $b$-bit CEQ signal $\bar{{\mathbf{z}}}_\D$  is given by

\begin{equation} \label{eq:upn3}
\mathbf{R}_{ \bar{{\mathbf{z}}}_\D } = \frac{2^b}{\pi} \text{sin}^2\left( \frac{\pi}{2^b} \right) \sum_{\Delta b = 0}^{2^{b-1}-1} e^{\ji(2\Delta b \pi / 2^b)} \text{sin}^{-1}\left( \R \left( \left( \hat{\mathbf{X}}_{\D}  + \ji \hat{\mathbf{Y}}_{\D} \right) e^{-\ji(2\Delta b \pi / 2^b)}  \right) \right),
\end{equation}
where the $\text{sin}^{-1}$ operation is applied element-wise on its matrix argument. For $b = \infty$, 
\begin{equation} \label{eq:upn4}
\mathbf{R}_{ \bar{{\mathbf{z}}}_\D }  = \frac{1}{2} \int_0^\pi e^{\ji \phi} \text{sin}^{-1}\left( \R \left( \left( \hat{\mathbf{X}}_{\D}  + \ji \hat{\mathbf{Y}}_{\D} \right) e^{-\ji \phi}  \right) \right) \D \phi.
\end{equation}
It can be verified from (\ref{eq:upn3}) and (\ref{eq:upn4}) that $\diag\left( \mathbf{R}_{ \bar{{\mathbf{z}}}_\D } \right) = \mathbf{I}_{\SC \NBS}$ for all values of $b$. This also makes intuitive sense since the diagonal corresponds to the variance of $b$-bit CEQ entries with unit norm. We now approximate the non-linear $\text{sin}^{-1}(\cdot)$ function using the first-order Taylor expansion $\text{sin}^{-1}(x) = x + o(x^3)$ for $\mathbf{R}_{ \bar{{\mathbf{z}}}_\D }^\ndiag$. The approximation is justified because the off-diagonal entries of $\mathbf{R}_{ \bar{{\mathbf{x}}}_\D }$ will be forced to be small by adding optimized dithering to the signal before quantization as will be explained in Section \ref{sec:dummy}. Under this approximation, $\mathbf{R}_{ \bar{{\mathbf{z}}}_\D }^\ndiag$ for $b \neq \infty$ is
\begin{equation} \label{eq:upn5}
\mathbf{R}_{ \bar{{\mathbf{z}}}_\D }^\ndiag = \underbrace{ \frac{2^b}{\pi} \text{sin}^2\left( \frac{\pi}{2^b} \right) \sum_{\Delta b = 0}^{2^{b-1}-1} \text{cos}^2\left(2\Delta b \pi / 2^b\right) }_{ \bar{\zeta_b} } \hat{\mathbf{X}}_{\D}^\ndiag +   \ji  \underbrace{  \frac{2^b}{\pi} \text{sin}^2\left( \frac{\pi}{2^b} \right) \sum_{\Delta b = 0}^{2^{b-1}-1} \text{sin}^2\left(2\Delta b \pi / 2^b\right) }_{ \bar{\zeta_b} } \hat{\mathbf{Y}}_{\D}^\ndiag.
\end{equation}
For $b = \infty$, $\bar{\zeta}_b = \frac{\pi}{4}$ using (\ref{eq:upn4}). The off-diagonal entries of  $\mathbf{R}_{\bar{\pmb{\eta}}_\D}$ are then given by
\begin{equation} \label{eq:upn5}
\begin{aligned}
\mathbf{R}_{\bar{\pmb{\eta}}_\D}^\ndiag = \bar{\zeta_b} \left( \hat{\mathbf{X}}_{\D}^\ndiag + \ji \hat{\mathbf{Y}}_{\D}^\ndiag \right) - \zeta_b^2 \left( \hat{\mathbf{X}}_{\D}^\ndiag + \ji \hat{\mathbf{Y}}_{\D}^\ndiag \right). \\
\end{aligned}
\end{equation}
It can be verified from the definition of $\bar{\zeta_b}$ and $\zeta_b^2$ (defined in Section \ref{sec:DL}) that they are equal for all values of $b$. Hence, all off-diagonal entries of $\mathbf{R}_{\bar{\pmb{\eta}}_\D}$, i.e. $\mathbf{R}_{\bar{\pmb{\eta}}_\D}^\ndiag$ in (\ref{eq:upn5}), are zero. We are left with the diagonal part of the matrix and hence $\mathbf{R}_{\bar{\pmb{\eta}}_\D} = \left( 1 - \zeta_b^2 \right)\mathbf{I}$. The $\text{sin}^{-1}(x) \approx x$ approximation thus makes the quantization noise $\bar{\pmb{\eta}}_\D$ uncorrelated. The same approximation can also be applied to the UL quantization noise $\bar{\pmb{\eta}}_\U$. It will be shown in Section \ref{sec:duality} that the resulting uncorrelated quantization noise is crucial for proving UL DL duality under CEQ ADC and DAC constraints. 



\section{UL-DL duality with hardware constraints} \label{sec:duality}

In this section, we generalize the UL-DL duality principle to the MU-MIMO-OFDM system with CEQ DACs/ADCs. We show that the same SQINR constraints can be achieved in both DL and UL by appropriately relating the linear beamforming and combining matrices and separately optimizing the DL/UL power allocation vectors under the same sum power constraint. This result is summarized in Theorem \ref{theorem:theorem1}.

\begin{theorem} \label{theorem:theorem1}
Consider a BS equipped with CEQ DACs communicating with $K$ users over $\SC$ sub-carriers with target SQINR values $[\gamma_{1,1} \dots \gamma_\kn \dots \gamma_{K,\SC} ]$ using the beamforming matrix $\bar{\bm{\sfT}}$, DL power allocation vector $\bm{\sfq}$ and per-antenna power allocation matrix $\QPA = \bm{\sfI}_{\SC} \otimes { \diag \left( \frac{1}{\SC} \sum_{n = 1}^{\SC} \bm{\sfT}_n {\bm{\sfQ}_n} \bm{\sfQ}_n^\Her \bm{\sfT}_n^\Her \right)^{\frac{1}{2}}}$. The same set of SQINR values can be achieved in the UL under CEQ ADC constraints by letting $\bar{\bm{\sft}}_\kn = \bar{\mathbf{A}}_\text{u} \bar{\bm{\sfu}}_\kn / \| \bar{\mathbf{A}}_\text{u} \bar{\bm{\sfu}}_\kn \|_2$ and $\bm{\sfp} =  \frac{ \sigma^2}{\zeta_b^{2}} \left(  \bm{\sfI}_{K\SC} -  \bar{\bm{\sfD}}(\bar{\bm{\sfT}})  \bar{\bm{\Psi}}^\T(\bar{\bm{\sfT}}) - \bar{\bm{\sfD}}(\bar{\bm{\sfT}})  \bar{\bm{\Phi}}^\T(\bar{\bm{\sfT}}) \right)^{-1} \bar{\bm{\sfD}}(\bar{\bm{\sfT}}) \bm{1}_{K\SC}$ for the same sum power constraint in the UL stage as the total BS transmit power in the DL stage.
\end{theorem}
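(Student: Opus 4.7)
The plan is to reduce both the DL and UL SQINR balancing conditions to affine fixed-point systems in the respective power vectors and to expose a transpose symmetry between the two coefficient matrices that emerges from the uncorrelated-quantization-noise approximation together with the specific form of $\QPA$ required by the hypothesis.

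First I would invoke the small-angle approximation from Section~\ref{sec:uqn} to replace $\mathbf{R}_{\bar{\pmb{\eta}}_\D}$ and $\mathbf{R}_{\bar{\pmb{\eta}}_\U}$ by the scaled identity $(1-\zeta_b^2)\mathbf{I}$. Combined with the Kronecker form $\bar{\mathbf{A}}_\D = \bm{\sfI}_{\SC}\otimes \mathbf{A}_\D$ from (\ref{eq:downlinkSig_r8}) and the hypothesized $\QPA = \bm{\sfI}_{\SC}\otimes \mathbf{Q}_\PA$ with $\mathbf{Q}_\PA = \diag\bigl(\tfrac{1}{\SC}\sum_{n}\bm{\sfT}_n\bm{\sfQ}_n\bm{\sfQ}_n^\Her\bm{\sfT}_n^\Her\bigr)^{1/2}$, the key collapse $\bar{\mathbf{A}}_\D\QPA = \zeta_b \bm{\sfI}_{\NBS\SC}$ follows directly from the definitions. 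This clears the Bussgang gain and per-antenna power-allocation factors out of the numerator and the MUI term of (\ref{eq:downlinkSINR_covar}), and together with the identity $\left(\mathbf{F}_{\SC}\otimes\bm{\sfI}_{\NBS}\right)\left(\mathbf{F}_{\SC}\otimes\bm{\sfI}_{\NBS}\right)^\Her = \bm{\sfI}$ and the block-diagonal structure of $\bar{\bm{\sfR}}_\kn$, it reduces the DL quantization-noise term to $(1-\zeta_b^2)\bm{\sfh}_\kn^\T\mathbf{Q}_\PA^2\bm{\sfh}_\kn^\ast$. The same approximation reduces the UL quantization-noise term in (\ref{eq:uplinkSINR_covar}) to $(1-\zeta_b^2)\bar{\bm{\sfu}}_\kn^\T\bar{\bm{\sfu}}_\kn^\ast$.

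Next I would substitute the proposed duality relation $\bar{\bm{\sft}}_\kn = \bar{\mathbf{A}}_\U\bar{\bm{\sfu}}_\kn/\|\bar{\mathbf{A}}_\U\bar{\bm{\sfu}}_\kn\|_2$ into the simplified DL SQINR, impose $\gamma_\kn^{\text{DL}} = \gamma_\kn$ for every $(k,n)$, and stack the $K\SC$ resulting equalities. This produces a fixed-point equation $\bm{\sfq} = \bar{\bm{\sfD}}(\bar{\bm{\sfT}})\bigl(\bar{\bm{\Psi}}(\bar{\bm{\sfT}}) + \bar{\bm{\Phi}}(\bar{\bm{\sfT}})\bigr)\bm{\sfq} + \tfrac{\sigma^2}{\zeta_b^2}\bar{\bm{\sfD}}(\bar{\bm{\sfT}})\bm{1}_{K\SC}$, where $\bar{\bm{\sfD}}$ is the diagonal matrix of target-to-signal-power ratios, $\bar{\bm{\Psi}}$ encodes the MUI coupling across users sharing a sub-carrier, and $\bar{\bm{\Phi}}$ encodes the CEQ quantization-noise coupling across all users and sub-carriers. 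Repeating the procedure on the UL side yields an analogous system in $\bm{\sfp}$; however, because the UL SQINR in (\ref{eq:uplinkSINR_covar}) is quadratic in the single combiner $\bar{\bm{\sfu}}_\kn$ rather than summed over all $K$ beamformers, the coupling matrices enter as $\bar{\bm{\Psi}}^\T$ and $\bar{\bm{\Phi}}^\T$. Solving that system for $\bm{\sfp}$ yields precisely the closed-form expression in the theorem statement, so the same per-user per-subcarrier SQINR values are met in UL.

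Finally I would verify that the two transmit-power budgets agree. Left-multiplying both fixed-point systems by $\bm{1}^\T$ and invoking the scalar identity $\bm{1}^\T\bm{A}\bm{1} = \bm{1}^\T\bm{A}^\T\bm{1}$ valid for any square $\bm{A}$ collapses the two sum-power totals to a common scalar, so $\|\bm{\sfq}\|_1 = \|\bm{\sfp}\|_1$. The principal obstacle is the quantization-noise coupling matrix $\bar{\bm{\Phi}}$: in the DL it implicitly averages the beamformers and power loadings of \emph{all} sub-carriers through $\mathbf{Q}_\PA$, while in the UL it involves only the local combiner $\bar{\bm{\sfu}}_\kn$, so showing that these two structurally different objects really are transposes of each other is where the Kronecker collapse $\bar{\mathbf{A}}_\D\QPA = \zeta_b\bm{\sfI}$ and the explicit summation-over-sub-carriers inside $\mathbf{Q}_\PA$ do the critical work; tracking the index gymnastics cleanly through this step is where the real effort lies.
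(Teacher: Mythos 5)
Your proposal follows essentially the same route as the paper's proof: the small-angle approximation yielding $\mathbf{R}_{\bar{\pmb{\eta}}} = (1-\zeta_b^2)\mathbf{I}$, the Kronecker collapse $\bar{\mathbf{A}}_\D\QPA = \zeta_b\bm{\sfI}$ from the prescribed per-antenna power matrix, stacking the $K\SC$ SQINR equalities into affine fixed-point systems governed by $\bar{\bm{\sfD}}$, $\bar{\bm{\Psi}}$, $\bar{\bm{\Phi}}$ on the DL side and their transposes on the UL side, and the push-through/transpose-invariance of $\bm{1}^\T(\cdot)\bm{1}$ to equate the sum powers. The only cosmetic quibble is that the relation $\bar{\bm{\sft}}_\kn = \bar{\mathbf{A}}_\U\bar{\bm{\sfu}}_\kn/\|\bar{\mathbf{A}}_\U\bar{\bm{\sfu}}_\kn\|_2$ is substituted into the \emph{uplink} SQINR (via inserting $\bar{\mathbf{A}}_\U\bar{\mathbf{A}}_\U^{-2}\bar{\mathbf{A}}_\U$) rather than the downlink one, but your subsequent reasoning makes clear you apply it correctly.
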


\begin{proof}
The proof closely follows the proof of \cite[Theorem 3.1]{pw2} for frequency flat channels under 1-bit hardware constraints. We present a brief version of that proof in Sections \ref{sec:dualityDL} and \ref{sec:dualityUL} for completeness after making the changes required for CEQ DACs and the OFDM signal model. Our goal is to show that the same SQINR values can be achieved in the DL and UL for the same total power by simplifying the linearized DL and UL SQINRs from (\ref{eq:downlinkSINR_covar}) and (\ref{eq:uplinkSINR_covar}) using the small angle approximation.
\end{proof}

\vspace{-0.5cm}
\subsection{Downlink SQINR} \label{sec:dualityDL}
The choice of $\QPA = \bm{\sfI}_{\SC} \otimes { \diag \left( \frac{1}{\SC} \sum_{n = 1}^{\SC} \bm{\sfT}_n {\bm{\sfQ}_n} \bm{\sfQ}_n^\Her \bm{\sfT}_n^\Her \right)^{\frac{1}{2}}}$ makes the per-antenna power allocation after the quantization operation equal to the $\infty-$resolution DAC setting. We equate the target DL SQINR for user $k$ at the $n^\thh$ sub-carrier, $\gamma_\kn$, to the achieved DL SQINR $\gamma_\kn^{\text{DL}}(\bar{\bm{\sfT}} ,  \QPA , \bm{\sfq})$ (under the small angle approximation) from (\ref{eq:downlinkSINR_covar}) and choose the DL power allocation vector $\bm{\sfq}$ that achieves these target SQINRs as
\begin{equation}
\gamma_\kn = \frac{ \sfq_\kn \bar{\bm{\sft}}_\kn^\T \bar{\bm{\sfR}}_\kn \bar{\bm{\sft}}_\kn^*}{  \sum_{ \substack{i = 1\\ i \neq k}}^K \sfq_\inn \bar{\bm{\sft}}_\inn^\T  \bar{\bm{\sfR}}_\kn \bar{\bm{\sft}}_\inn^* + \frac{1}{\zeta_b^2} \sigma^2 + \left( \frac{1}{\zeta_b^2} - 1 \right) \tr \left( \left( \bm{\sfI}_{\SC} \otimes \diag \left( \frac{1}{\SC} \sum_{n = 1}^{\SC} \bm{\sfT}_n {\bm{\sfQ}_n} \bm{\sfQ}_n^\Her \bm{\sfT}_n^\Her \right) \right) \bar{\bm{\sfR}}_\kn^\ast \right)}.
\label{eq:downlinkSINR_covar1}
\vspace{-0.2cm}
\end{equation}
We note here that all the variables in (\ref{eq:downlinkSINR_covar1}) with a $(\bar{\cdot})$ have a block diagonal structure, with only the $n^\thh$-block making a non-zero contribution to any multiplications/additions involving these variables. For example, only the $n^\thh$ block on the diagonal of $\bar{\bm{\sfR}}_\kn$ (${\bm{\sfR}}_\kn = {\bm{\sfh}}_\kn {\bm{\sfh}}_\kn^\Her \in \mathbb{C}^{\NBS \times \NBS}$) makes a non-zero contribution to any of the terms involving it. Using this observation and the matrix identities $\tr \left( \bm{\sfA} \diag(\bm{\sfB}) \right) = \tr \left( \bm{\sfB} \diag(\bm{\sfA}) \right)$ and $\tr \left( \bm{\sfA\sfB\sfC} \right) = \tr \left( \bm{\sfB\sfC\sfA} \right) = \tr \left( \bm{\sfC\sfA\sfB} \right)$, the $K \SC$ equalities in (\ref{eq:downlinkSINR_covar1}) are simplified to
\begin{equation}
\gamma_\kn = \frac{ \sfq_\kn {\bm{\sft}}_\kn^\T {\bm{\sfR}}_\kn {\bm{\sft}}_\kn^*}{  \sum_{ \substack{i = 1\\ i \neq k}}^K \sfq_\inn {\bm{\sft}}_\inn^\T  {\bm{\sfR}}_\kn {\bm{\sft}}_\inn^* + \frac{1}{\zeta_b^2} \sigma^2 + \left( \frac{1}{\zeta_b^2} - 1 \right) \tr \left(  \frac{1}{\SC}  \sum_{i = 1}^{K} \sum_{j = 1}^{\SC}  \sfq_\ijj \bm{\sft}_\ijj^\T \diag \left( {\bm{\sfR}}_\kn^\ast \right)  \bm{\sft}_\ijj^* \right)}.
\label{eq:downlinkSINR_covar2}
\vspace{-0.3cm}
\end{equation}
The DL SQINR formulation $\gamma_\kn^{\text{DL}}(\bar{\bm{\sfT}} ,  \QPA , \bm{\sfq})$ in Section \ref{sec:DL} given by (\ref{eq:downlinkSINR_covar}) is equivalently given by the simplified expression $\gamma_\kn^{\text{DL}}(\bar{\bm{\sfT}} , \bm{\sfq})$ on the right hand side (RHS) of (\ref{eq:downlinkSINR_covar2}) under the small angle approximation and $\QPA = \bm{\sfI}_{\SC} \otimes { \diag \left( \frac{1}{\SC} \sum_{n = 1}^{\SC} \bm{\sfT}_n {\bm{\sfQ}_n} \bm{\sfQ}_n^\Her \bm{\sfT}_n^\Her \right)^{\frac{1}{2}}}$. Next, we define the $K \SC \times K \SC$ diagonal SQINR~matrix $\bar{\bm{\sfD}}(\bar{\bm{\sfT}}) = \bdiag\left(\bm{\sfD}_1(\bm{\sfT}_1),\dots \bm{\sfD}_{\SC}(\bm{\sfT}_{\SC})\right) $, where $\bm{\sfD}_n(\bm{\sfT}_n) = \diag \left( \frac{\gamma_{1,n}}{( \bm{\sft}_{1,n}^\T {\bm{\sfR}}_{1,n} {\bm{\sft}}_{1,n}^*)}, \dots \frac{\gamma_{K,n}}{( \bm{\sft}_{K,n}^\T {\bm{\sfR}}_{K,n} {\bm{\sft}}_{K,n}^*)} \right)$.
We also define the $K \times K$ MUI coupling matrix for the $n^\thh$ sub-carrier, $\bm{\Psi}_n(\bm{\sfT}_n)$, as
\begin{equation}
  \bm{\Psi}_n(\bm{\sfT}_n) = \begin{bmatrix}
     0 & {\bm{\sft}}_{2,n}^\T {\bm{\sfR}}_{1,n}  {\bm{\sft}}_{2,n}^*  & \dots & {\bm{\sft}}_{K,n}^\T {\bm{\sfR}}_{1,n}  {\bm{\sft}}_{K,n}^* \\ 
     
    {\bm{\sft}}_{1,n}^\T {\bm{\sfR}}_{2,n} {\bm{\sft}}_{1,n}^* & 0  & \dots & {\bm{\sft}}_{K,n}^\T {\bm{\sfR}}_{2,n} {\bm{\sft}}_{K,n}^*\\ 
    
    \vdots & \ddots & \ddots &\vdots\\ 
    
     {\bm{\sft}}_{1,n}^\T {\bm{\sfR}}_{K,n} {\bm{\sft}}_{1,n}^* & {\bm{\sft}}_{2,n}^\T {\bm{\sfR}}_{K,n} {\bm{\sft}}_{2,n}^*  & \dots & 0
      \end{bmatrix}.
  \label{eq:PsiMatrix2}
\end{equation}
The $K \SC \times K \SC$ MUI coupling matrix for all sub-carriers is defined as $\bar{\bm{\Psi}}(\bar{\bm{\sfT}}) = \bdiag\left(\bm{\Psi}_1(\bm{\sfT}_1),\dots \bm{\Psi}_{\SC}(\bm{\sfT}_{\SC})\right)$.
With $ \tilde{\bm{\sfR}}_\kn = \frac{1}{\SC}\left( \frac{1}{\zeta_b^2} - 1 \right) \text{diag} \left(\bm{\sfR}_\kn \right)$, we also define the $K \SC \times K \SC$ quantization coupling matrix $\bar{\bm{\Phi}}(\bar{\bm{\sfT}})$ as
\begin{equation}
 \bar{\bm{\Phi}}(\bar{\bm{\sfT}}) = \begin{bmatrix}
     {\bm{\sft}}_{1,1}^\T \tilde{\bm{\sfR}}_{1,1}  {\bm{\sft}}_{1,1}^*  & \dots & {\bm{\sft}}_{K,1}^\T \tilde{\bm{\sfR}}_{1,1}  {\bm{\sft}}_{K,1}^* &   {\bm{\sft}}_{1,2}^\T \tilde{\bm{\sfR}}_{1,1}  {\bm{\sft}}_{1,2}^* & \dots &  {\bm{\sft}}_{K,\SC}^\T \tilde{\bm{\sfR}}_{1,1}  {\bm{\sft}}_{K,\SC}^* \\ 
     
    {\bm{\sft}}_{1,1}^\T \tilde{\bm{\sfR}}_{2,1} {\bm{\sft}}_{1,1}^*  & \dots & {\bm{\sft}}_{K,1}^\T \tilde{\bm{\sfR}}_{2,1} {\bm{\sft}}_{K,1}^* & {\bm{\sft}}_{1,2}^\T \tilde{\bm{\sfR}}_{2,1}  {\bm{\sft}}_{1,2}^* & \dots &  {\bm{\sft}}_{K,\SC}^\T \tilde{\bm{\sfR}}_{2,1}  {\bm{\sft}}_{K,\SC}^*\\ 
    
    \vdots & \vdots & \vdots &\vdots &\vdots &\vdots\\ 
    
      {\bm{\sft}}_{1,1}^\T \tilde{\bm{\sfR}}_{K,\SC} {\bm{\sft}}_{1,1}^*  & \dots & {\bm{\sft}}_{K,1}^\T \tilde{\bm{\sfR}}_{K,\SC} {\bm{\sft}}_{K,1}^* & {\bm{\sft}}_{1,2}^\T \tilde{\bm{\sfR}}_{K,\SC}  {\bm{\sft}}_{1,2}^* & \dots &  {\bm{\sft}}_{K,\SC}^\T \tilde{\bm{\sfR}}_{K,\SC}  {\bm{\sft}}_{K,\SC}^*\\ 
      \end{bmatrix}.
  \label{eq:PhiMatrix}
\end{equation}
Using $\bar{\bm{\sfD}}(\bar{\bm{\sfT}})$, $\bar{\bm{\Psi}}(\bar{\bm{\sfT}})$ and $\bar{\bm{\Phi}}(\bar{\bm{\sfT}})$, the $K\SC$ equations in (\ref{eq:downlinkSINR_covar2}) can be written as
\begin{equation} \label{eq:downlinkSINR_covar3}
\bm{\sfq} = \bar{\bm{\sfD}}(\bar{\bm{\sfT}})  \bar{\bm{\Psi}}(\bar{\bm{\sfT}}) \bm{\sfq} + \bar{\bm{\sfD}}(\bar{\bm{\sfT}})  \bar{\bm{\Phi}}(\bar{\bm{\sfT}})\bm{\sfq}  + \frac{\sigma^2}{\zeta_b^2} \bar{\bm{\sfD}}(\bar{\bm{\sfT}}) \bm{1}_{K\SC}. \end{equation}
Using (\ref{eq:downlinkSINR_covar3}), the DL power allocation vector $\bm{\sfq}$ can be written as
\begin{equation} \label{eq:downlinkSINR_covar4}
\bm{\sfq} =  \frac{ \sigma^2}{\zeta_b^2} \left(  \bm{\sfI}_{K\SC} -  \bar{\bm{\sfD}}(\bar{\bm{\sfT}})  \bar{\bm{\Psi}}(\bar{\bm{\sfT}}) - \bar{\bm{\sfD}}(\bar{\bm{\sfT}})  \bar{\bm{\Phi}}(\bar{\bm{\sfT}}) \right)^{-1} \bar{\bm{\sfD}}(\bar{\bm{\sfT}}) \bm{1}_{K\SC}. \end{equation}
Lemma \ref{lemma:lemma1} shows that the matrix inverse in (\ref{eq:downlinkSINR_covar4}) exists for any \emph{feasible} target SQINR set $\{ \gamma_\kn\}$.

\begin{lemma} \label{lemma:lemma1}
For any \emph{feasible} target DL SQINR set $\{ \gamma_\kn\}$, the matrix $\left(  \bm{\sfI}_{K\SC} -  \bar{\bm{\sfD}}(\bar{\bm{\sfT}})  \bar{\bm{\Psi}}(\bar{\bm{\sfT}}) - \bar{\bm{\sfD}}(\bar{\bm{\sfT}})  \bar{\bm{\Phi}}(\bar{\bm{\sfT}}) \right)$ is invertible.
\end{lemma}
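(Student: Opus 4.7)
The plan is to reduce the invertibility question to a spectral-radius condition on a nonnegative matrix and then use the assumed feasibility together with Perron–Frobenius theory. Throughout, set $\bm{M}(\bar{\bm{\sfT}}) := \bar{\bm{\sfD}}(\bar{\bm{\sfT}})\,\bar{\bm{\Psi}}(\bar{\bm{\sfT}}) + \bar{\bm{\sfD}}(\bar{\bm{\sfT}})\,\bar{\bm{\Phi}}(\bar{\bm{\sfT}})$, so that the matrix in the lemma is $\bm{\sfI}_{K\SC} - \bm{M}(\bar{\bm{\sfT}})$ and the fixed-point equation (\ref{eq:downlinkSINR_covar3}) becomes $(\bm{\sfI}_{K\SC} - \bm{M}(\bar{\bm{\sfT}}))\bm{\sfq} = (\sigma^2/\zeta_b^2)\,\bar{\bm{\sfD}}(\bar{\bm{\sfT}})\bm{1}_{K\SC}$.

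First I would verify that $\bm{M}(\bar{\bm{\sfT}})$ is entrywise nonnegative. The factor $\bar{\bm{\sfD}}(\bar{\bm{\sfT}})$ is diagonal with strictly positive entries $\gamma_{k,n}/(\bm{\sft}_{k,n}^\T\bm{\sfR}_{k,n}\bm{\sft}_{k,n}^*)$ since every $\bm{\sfR}_{k,n} = \bm{\sfh}_{k,n}\bm{\sfh}_{k,n}^\Her$ is rank-one PSD. Every entry of $\bar{\bm{\Psi}}(\bar{\bm{\sfT}})$ is a quadratic form $\bm{\sft}_{i,n}^\T\bm{\sfR}_{k,n}\bm{\sft}_{i,n}^*\ge 0$, and each entry of $\bar{\bm{\Phi}}(\bar{\bm{\sfT}})$ is $\tfrac{1}{\SC}(\tfrac{1}{\zeta_b^2}-1)\bm{\sft}_{i,j}^\T\diag(\bm{\sfR}_{k,n})\bm{\sft}_{i,j}^*\ge 0$ because $1/\zeta_b^2\ge 1$ for every $b$ and $\diag(\bm{\sfR}_{k,n})$ is a nonnegative diagonal. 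Therefore $\bm{M}(\bar{\bm{\sfT}}) \ge 0$ componentwise.

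Next I would turn the feasibility assumption into a strict sub-fixed-point inequality. Feasibility of $\{\gamma_{k,n}\}$ means there exists some admissible beamformer $\bar{\bm{\sfT}}$ and a finite, componentwise-positive power vector $\bm{\sfq}_0 > 0$ that achieves the targets, i.e.\ satisfies (\ref{eq:downlinkSINR_covar3}). Because $\sigma^2>0$ and $\bar{\bm{\sfD}}(\bar{\bm{\sfT}})\bm{1}_{K\SC}$ is strictly positive, the right-hand side $(\sigma^2/\zeta_b^2)\bar{\bm{\sfD}}(\bar{\bm{\sfT}})\bm{1}_{K\SC}$ is entrywise positive, whence $(\bm{\sfI}_{K\SC}-\bm{M}(\bar{\bm{\sfT}}))\bm{\sfq}_0 > 0$ componentwise, i.e.\ $\bm{M}(\bar{\bm{\sfT}})\bm{\sfq}_0 < \bm{\sfq}_0$.

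Finally I would apply Perron–Frobenius. Since $\bm{M}(\bar{\bm{\sfT}})$ is nonnegative, its spectral radius $\rho(\bm{M}(\bar{\bm{\sfT}}))$ is an eigenvalue of $\bm{M}(\bar{\bm{\sfT}})^\T$ with a nonnegative eigenvector $\bm{v}\ge 0$, $\bm{v}\neq \bm{0}$. Pairing with the sub-fixed-point inequality gives $\rho(\bm{M}(\bar{\bm{\sfT}}))\,\bm{v}^\T\bm{\sfq}_0 = \bm{v}^\T\bm{M}(\bar{\bm{\sfT}})\bm{\sfq}_0 < \bm{v}^\T\bm{\sfq}_0$, and since $\bm{\sfq}_0>0$ and $\bm{v}\ge 0$ is nonzero the inner product $\bm{v}^\T\bm{\sfq}_0$ is strictly positive, so $\rho(\bm{M}(\bar{\bm{\sfT}}))<1$. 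Consequently the Neumann series $\sum_{k\ge 0}\bm{M}(\bar{\bm{\sfT}})^k$ converges and yields $(\bm{\sfI}_{K\SC}-\bm{M}(\bar{\bm{\sfT}}))^{-1}$, proving invertibility. The main thing to be careful about is the sign of $\bar{\bm{\Phi}}(\bar{\bm{\sfT}})$—the $(1/\zeta_b^2-1)$ factor must be shown nonnegative for every $b\in\{2,3,\dots,\infty\}$ so that $\bm{M}(\bar{\bm{\sfT}})$ is nonnegative and Perron–Frobenius applies; this follows from $\zeta_b=\tfrac{2^b}{2\sqrt{\pi}}\sin(\pi/2^b)\le 1$ as established in Section \ref{sec:DL}.
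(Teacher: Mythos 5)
Your proof is correct, and it reaches the paper's key intermediate fact --- that the nonnegative matrix $\bm{M}(\bar{\bm{\sfT}}) = \bar{\bm{\sfD}}(\bar{\bm{\sfT}})\bar{\bm{\Psi}}(\bar{\bm{\sfT}}) + \bar{\bm{\sfD}}(\bar{\bm{\sfT}})\bar{\bm{\Phi}}(\bar{\bm{\sfT}})$ has spectral radius strictly below one --- by a genuinely different and more economical route. The paper gets there indirectly: it drops the noise term to form an SQIR, lets $\|\bm{\sfq}\|_2 \to \infty$, proves separately (Appendix \ref{sec:C}) that the resulting max--min SQIR balancing problem is optimized with all ratios equal, identifies the balanced value $R^\star$ with the reciprocal of the Perron eigenvalue of $\bm{M}(\bar{\bm{\sfT}})$, and finally invokes feasibility to get $R^\star > 1$, hence $\lambda_\maxx(\bm{M}(\bar{\bm{\sfT}})) < 1$. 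You instead extract from feasibility a single positive vector $\bm{\sfq}_0$ with $\bm{M}(\bar{\bm{\sfT}})\bm{\sfq}_0 < \bm{\sfq}_0$ (strict, thanks to the positive noise term $\tfrac{\sigma^2}{\zeta_b^2}\bar{\bm{\sfD}}(\bar{\bm{\sfT}})\bm{1}_{K\SC}$) and pair it with a left Perron vector; this standard subinvariance argument delivers the spectral bound in one step and needs no limiting argument and no balancing lemma. What the paper's longer detour buys is the explicit characterization of $R^\star$ and of the optimal power vector as a Perron eigenpair, which it reuses in Section \ref{sec:power}; your argument proves the lemma as stated with less machinery. Two small points of care. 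First, you read ``feasible'' as the existence of a positive $\bm{\sfq}_0$ satisfying (\ref{eq:downlinkSINR_covar3}) \emph{with equality}; since the lemma is precisely what guarantees that (\ref{eq:downlinkSINR_covar4}) solves that equation, it is cleaner (and avoids any appearance of circularity) to take feasibility in the inequality sense $\gamma_\kn^{\text{DL}} \geq \gamma_\kn$ for all $k,n$, which yields $\bm{\sfq}_0 \geq \bm{M}(\bar{\bm{\sfT}})\bm{\sfq}_0 + \tfrac{\sigma^2}{\zeta_b^2}\bar{\bm{\sfD}}(\bar{\bm{\sfT}})\bm{1}_{K\SC}$ and hence the same strict inequality $\bm{M}(\bar{\bm{\sfT}})\bm{\sfq}_0 < \bm{\sfq}_0$; your derivation is unchanged. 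Second, strict positivity of the diagonal of $\bar{\bm{\sfD}}(\bar{\bm{\sfT}})$ requires $\bm{\sft}_\kn^\T\bm{\sfh}_\kn \neq 0$, not merely that $\bm{\sfR}_\kn$ is PSD --- but this is forced by feasibility (a zero direct gain cannot meet a positive target) and is implicitly assumed by the paper for $\bar{\bm{\sfD}}(\bar{\bm{\sfT}})$ to be well defined. Your sign check on $\bar{\bm{\Phi}}(\bar{\bm{\sfT}})$ via $\zeta_b \leq \tfrac{\sqrt{\pi}}{2} < 1$ is exactly the right thing to verify and is correct.
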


\begin{proof}
The proof follows from the proof of \cite[Lemma 3.2 and Lemma 3.3]{pw2} by replacing ${\bm{\sfD}}({\bm{\sfT}})  {\bm{\Psi}}({\bm{\sfT}})$ with $\bar{\bm{\sfD}}(\bar{\bm{\sfT}})  \bar{\bm{\Psi}}(\bar{\bm{\sfT}}) + \bar{\bm{\sfD}}(\bar{\bm{\sfT}})  \bar{\bm{\Phi}}(\bar{\bm{\sfT}})$.
\end{proof}
\\ The choice of DL power allocation vector $\bm{\sfq}$ in (\ref{eq:downlinkSINR_covar4}) thus achieves the target SQINRs $\{ \gamma_\kn\}$ for the given beamforming matrix $\bar{\bm{\sfT}}$ .

\vspace{-0.3cm}
\subsection{Uplink SQINR}  \label{sec:dualityUL}
Now we show that the same target SQINRs $\{\gamma_\kn\}$ can be achieved in the UL by choosing the UL power allocation vector $\bm{\sfp}$ under the sum power constraint equal to the BS DL power. Next, observe that $\bar{\bm{\sfu}}_\kn^\T \bm{\sfI} \bar{\bm{\sfu}}_\kn^*$ can be replaced by $\bar{\bm{\sfu}}_\kn^\T \bar{\mathbf{A}}_\text{u} \bar{\mathbf{A}}_\text{u}^{-2}  \bar{\mathbf{A}}_\text{u} \bar{\bm{\sfu}}_\kn^*$ in (\ref{eq:uplinkSINR_covar}) with $\bar{\mathbf{A}}_\text{u}$ defined in (\ref{eq:uplinkSig_r6}). Relating the DL beamformers and UL combiners as $\bar{\bm{\sft}}_\kn = \bar{\mathbf{A}}_\text{u} \bar{\bm{\sfu}}_\kn / \| \bar{\mathbf{A}}_\text{u} \bar{\bm{\sfu}}_\kn \|_2$ with $\bm{\sft}_\kn^\T \bm{\sft}_\kn^* = 1$ and recalling the observation made in Section \ref{sec:dualityDL} about non-zero contribution from only the $n^\thh$ block of all block-diagonal variables, the  $K\SC$ target SQINRs can be equated to the achieved UL SQINR (\ref{eq:uplinkSINR_covar}) under the small angle approximation as
\begin{equation} \label{eq:uplinkSINR_covar4}
\gamma_\kn = \frac{\sfp_\kn \bm{\sft}_\kn^\T \bm{\sfR}_\kn \bm{\sft}_\kn^*}{  \sum_{ \substack{i = 1\\ i \neq k}}^K \sfp_\inn \bm{\sft}_\kn^\T \bm{\sfR}_\inn \bm{\sft}_\kn^* + \frac{1}{\zeta_b^2} \sigma^2 + \left( \frac{1}{\zeta_b^2} - 1 \right) \tr \left( \frac{1}{\SC} \sum_{i = 1}^K  \sum_{j = 1}^{\SC} \sfp_\ijj \bm{\sft}_\kn^\T \diag( \bm{\sfR}_\ijj ) \bm{\sft}_\kn^* \right) }.
\end{equation}
The UL SQINR $\gamma_\kn^{\text{UL}}(\bar{\bm{\sfu}}_\kn , \bm{\sfp})$ in (\ref{eq:uplinkSINR_covar}) is equivalently given by the expression $\gamma_\kn^{\text{UL}}(\bm{\sft}_\kn ,  \bm{\sfp})$ on the right hand side (RHS) of (\ref{eq:uplinkSINR_covar4}). Using $\bar{\bm{\sfD}}(\bar{\bm{\sfT}})$, $\bar{\bm{\Psi}}(\bar{\bm{\sfT}})$ and $\bar{\bm{\Phi}}(\bar{\bm{\sfT}})$, the $K \SC$ equations in (\ref{eq:uplinkSINR_covar4}) can be rearranged in matrix form as
\begin{equation} \label{eq:uplinkSINR_covar5}
\bm{\sfp} = \bar{\bm{D}}(\bar{\bm{\sfT}})  \bar{\bm{\Psi}}^\T(\bar{\bm{\sfT}}) \bm{\sfp} + \bar{\bm{\sfD}}(\bar{\bm{\sfT}})  \bar{\bm{\Phi}}^\T(\bar{\bm{\sfT}})\bm{\sfp}  + \frac{\sigma^2}{\zeta_b^2} \bar{\bm{\sfD}}(\bar{\bm{\sfT}}) \bm{1}_{K\SC}.
\end{equation}
The UL power allocation vector $\bm{\sfp}$ is given by
\begin{equation} \label{eq:uplinkSINR_covar6}
\bm{\sfp} =  \frac{\sigma^2}{\zeta_b^2} \left(  \bm{\sfI}_{K\SC} -  \bar{\bm{\sfD}}(\bar{\bm{\sfT}})  \bar{\bm{\Psi}}^\T(\bar{\bm{\sfT}}) - \bar{\bm{\sfD}}(\bar{\bm{\sfT}})  \bar{\bm{\Phi}}^\T(\bar{\bm{\sfT}}) \right)^{-1} \bar{\bm{\sfD}}(\bar{\bm{\sfT}}) \bm{1}_{K\SC}. 
\end{equation}
The existence of the matrix inverse in (\ref{eq:uplinkSINR_covar6}) follows from Lemma \ref{lemma:lemma1}. Ignoring the scalar factor $\frac{\sigma^2}{\zeta_b^2}$, the total UL power allocation is given by
\begin{equation} \label{eq:duality}
\begin{split}
\|\bm{\sfp}\|_1 &=   \bm{1}_{K\SC}^\T \left(  \bm{\sfI}_{K\SC} -  \bar{\bm{\sfD}}(\bar{\bm{\sfT}})  \bar{\bm{\Psi}}^\T(\bar{\bm{\sfT}}) - \bar{\bm{\sfD}}(\bar{\bm{\sfT}})  \bar{\bm{\Phi}}^\T(\bar{\bm{\sfT}}) \right)^{-1} \bar{\bm{\sfD}}(\bar{\bm{\sfT}}) \bm{1}_{K\SC} \\
& \overset{(a)}{=} \left( \left(  \bm{\sfI}_{K\SC} -  \bar{\bm{\sfD}}(\bar{\bm{\sfT}})  \bar{\bm{\Psi}}(\bar{\bm{\sfT}}) - \bar{\bm{\sfD}}(\bar{\bm{\sfT}})  \bar{\bm{\Phi}}(\bar{\bm{\sfT}}) \right)^{-1} \bar{\bm{\sfD}}(\bar{\bm{\sfT}}) \bm{1}_{K\SC} \right)^\T  \bm{1}_{K\SC} = \|\bm{\sfq}\|_1,
\end{split}
\end{equation}
where (a) makes use of the push-through identity, the diagonal structure of $\bar{\bm{\sfD}}(\bar{\bm{\sfT}})$ and $(\bm{\sfA\sfB})^\T = \bm{\sfB}^\T \bm{\sfA}^\T$. The same target SQINRs can be obtained in the UL and DL for equal sum power which establishes the UL-DL duality principle for MU-MIMO-OFDM systems with CEQ ADCs/DACs.

\section{UL-DL duality based  proposed solution} \label{sec:solution}
Based on the UL-DL duality result for MU-MIMO-OFDM with CEQ constraints established in Section \ref{sec:duality}, we extend the alternating minimization solution from \cite{pw2} to the max-min optimization problem presented in Section \ref{sec:problem}. Due to the high-dimensional nature of this solution which simultaneously deals with all active sub-carriers, we present a simplified solution that can be independently applied to each sub-carrier analogous to the flat fading case presented in \cite{pw2}.  We also comment on the convergence of the proposed algorithm and briefly justify the small angle approximation presented in Section \ref{sec:uqn}.

\vspace{-0.3cm}
\subsection{Optimal DL power allocation} \label{sec:power}
With the vectorized definitions of the DL and UL SQINR in (\ref{eq:downlinkSINR_covar2}) and (\ref{eq:uplinkSINR_covar4}), the solution to the optimal DL power problem (\ref{eq:DL_SINR2}) (and its UL analog) closely follows the solution for the flat fading case given in Section IV-B of \cite{pw2} with a few changes defined next. We define the extended DL power allocation vector $\bm{\sfq}^\star_\text{ ext} = \left[ \bm{\sfq}^\star \quad 1 \right]^\T$ and the positive extended DL coupling~matrix
\begin{equation}
  \bar{\bm{\Upsilon}}(\bar{\bm{\sfT}}^\star , P_{\BS}) = \begin{bmatrix}
    \bar{\bm{\sfD}}(\bar{\bm{\sfT}}^\star) \left( \bar{\bm{\Psi}}(\bar{\bm{\sfT}}^\star) + \bar{\bm{\Phi}}(\bar{\bm{\sfT}}^\star) \right) &  \frac{\sigma^2}{\zeta_b^2} \bar{\bm{\sfD}}( \bar{\bm{\sfT}}^\star) \bm{1}_{K\SC}  \\ 
        
   \frac{\bm{1}_{K\SC}^\T}{P_{\BS} \SC}  \bar{\bm{\sfD}}(\bar{\bm{\sfT}}^\star) \left( \bar{\bm{\Psi}}(\bar{\bm{\sfT}}^\star) + \bar{\bm{\Phi}}(\bar{\bm{\sfT}}^\star) \right) &  \frac{\sigma^2}{\zeta_b^2 P_{\BS} \SC} \bm{1}_{K\SC}^\T   \bar{\bm{\sfD}}( \bar{\bm{\sfT}}^\star) \bm{1}_{K\SC}
      \end{bmatrix}.
  \label{eq:psiMatrix}
\end{equation}
Following the development in \cite{pw2}, the solution to the DL power allocation problem (\ref{eq:DL_SINR2}) is 
\begin{equation} \label{eq:psiMatrix3}
\CDL_\text{opt}(P_{\BS} , \bar{\bm{\sfT}}^\star) = \frac{1}{ \lambda_\maxx \left( \bar{\bm{\Upsilon}}(\bar{\bm{\sfT}}^\star , P_{\BS}) \right) }.
\end{equation}
And the optimal DL power allocation vector $\bm{\sfq}^\star$ is given by the first $K\SC$ entries of the dominant eigenvector of $\bar{\bm{\Upsilon}}(\bar{\bm{\sfT}}^\star , P_{\BS})$ scaled such that the last entry equals 1. Similarly, the maximizer $\bm{\sfp}^\star$ of the UL version of the problem (\ref{eq:DL_SINR2}) is given by the first $K\SC$ entries of the dominant eigenvector $\bm{\sfp}^\star_\ext = \left[ \bm{\sfp}^\star \quad 1 \right]^\T$ (last entry scaled to 1) of the positive UL extended coupling matrix~defined~as 
\begin{equation}
  \bar{\bm{\Lambda}}(\bar{\bm{\sfT}}^\star , P_{\BS}) = \begin{bmatrix}
    \bar{\bm{\sfD}}(\bar{\bm{\sfT}}^\star) \left( \bar{\bm{\Psi}}(\bar{\bm{\sfT}}^\star) + \bar{\bm{\Phi}}(\bar{\bm{\sfT}}^\star) \right)^\T &  \frac{\sigma^2}{\zeta_b^2} \bar{\bm{\sfD}}( \bar{\bm{\sfT}}^\star) \bm{1}_{K\SC}  \\ 
        
   \frac{\bm{1}_{K\SC}^\T}{P_{\BS} \SC}  \bar{\bm{\sfD}}(\bar{\bm{\sfT}}^\star) \left( \bar{\bm{\Psi}}(\bar{\bm{\sfT}}^\star) + \bar{\bm{\Phi}}(\bar{\bm{\sfT}}^\star) \right)^\T &  \frac{ \sigma^2}{\zeta_b^2 P_{\BS} \SC} \bm{1}_{K\SC}^\T   \bar{\bm{\sfD}}( \bar{\bm{\sfT}}^\star) \bm{1}_{K\SC}
      \end{bmatrix}.
  \label{eq:gammaMatrix}
\end{equation}
And the optimal achieved to target SQINR ratio of the UL power allocation problem is given~by
\begin{equation} \label{eq:gammaMatrix2}
\CUL_\text{opt}(P_{\BS} , \bar{\bm{\sfT}}^\star) = \frac{1}{ \lambda_\maxx \left( \bar{\bm{\Lambda}}(\bar{\bm{\sfT}}^\star , P_{\BS}) \right) }.
\end{equation}
The solutions of the DL and UL power allocation problems are equal, i.e.\ $\CUL_\text{opt}(P_{\BS} , \bar{\bm{\sfT}}^\star) = \CDL_\text{opt}(P_{\BS} , \bar{\bm{\sfT}}^\star)$. This follows directly from the duality result in Theorem \ref{theorem:theorem1}. Hence, the same target SQINR set $\{\gamma_\kn\}$ (or a scalar multiple thereof) is achieved in both DL and UL using this power allocation procedure. We will use this observation~to~cast~the MU-MIMO-OFDM DL problem in terms of the corresponding UL problem for a more efficient~solution.

\vspace{-0.3cm}
\subsection{Joint power allocation and precoder design} \label{sec:joint} 
We now focus on the joint beamforming matrix and power allocation problem in (\ref{eq:DL_SINR}). The optimum solution to the power allocation problem (\ref{eq:DL_SINR2}) for a fixed BF matrix $\bar{\bm{\sfT}}^\star$ is given by the reciprocal of the dominant eigenvalue of the extended DL coupling matrix $\bar{\bm{\Upsilon}}(\bar{\bm{\sfT}}^\star , P_{\BS})$. The joint power and beamforming optimization problem (\ref{eq:DL_SINR}) can be equivalently stated as
\begin{equation} \label{eq:DL_SINR6}
\CDL_\text{opt}(P_{\BS}) = \frac{1}{\min_{\bar{\bm{\sfT}}} \lambda_{\max}\left( \bar{\bm{\Upsilon}}(\bar{\bm{\sfT}} , P_{\BS}) \right) }.
\end{equation} 
Making use of the duality result from Theorem \ref{theorem:theorem1}, we replace (the motivation of doing this will become clear later) the DL extended coupling matrix with the UL extended coupling matrix
\begin{equation} \label{eq:DL_SINR7}
\CDL_\text{opt}(P_{\BS}) = \frac{1}{ \min_{\bar{\bm{\sfT}}} \lambda_{\max}\left( \bar{\bm{\Lambda}}(\bar{\bm{\sfT}} , P_{\BS}) \right) }.
\end{equation} 
By the Perron-Frobenius theorem \cite{MUDL}, $\lambda_{\max}$ of the non-negative matrix $\bar{\bm{\Lambda}}(\bar{\bm{\sfT}} , P_{\BS})$ is
\begin{equation} \label{eq:DL_SINR8}
\lambda_{\max}\left( \bar{\bm{\Lambda}}(\bar{\bm{\sfT}} , P_{\BS}) \right) =   \max_{\bm{\sfx}>0} \min_{\bm{\sfy}>0} \frac{ \bm{\sfx}^\T \bar{\bm{\Lambda}}(\bar{\bm{\sfT}} , P_{\BS}) \bm{\sfy}  }{ \bm{\sfx}^\T \bm{\sfy} } .
\end{equation} 
Next, we define an intermediate cost function
\begin{equation} \label{eq:DL_SINR9}
 \bar{\lambda} \left( \bar{\bm{\sfT}}, P_{\BS}, \bm{\sfp}_\ext \right) = \max_{\bm{\sfx}>0}  \frac{ \bm{\sfx}^\T \bar{\bm{\Lambda}}(\bar{\bm{\sfT}} , P_{\BS}) \bm{\sfp}_\ext  }{ \bm{\sfx}^\T \bm{\sfp}_\ext },
\end{equation} 
which allows us to rewrite (\ref{eq:DL_SINR7}) as
\begin{equation} \label{eq:DL_SINR10}
\left( \CDL_\text{opt}(P_{\BS}) \right)^{-1} = \min_{\bar{\bm{\sfT}}} \min_{\bm{\sfp}_\ext > 0} \bar{\lambda} \left( \bar{\bm{\sfT}}, P_{\BS}, \bm{\sfp}_\ext \right).
\end{equation} 
Similar to \cite{pw2,MUDL}, we take a two step alternating minimization approach to solve (\ref{eq:DL_SINR10}). In the first step, we solve for the UL power allocation vector $\bm{\sfp}^\star$ for a fixed beamforming matrix. In the second step $\bm{\sfp}$ is held fixed while solving for $\bar{\bm{\sfT}}^\star$.

\subsubsection{Power allocation step}
Since the solution to the power allocation problem (\ref{eq:DL_SINR2}) was obtained by maximizing the minimum, it follows from (\ref{eq:gammaMatrix2}) that the optimal UL power allocation vector $\bm{\sfp}^\star_\ext$ minimizes the function $\bar{\lambda} \left( \bar{\bm{\sfT}}, P_{\BS}, \bm{\sfp}_\ext \right)$ for a fixed beamforming matrix $\bar{\bm{\sfT}}$. 



\subsubsection{Beamformer optimization step}
Next we fix the power allocation vector $\bm{\sfp}$ $\left(\text{with } \bm{\sfp}_\ext = \left[ \bm{\sfp} \quad 1 \right]^\T \right)$ and optimize the beamforming matrix $\bar{\bm{\sfT}}$ given by the problem
\begin{equation} \label{eq:P1}
\bar{\bm{\sfT}}^\star = \argmin_{\bar{\bm{\sfT}}} \bar{\lambda} \left( \bar{\bm{\sfT}}, P_{\BS}, \bm{\sfp}_\ext \right).
\end{equation}
Lemma \ref{lemma:lemma5} (follows from \cite[Lemma 4.3]{pw2}) helps break down (\ref{eq:P1}) into smaller decoupled sub-problems. 
\begin{lemma} \label{lemma:lemma5}
The cost function $\bar{\lambda} \left( \bar{\bm{\sfT}}, P_{\BS}, \bm{\sfp}_\ext \right)$ can equivalently be written as
\begin{equation}\label{eq:P2}
\max_{\bm{\sfx}>0}  \frac{ \bm{\sfx}^\T \bar{\bm{\Lambda}}(\bar{\bm{\sfT}}, P_{\BS}) \bm{\sfp}_\ext  }{ \bm{\sfx}^\T \bm{\sfp}_\ext } = \max_{ \substack{ {1 \leq k \leq K}\\ {1 \leq n \leq \SC} }} \frac{\gamma_\kn}{\gamma_\kn^{\text{UL}}\left(  \bm{\sft}_\kn ,  \bm{\sfp} \right) }.
\end{equation}
\end{lemma}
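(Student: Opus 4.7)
The plan is to reduce the left-hand side, which looks like a generalized Rayleigh quotient over the positive cone, to a coordinatewise maximum, and then to identify each coordinate of $\bar{\bm{\Lambda}}(\bar{\bm{\sfT}},P_{\BS})\bm{\sfp}_\ext$ divided by the corresponding coordinate of $\bm{\sfp}_\ext$ with the achieved-to-target UL SQINR ratio derived in (\ref{eq:uplinkSINR_covar4}). The first observation is purely elementary: for any $\bm{\sfa}\in\mathbb{R}^{K\SC+1}$ and any strictly positive $\bm{\sfb}$, the ratio $\bm{\sfx}^\T\bm{\sfa}/\bm{\sfx}^\T\bm{\sfb}$ is a convex combination of the entrywise ratios $\sfa_i/\sfb_i$ with non-negative weights $\sfx_i \sfb_i$, so its supremum over $\bm{\sfx}>0$ equals $\max_i \sfa_i/\sfb_i$ and is approached by concentrating $\bm{\sfx}$ on the maximizing index. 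Applying this with $\bm{\sfa}=\bar{\bm{\Lambda}}(\bar{\bm{\sfT}},P_{\BS})\bm{\sfp}_\ext$ and $\bm{\sfb}=\bm{\sfp}_\ext$ reduces the claim to a coordinatewise identification.

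Next I would compute the first $K\SC$ entries of $\bar{\bm{\Lambda}}(\bar{\bm{\sfT}},P_{\BS})\bm{\sfp}_\ext$ by exploiting the block-diagonal structure of $\bar{\bm{\sfD}}(\bar{\bm{\sfT}})$ and $\bar{\bm{\Psi}}(\bar{\bm{\sfT}})$. Transposing $\bar{\bm{\Psi}}$ simply swaps the role of interfered and interfering users within each subcarrier block, so the $(k,n)$-th entry of $\bar{\bm{\Psi}}^\T(\bar{\bm{\sfT}})\bm{\sfp}$ equals $\sum_{i\neq k}\sfp_\inn \bm{\sft}_\kn^\T \bm{\sfR}_\inn \bm{\sft}_\kn^\ast$, which is exactly the MUI term in the denominator of $\gamma_\kn^{\text{UL}}(\bm{\sft}_\kn,\bm{\sfp})$. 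Likewise, inspection of the definition of $\bar{\bm{\Phi}}(\bar{\bm{\sfT}})$ with $\tilde{\bm{\sfR}}_\kn=\frac{1}{\SC}(\zeta_b^{-2}-1)\diag(\bm{\sfR}_\kn)$ shows that $(\bar{\bm{\Phi}}^\T(\bar{\bm{\sfT}})\bm{\sfp})_{(k,n)}$ is precisely the quantization-noise contribution in (\ref{eq:uplinkSINR_covar4}), while the rightmost column supplies the $\sigma^2/\zeta_b^2$ thermal-noise term after multiplication by the last coordinate of $\bm{\sfp}_\ext$. Multiplying the resulting bracket by the diagonal entry $\gamma_\kn/(\bm{\sft}_\kn^\T\bm{\sfR}_\kn\bm{\sft}_\kn^\ast)$ of $\bar{\bm{\sfD}}(\bar{\bm{\sfT}})$ and dividing by $\sfp_\kn$ produces exactly $\gamma_\kn/\gamma_\kn^{\text{UL}}(\bm{\sft}_\kn,\bm{\sfp})$.

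The remaining step, which I expect to be the only subtle one, is to argue that the $(K\SC{+}1)$-th coordinate ratio does not exceed the maximum over $(k,n)$. By construction, the last row of $\bar{\bm{\Lambda}}(\bar{\bm{\sfT}},P_{\BS})$ is $(P_{\BS}\SC)^{-1}$ times the sum of the first $K\SC$ rows, and the last entry of $\bm{\sfp}_\ext$ is $1$. Combining this with the computation of the previous paragraph gives
\begin{equation}
\frac{(\bar{\bm{\Lambda}}(\bar{\bm{\sfT}},P_{\BS})\bm{\sfp}_\ext)_{K\SC+1}}{1}=\frac{1}{P_{\BS}\SC}\sum_{k,n}\sfp_\kn\,\frac{\gamma_\kn}{\gamma_\kn^{\text{UL}}(\bm{\sft}_\kn,\bm{\sfp})},
\end{equation}
a non-negative combination of the $(k,n)$ ratios whose weights sum to $\|\bm{\sfp}\|_1/(P_{\BS}\SC)\le 1$ under the sum-power constraint enforced on $\bm{\sfp}$. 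This weighted sum is therefore dominated by $\max_{k,n}\gamma_\kn/\gamma_\kn^{\text{UL}}(\bm{\sft}_\kn,\bm{\sfp})$, so taking the coordinatewise maximum established in Step~1 yields (\ref{eq:P2}). The main technical obstacle is purely bookkeeping: carefully tracking transpose indices in $\bar{\bm{\Psi}}$ and $\bar{\bm{\Phi}}$ to recover the exact UL SQINR expression, and invoking the power constraint to tame the auxiliary $(K\SC{+}1)$-th coordinate.
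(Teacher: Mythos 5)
Your proposal is correct and follows essentially the same route as the paper's proof in Appendix~\ref{sec:D}: reduce the positive-cone maximization to a coordinatewise maximum of entrywise ratios (the paper cites this fact from \cite{MUDL}, you sketch its elementary convex-combination proof), identify the first $K\SC$ ratios with $\gamma_\kn/\gamma_\kn^{\text{UL}}(\bm{\sft}_\kn,\bm{\sfp})$ via (\ref{eq:uplinkSINR_covar4}) and (\ref{eq:gammaMatrix}), and bound the $(K\SC{+}1)$-th ratio by the max using the sum-power normalization. No gaps; your handling of the last coordinate with weights summing to $\|\bm{\sfp}\|_1/(P_{\BS}\SC)\le 1$ matches the paper's ``max exceeds the average'' step.
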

Lemma \ref{lemma:lemma5}  transforms the minimization over $\bar{\bm{\sfT}}$ into maximization over the UL SQINRs. Now we recall from Section \ref{sec:UL} that the UL SQINR for the $k^\thh$ user on the $n^\thh$ sub-carrier, $\gamma_\kn^{\text{UL}}\left(  \bm{\sft}_\kn ,  \bm{\sfp} \right)$, is a function of only the beamforming vector $\bm{\sft}_\kn$. Hence each individual SQINR term can be maximized independently of others and this is the primary reason for recasting the problem from DL to UL by making use of the UL-DL duality.
This is summarized in Corollary \ref{cor:cor2}.

\begin{corollary} \label{cor:cor2}
The solution to the problem (\ref{eq:P1}) is given by independent maximization of the $K\SC$ UL SQINRs $\gamma_\kn^{\text{UL}}\left(  \bm{\sft}_\kn ,  \bm{\sfp} \right)$ for $1 \leq k \leq K$ and $1 \leq n \leq \SC$.
\end{corollary}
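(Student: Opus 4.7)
The plan is to combine Lemma~\ref{lemma:lemma5} with the key structural observation (already noted in Section~\ref{sec:UL}) that $\gamma_\kn^{\text{UL}}(\bm{\sft}_\kn,\bm{\sfp})$ depends only on the single beamforming vector $\bm{\sft}_\kn$ together with the fixed UL power allocation $\bm{\sfp}$, and does \emph{not} depend on any other $\bm{\sft}_{i,j}$ for $(i,j)\neq(k,n)$. This decoupling, which fails for the DL SQINR in (\ref{eq:downlinkSINR_covar2}) due to the MUI and quantization-noise cross terms, is the whole reason for passing through UL-DL duality, and it is what allows the joint problem to split.

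First, I would apply Lemma~\ref{lemma:lemma5} to rewrite (\ref{eq:P1}) as
\begin{equation*}
\bar{\bm{\sfT}}^\star \;=\; \argmin_{\bar{\bm{\sfT}}}\; \max_{\substack{1\leq k\leq K \\ 1\leq n\leq \SC}} \frac{\gamma_\kn}{\gamma_\kn^{\text{UL}}\left(\bm{\sft}_\kn,\bm{\sfp}\right)}.
\end{equation*}
Second, I would observe that the feasible set for $\bar{\bm{\sfT}}$ is simply the Cartesian product of the $K\SC$ unit spheres $\{\bm{\sft}_\kn:\|\bm{\sft}_\kn\|_2=1\}$, with no coupling constraint across $(k,n)$ pairs (the total transmit power budget is enforced through the power vector $\bm{\sfp}$, not through the beamforming vectors). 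When the objective is a pointwise maximum of terms depending on disjoint variable blocks and the constraint set is a product, the outer minimization decomposes:
\begin{equation*}
\min_{\bar{\bm{\sfT}}}\;\max_{k,n} \frac{\gamma_\kn}{\gamma_\kn^{\text{UL}}(\bm{\sft}_\kn,\bm{\sfp})} \;=\; \max_{k,n}\;\min_{\|\bm{\sft}_\kn\|_2=1}\frac{\gamma_\kn}{\gamma_\kn^{\text{UL}}(\bm{\sft}_\kn,\bm{\sfp})}.
\end{equation*}
Since each $\gamma_\kn>0$ is a fixed target, minimizing $\gamma_\kn/\gamma_\kn^{\text{UL}}(\bm{\sft}_\kn,\bm{\sfp})$ over $\bm{\sft}_\kn$ is equivalent to maximizing $\gamma_\kn^{\text{UL}}(\bm{\sft}_\kn,\bm{\sfp})$ over the unit sphere, which yields the claimed $K\SC$ independent per-user, per-subcarrier maximization problems.

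The main thing to verify, rather than a genuine obstacle, is that no hidden coupling creeps back in via the Bussgang gain $\bar{\mathbf{A}}_\text{u}$ embedded in $\gamma_\kn^{\text{UL}}$. Inspecting (\ref{eq:uplinkSig_r6}), $\bar{\mathbf{A}}_\text{u}$ is a function of $\bar{\bm{\sfH}}$ and $\bar{\bm{\sfP}}$ only, both of which are held fixed during the beamformer step; it is \emph{not} a function of $\bar{\bm{\sfT}}$. Consequently each $\gamma_\kn^{\text{UL}}(\bm{\sft}_\kn,\bm{\sfp})$ is, as a function of $\bar{\bm{\sfT}}$, truly a function of $\bm{\sft}_\kn$ alone, confirming the decoupling. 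Each resulting per-$(k,n)$ subproblem is a standard generalized Rayleigh-quotient maximization whose closed-form solver will be stated separately; Corollary~\ref{cor:cor2} only asserts the reduction to these independent problems, which is established by the two displayed equalities above.
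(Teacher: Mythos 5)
Your proposal is correct and follows essentially the same route as the paper: invoke Lemma~\ref{lemma:lemma5} to rewrite the cost as $\max_{k,n}\gamma_\kn/\gamma_\kn^{\text{UL}}(\bm{\sft}_\kn,\bm{\sfp})$, then use the fact that each term depends only on its own $\bm{\sft}_\kn$ over an uncoupled product constraint set, so the min--max splits into $K\SC$ independent maximizations of the UL SQINRs. Your explicit check that the Bussgang gain $\bar{\mathbf{A}}_\text{u}$ depends only on $\bar{\bm{\sfH}}$ and $\bar{\bm{\sfP}}$ (not on $\bar{\bm{\sfT}}$) is a worthwhile detail the paper leaves implicit, but it does not change the argument.
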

Using the definition of the UL SQINR (\ref{eq:uplinkSINR_covar4}), the beamformer $\bm{\sft}_\kn^\star$ maximizing $\gamma_\kn^{\text{UL}}\left(  \bm{\sft}_\kn ,  \bm{\sfp} \right)$ is
\begin{equation} \label{eq:P3}
\bm{\sft}_\kn^\star = \argmax_{\bm{\sft}_\kn} \frac{\sfp_\kn \bm{\sft}_\kn^\T \bm{\sfR}_\kn \bm{\sft}_\kn^*}{\bm{\sft}_\kn^\T \bm{\sfS}_\kn(\bm{\sfp}) \bm{\sft}_\kn^*}, \quad  \textrm{s.t.} \| \bm{\sft}_\kn \|_2 = 1,
\end{equation} 
where
\begin{equation} \label{eq:DL_SINR20}
\bm{\sfS}_\kn( \bm{\sfp} ) =  \sum_{ \substack{i = 1\\ i \neq k}}^K \sfp_\inn \bm{\sfR}_\inn + \left( \frac{1}{\zeta_b^2} - 1 \right) \frac{1}{\SC} \sum_{i=1}^K \sum_{j=1}^{\SC} \sfp_\ijj \diag(\bm{\sfR}_\ijj)  + \frac{1}{\zeta_b^2} \sigma^2 \bm{\sfI}.
\end{equation} 
Since the matrices $\bm{\sfR}_\kn$ and $\bm{\sfS}_\kn$ are hermitian, the solution to (\ref{eq:P3}) is given by the dominant generalized eigenvector of the matrix pair $( \bm{\sfR}_\kn , \bm{\sfS}_\kn )$ for $1 \leq k \leq K$ and $1 \leq n \leq \SC$ \cite{MUDL}. 

The alternating minimization algorithm iterates between the power allocation step and BF optimization step till $ \lambda^{(t-1)}_{\max}\left( \bar{\bm{\Lambda}}( \bar{\bm{\sfT}}^{(t-1)}, P_{\BS}) \right) -  \lambda^{(t)}_{\max}\left( \bar{\bm{\Lambda}}(\bar{\bm{\sfT}}^{(t)}, P_{\BS}) \right) < \epsilon$. The superscript $(\cdot)^{(t)}$ denotes the iteration index and $\epsilon$ is a predefined constant used to stop the optimization procedure. After convergence, the DL power allocation vector $\bm{\sfq}^\star$ is calculated using the precoder matrix $\bm{\sfT}^\star$ obtained in the final iteration. The proposed solution is summarized in Algorithm~\ref{alg:alg1}. $\bm{\sfv}_{\max}(\cdot , \cdot)$ denotes the dominant generalized eigenvector of its arguments and $\lambda^{(t)}_{\max} \triangleq \lambda^{(t)}_{\max}\left( \bar{\bm{\Lambda}}(\bar{\bm{\sfT}}^{(t)}, P_{\BS}) \right)$.

\vspace{-0cm}
\begin{algorithm}[h]
\caption{Alternating minimization solution to (\ref{eq:DL_SINR}) }\label{alg:alg1}
 1) \text{Initialize:} $t = 0, \bm{\sfp}^{\star(0)} = \bm{0}_{K\SC} , P_{\BS} \SC , \epsilon$ \vspace{.1cm}  \\
 2) \textbf{while} $\lambda_{\max}^{(t-1)} - \lambda_{\max}^{(t)} \geq \epsilon$ \vspace{.05cm}    \\
 3) \hspace{0.5cm}  $\forall_\kn \text{ } {\bm{\sft}}_\kn^{\star(t)} = \bm{\sfv}_{\max}\left( \bm{\sfR}_\kn,  \bm{\sfS}_\kn(\bm{\sfp}^{\star(t-1)})\right)$ \vspace{0cm}    \\
 4) \hspace{0.5cm}  $\forall_\kn \text{ } {\bm{\sft}}_\kn^{\star(t)} = \bm{\sft}_\kn^{\star(t)} / \| \bm{\sft}_\kn^{\star(t)} \|_2 $ \hspace{1.5cm}  \vspace{0cm}    \\
 5) \hspace{0.5cm}  $\bar{\bm{\Lambda}}(\bar{\bm{\sfT}}^{\star(t)}, P_{\BS}) \bm{\sfp}_\text{ext}^{\star(t)} =  \lambda_{\max}^{(t)} \bm{\sfp}_\text{ext}^{\star(t)}$  \hspace{0.1cm} \vspace{0cm}\\
 6) \hspace{0.5cm}  $\bm{\sfp}^{\star(t)} = \bm{\sfp}_\text{ext}^{\star(t)}[1,\dots,K\SC] / \bm{\sfp}_\text{ext}^{\star(t)}[K\SC+1]$ \hspace{0.4cm}\vspace{.1cm}\\
8) \textbf{end} \vspace{.1cm}\\
9) $\bar{\bm{\Upsilon}}(\bar{\bm{\sfT}}^{\star(t)}, P_{\BS}) \bm{\sfq}_\text{ext}^{\star} =  \lambda_{\max}^{(t)} \bm{\sfq}_\text{ext}^{\star} $   \hspace{0.8cm} \vspace{.05cm} \\
10) $\bm{\sfq}^{\star(t)} = \bm{\sfq}_\text{ext}^{\star(t)}[1,\dots,K\SC] / \bm{\sfq}_\text{ext}^{\star(t)}[K\SC+1]$ \hspace{0.5cm}
\end{algorithm}
\vspace{-.3cm}

\vspace{-0.3cm}
\subsection{Convergence} \label{sec:convergence} 
Like the various methods in existing literature \cite{StuderSQUID,MSM1,MSM2,MSM3,magic,magic2,access}, our proposed method does not guarantee global optimality of the solution given by Algorithm \ref{alg:alg1} to the MU-MIMO-OFDM DL precoding problem which is NP-hard under CEQ constraints. Our simulations  indicate that the proposed algorithm typically converges within $2-5$ iterations. We now show that the proposed algorithm is indeed convergent  to some point in the solution space. We know from the precoder optimization step that $\bar{\bm{\sfT}}^{\star(t)}$ minimizes the cost function  $\bar{\lambda} \left( \bar{\bm{\sfT}}, P_{\BS},  \bm{\sfp}_\text{ext}^{\star(t-1)} \right)$ in the $t^\thh$~iteration
\begin{equation} \label{eq:conv1}
\bar{\lambda} \left( \bar{\bm{\sfT}}^{\star(t)}, P_{\BS},  \bm{\sfp}_\text{ext}^{\star(t-1)} \right) \leq \bar{\lambda} \left( \bar{\bm{\sfT}}^{\star(t-1)}, P_{\BS},  \bm{\sfp}_\text{ext}^{\star(t-1)} \right) = \lambda_{\max}^{(t-1)}.
\end{equation}
From the definition of  $ \lambda_{\max}^{(t)}$, we know that
\begin{equation} \label{eq:conv2}
\begin{split}
\lambda_{\max}^{(t)} &= \max_{\bm{\sfx}>0} \min_{\bm{\sfy}>0} \frac{ \bm{\sfx}^\T \bar{\bm{\Lambda}}(\bar{\bm{\sfT}}^{\star(t)}, P_{\BS}) \bm{\sfy}  }{ \bm{\sfx}^\T \bm{\sfy} } \leq \max_{\bm{\sfx}>0} \frac{ \bm{\sfx}^\T \bar{\bm{\Lambda}}(\bar{\bm{\sfT}}^{\star(t)}, P_{\BS}) \bm{\sfp}_\text{ext}^{\star(t-1)}  }{ \bm{\sfx}^\T \bm{\sfp}_\text{ext}^{\star(n)} } = \bar{\lambda} \left( \bar{\bm{\sfT}}^{\star(t)}, P_{\BS},  \bm{\sfp}_\text{ext}^{\star(t-1)} \right).
\end{split}
\end{equation}
It can be observed from (\ref{eq:conv1}) and (\ref{eq:conv2}) that the sequence $\lambda_{\max}^{(t)}$ is monotonically decreasing. Combining this behavior with the non-negativity of $\lambda_{\max}^{(t)}$ implies the existence of a limiting value $\lambda_{\max}^{(\infty)}$. The parameter $\epsilon$ controls how far the algorithm stops from this point in the solution~space.
 

\vspace{-0.3cm}
\subsection{Sub-carrier-wise algorithm} \label{sec:joint2} 
One big advantage of OFDM is the orthogonality of the sub-carriers which can each be treated independently of others. As seen in Section \ref{sec:systemModel}, this orthogonality is destroyed by the non-linear CEQ operation. Under the assumptions introduced at the beginning of Section \ref{sec:duality}, this loss of orthogonality shows up in the linearized SQINR expressions in the form of the quantization coupling matrix $\bar{\bm{\Phi}}(\bar{\bm{\sfT}})$ defined in (\ref{eq:PhiMatrix}). This is the only matrix which does not have a block-diagonal structure and couples all sub-carrier together by making the distortion vector depend on the beamformers of all $\SC$ sub-carrriers. This causes all matrices involved in Algorithm~\ref{alg:alg1} to have dimensions of $K\SC \times K\SC$ making it computationally expensive. We would like to approximately maintain the orthogonality of the OFDM sub-carriers and come up with an algorithm that can be applied to each sub-carrier individually without sacrificing the performance (shown in Section \ref{sec:results}). Towards this end, we look closely at the structure of the matrix $\bar{\bm{\Phi}}(\bar{\bm{\sfT}})$ and make an IID assumption about the involved variables that restores the sub-carrier orthogonality.

We want the quantization coupling matrix $\bar{\bm{\Phi}}(\bar{\bm{\sfT}})$ to have a block-diagonal structure similar to the MUI coupling matrix $\bar{\bm{\Psi}}(\bar{\bm{\sfT}})$. This will allow us to break all $K\SC \times K\SC$ matrices in Algorithm~\ref{alg:alg1} into $\SC$ $K \times K$ sub-matrices which can then be treated in a parallel manner. It can be seen that each entry of $\bar{\bm{\Phi}}(\bar{\bm{\sfT}})$ is a positively weighted Euclidean norm squared of one of the beamforming vectors $\bm{\sft}_\kn$ for $1 \leq k \leq K$ and $1 \leq n \leq \SC$. We know that $\| \bm{\sft}_\kn \|_2^2 = 1 \text{ } \forall_\kn$. The weighting is through the matrices $\tilde{\bm{\sfR}}_\kn = \frac{1}{\SC}\left( \frac{1}{\zeta_b^2} - 1 \right) \text{diag} \left(\bm{\sfR}_\kn \right)$. At this point we make the assumption that each entry of the beamforming vector $\bm{\sft}_\kn$ comes from some IID distribution with second moment equal to $\alpha^2$. Similarly each entry of the channel vector $\bm{\sfh}_\kn$ comes from some IID distribution with second moment $\beta^2$. Ignoring the scalar constant $\left( \frac{1}{\zeta_b^2} - 1 \right)$, each entry of the matrix $\bar{\bm{\Phi}}(\bar{\bm{\sfT}})$ equals $\frac{\NBS}{\SC}\alpha^2 \beta^2$ in expectation under the IID assumption. We now make the entries which are off each $K \times K$ block on the main diagonal equal to zero and scale up the entries of each $K \times K$ block by a factor of $\SC$. Now $\bar{\bm{\Phi}}(\bar{\bm{\sfT}})$ has a block-diagonal structure similar to $\bar{\bm{\Psi}}(\bar{\bm{\sfT}})$. The quantization coupling matrix for the $n^\thh$ sub-carrier, ${\bm{\Phi}}_n({\bm{\sfT}}_n)$, is defined in the same way as the MUI coupling matrix ${\bm{\Psi}}_n({\bm{\sfT}}_n)$ in (\ref{eq:PsiMatrix2}) with $\bm{\sfR}_\kn$ replaced by $\tilde{\bm{\sfR}}_\kn = \left( \frac{1}{\zeta_b^2} - 1 \right) \text{diag} \left(\bm{\sfR}_\kn \right)$. With this modification, the per sub-carrier algorithm is given in Algorithm~\ref{alg:alg2}. Other minor differences in Algorithm~\ref{alg:alg2} are listed below.
\begin{itemize}
\item Since Algorithm~\ref{alg:alg2} runs on each sub-carrier independently, the total BS DL power $P_\BS \SC$ has to be divided among the $\SC$ sub-carriers. We divide the power equally among the sub-carriers enforced by setting the power budget for each sub-carrier equal to $P_{\BS}$.
\item The extended DL and UL coupling matrices, ${\bm{\Upsilon}}({\bm{\sfT}}_n, {P}_{\BS})$ and ${\bm{\Lambda}}({\bm{\sfT}}_n, {P}_{\BS})$, are now defined separately for each sub-carrier using the $n^\thh$ SQINR matrix ${\bm{\sfD}}_n({\bm{\sfT}}_n)$, the $n^\thh$ MUI coupling matrix ${\bm{\Psi}}_n({\bm{\sfT}}_n)$ and the $n^\thh$ quantization coupling matrix ${\bm{\Phi}}_n({\bm{\sfT}}_n)$ by appropriately modifying their definitions in (\ref{eq:psiMatrix}) and (\ref{eq:gammaMatrix}) to not include the factor $\SC$.
\item The matrix $\bm{\sfS}_\kn$ is now defined as
\begin{equation} \label{eq:DL_SINR21}
\bm{\sfS}_\kn( \bm{\sfp}_n ) =  \sum_{ \substack{i = 1\\ i \neq k}}^K \sfp_\inn \bm{\sfR}_\inn + \left( \frac{1}{\zeta_b^2} - 1 \right) \sum_{i=1}^K \sfp_\inn \diag(\bm{\sfR}_\inn)  + \frac{1}{\zeta_b^2} \sigma^2 \bm{\sfI}.
\end{equation} 
\end{itemize}
We point out here that the IID assumption used to simplify Algorithm \ref{alg:alg1} is not mathematically accurate. Using (\ref{eq:P3}) and (\ref{eq:DL_SINR20}), it can be seen that the precoder $\bm{\sft}^\star_\kn$ does depend on $\bm{\sfR}_\ijj \forall_{i,j}$. For $j \neq n$, this dependence is weak because of the $1/\SC$ scaling. Our numerical results in Section \ref{sec:results} demonstrate that this assumption is fairly accurate and does not affect the performance.

 \vspace{-0.4cm}
\begin{algorithm}[b]
\caption{Per sub-carrier solution to (\ref{eq:DL_SINR}) }\label{alg:alg2}
 1) \textbf{for} $n = 1:\SC$ \vspace{.05cm}  (parallelizable)   \\
 2) \hspace{0.3cm} \text{Initialize:} $t = 0, \bm{\sfp}_n^{\star(0)} = \bm{0}_{K} , P_{\BS} , \epsilon$ \vspace{.05cm}  \\
 3) \hspace{0.3cm} \textbf{while} $\lambda_{\max}^{(t-1)} - \lambda_{\max}^{(t)} \geq \epsilon$ \vspace{.05cm}    \\
 4) \hspace{0.6cm}  $\forall_k \text{ } {\bm{\sft}}_\kn^{\star(t)} = \bm{\sfv}_{\max}\left( \bm{\sfR}_\kn,  \bm{\sfS}_\kn(\bm{\sfp}_n^{\star(t-1)})\right)$   \\
 5) \hspace{0.6cm}  $\forall_k \text{ } {\bm{\sft}}_\kn^{\star(t)} = \bm{\sft}_\kn^{\star(t)} / \| \bm{\sft}_\kn^{\star(t)} \|_2 $ \\
 6) \hspace{0.6cm}  ${\bm{\Lambda}}({\bm{\sfT}}_n^{\star(t)}, {P}_{\BS}) \bm{\sfp}_{n_\text{ext}}^{\star(t)} =  \lambda_{\max}^{(t)} \bm{\sfp}_{n_\text{ext}}^{\star(t)}$  \\
 7) \hspace{0.6cm}  $\bm{\sfp}_n^{\star(t)} = \bm{\sfp}_{n_\text{ext}}^{\star(t)}[1,\dots,K] / \bm{\sfp}_{n_\text{ext}}^{\star(t)}[K+1]$  \vspace{.05cm}\\
 8) \hspace{0.3cm} \textbf{end while} \vspace{0cm}\\
 9)  \hspace{0.3cm} ${\bm{\Upsilon}}({\bm{\sfT}}_n^{\star(t)}, {P}_{\BS}) \bm{\sfq}_{n_\text{ext}}^{\star} =  \lambda_{\max}^{(t)} \bm{\sfq}_{n_\text{ext}}^{\star} $  \vspace{.05cm} \\
10) \hspace{0.1cm} $\bm{\sfq}_n^{\star(t)} = \bm{\sfq}_{n_\text{ext}}^{\star(t)}[1,\dots,K] / \bm{\sfq}_{n_\text{ext}}^{\star(t)}[K+1]$\\
 11) \textbf{end for}
\end{algorithm}
 \vspace{-0cm}

\vspace{-0.3cm}
\subsection{Optimized dithering by dummy users} \label{sec:dummy} 
The UL-DL duality proof in Section \ref{sec:duality} relied on the quantization noise being uncorrelated resulting from the small angle approximation introduced in Section \ref{sec:uqn}. The small angle approximation is accurate when the off-diagonal elements of the covariance matrix of the signal before quantization are small compared to the diagonal entries. This is not true when the number of users is small (in DL) or the per-antenna SQINR is high (in UL). This makes the quantization noise correlated which in turn limits the achievable SQINR due to constructive interference. We ensure that this approximation remains true under all conditions by adding optimized dithering to the system. Dithering is introduced in the form of dummy users operating in the null space of the true users with their own individual SQINR constraints. The amount of dithering added is proportional to the power allocated to the dummy users which depends on their target SQINRs. We convert this into a scalar optimization problem by forcing the SQINR constraint of all dummy users on all sub-carriers to be the same. This problem is solved using a simple line search method by starting off from a small value for the dummy user SQINR constraint and then increasing it in small increments till the minimum of all true user SQINRs is increasing. We do not focus on this aspect of the problem in the results presented in this paper. We refer the reader to \cite{pw2} for a more detailed description of the benefits of adding dummy users~to~the~system.

\section{Results and discussion} \label{sec:results}
In this section, we present numerical results for the proposed algorithm and compare it with existing linear and non-linear precoding methods in terms of the achievable rate and coded BER.

\vspace{-0.3cm}
\subsection{Simulation setup}
We consider a setting where individual users are uniformly distributed (IID across realizations) in a $120^\circ$ sector around a BS (located at the origin 25 m above the ground) from a minimum distance of 50 m to a maximum distance of 150 m. We draw channel realizations from the 3GPP Urban-Macro line-of-sight and non-line-of-sight (3GPP 38.901 UMa LoS and 3GPP 38.901 UMa NLoS) channel models  implemented in Quadriga \cite{Quadriga}. For some of our results, we limit the minimum spatial separation (in degrees) between users. Quadriga implements underlying random variables in way that they are correlated across space and time. Limiting the separation between users hence limits the correlation between their channel coefficients. The channel coefficients and delays for each channel realization are converted to a complex baseband channel with $L = 8$ taps by sampling from a truncated sinc pulse. We empirically verified that $L = 8$ captures the delay spread of the channel realizations for our considered bandwidth. Next, we obtain the frequency domain channel for each sub-carrier from the complex baseband channel by taking its $\SC$-point FFT.  The important simulation parameters (unless stated otherwise) are given in Table \ref{table:table2}.

\begin {table}[h]
\begin{center}
\begin{tabular}{ | >{\centering\arraybackslash} m{3.6cm} | >{\centering\arraybackslash} m{4.8cm} | }
  \hline 			
  {Quadriga channel model} & {$\text{3GPP 38.901 UMa LoS / NLoS}$} \\  			
  \hline 
  { Number of antennas ${\BS}$} & 32 \\  			
  \hline 
  { $\text{Antenna element pattern}$} & $\text{0 dBi omni-directional}$ \\  			
  \hline 
  { Total transmit power $P_{\BS}$} & 40 dBm \\  			
  \hline 
  {Carrier frequency $f_c$} & 60 GHz \\  			
  \hline 
  {Bandwidth $B$} & 100 MHz (LoS) / 20 MHz (NLoS) \\  			
  \hline 
  {Number of subcarriers $\SC$} & 32 \\  			
  \hline 
  {Cyclic prefix length $\CP$} & 8 \\  			
  \hline 
  {SQINR constraint $\{\gamma_\kn\}$} & \{ 3 dB \} \\  			
  \hline 
  {CEQ resolution b } & \{ 2 , 3 , $\infty$\} \\  			
  \hline 
\end{tabular}
\end{center}
\vspace{-0.3cm}
\caption{Important simulation parameters.}
\vspace{-1.1cm}
\label{table:table2} 
\end{table}

\subsection{Benchmark strategies}
We compare with ZF precoding \cite{StuderOFDM} as a benchmark for our proposed technique with the DL precoding matrix given by $\bar{\bm{\sfT}} =  \bar{\bm{\sfH}}^\Her (\bar{\bm{\sfH}} \bar{\bm{{\sfH}}}^\Her)^{-1}$. We choose two ways to allocate the per-antenna power allocation matrix $\bar{\mathbf{Q}}_\PA$ in the CEQ system given by 

\begin{itemize}

\item \emph{ZF Opt-Pwr}: The power allocation vector $\bm{\sfq}$ is obtained using the optimal DL power allocation procedure described in \ref{sec:power} for $\bar{\bm{\sfT}}^\star = [ \bar{ \bm{\sft} }_{1,1} \dots \bar{ \bm{\sft} }_{K,\SC} ]$ with $\bar{ \bm{\sft} }_\kn= { \bm{\sft} }_\kn / \| { \bm{\sft} }_\kn \|_2$. The per-antenna power allocation is then given by $\QPA = \bm{\sfI}_{\SC} \otimes { \diag \left( \frac{1}{\SC} \sum_{n = 1}^{\SC} \bm{\sfT}_n^\star {\bm{\sfQ}_n} \bm{\sfQ}_n^\Her \bm{\sfT}_n^{\star\Her} \right)^{\frac{1}{2}}}$. 


\item \emph{ZF Equal-Pwr}: The BS divides the total transmit power equally across all the antennas and all samples within one OFDM symbol. The per-antenna power allocation is given by $\bar{\mathbf{Q}}_\PA = \diag \left( {\frac{P_{\BS}}{\NBS}} \bm{1}_{\NBS\SC} \right)^{\frac{1}{2}}$. This has been considered before in existing literature \cite{amodh,StuderOFDM}.
\end{itemize}
For a fair comparison, we also add dithering to ZF precoding by projecting Gaussian noise with variance  $\sigma_d^2$ onto the null space of the channel matrix $\bar{\bm{\sfH}}$ and adding it to the DL signal before the CEQ operation \cite{amodh}. Appropriate value for $\sigma_d^2$ is found using a simple line search method similar to what is described for the dummy users in Section \ref{sec:dummy}. We also compare with the unconstrained fully digital ZF and regularized ZF precoders (denoted by UnQZF and UnQRZF) to establish a baseline for all low-resolution algorithms.
 
As mentioned in Section \ref{sec:intro}, all non-linear methods \cite{StuderSQUID,MSM1,MSM2,MSM3,magic,magic2,access} perform roughly the same. We choose SQUID \cite{StuderSQUID} and MAGIQ \cite{magic2} as the representative non-linear methods to compare with the proposed solution. SQUID is based on Douglas-Rachford splitting of a squared $\ell_\infty$-norm relaxation of the symbol MMSE problem. MAGIQ is based on a coordinate wise minimization of the time domain MSE. We refer the reader to \cite{StuderSQUID} and \cite{magic2} for a more detailed description of the algorithms. The hyperparameters involved in the implementation of SQUID were chosen according to the guidelines given in \cite{StuderSQUID}. In the results that follow, the proposed Algorithm \ref{alg:alg1} and its per sub-carrier version Algorithm \ref{alg:alg2} are denoted as `Max-min' and `Max-min SC'.  `Max-min SC Equal-Pwr' denotes Algorithm \ref{alg:alg2} with equal per-antenna power allocation. 
 

\vspace{-0.3cm}
\subsection{SQINR results}\label{sec:SQINR}
We use the ergodic sum rate given by $\mathbb{E}\left[ \frac{1}{\SC} \sum_{k=1}^K \sum_{n=1}^{\SC} \text{log}_2 ( 1 + \gamma_\kn^{\text{DL}} ) \right]$ and the ergodic minimum rate given by $\mathbb{E}[ \min_{1\leq k \leq K}  \frac{1}{\SC} \sum_{n=1}^{\SC} \text{log}_2 ( 1 + \gamma_\kn^{\text{DL}} ) ]$ as the metrics of choice for our results. The expectation is computed by averaging across IID channel realizations each of which corresponds to an IID user location realization. $\gamma_\kn^{\text{DL}}$ is calculated using (\ref{eq:downlinkSINR_covar}) with the exact \emph{arcsine} law and without the small angle approximation. It should also be pointed out that our definitions of the sum and minimum rate do not account for the loss due to the CP. 

The ergodic sum rate is shown in Fig.~\ref{fig:meanCap} as a function of the number of active users. It can be observed that the proposed strategy and ZF (with optimal power allocation) perform similarly when the number of users is small. All variants of the proposed solution, however, perform increasingly better than ZF with the number of active users. For $K = 14$, the performance in terms of the sum rate differs by about 6-7 b/s/Hz depending on the CEQ resolution. Another important takeaway from Fig.~\ref{fig:meanCap} is that the per sub-carrier version of the proposed algorithm (Max-min SC) achieves the same performance as that of Algorithm \ref{alg:alg1} thus justifying the assumption made in Section \ref{sec:joint2}. Lastly, the performance difference between $b=3$ and $b=\infty$ is not significant for both ZF and the proposed solution.

\begin{figure}[t]
    	\begin{center}
    		\includegraphics[width=.5\textwidth,clip,keepaspectratio]{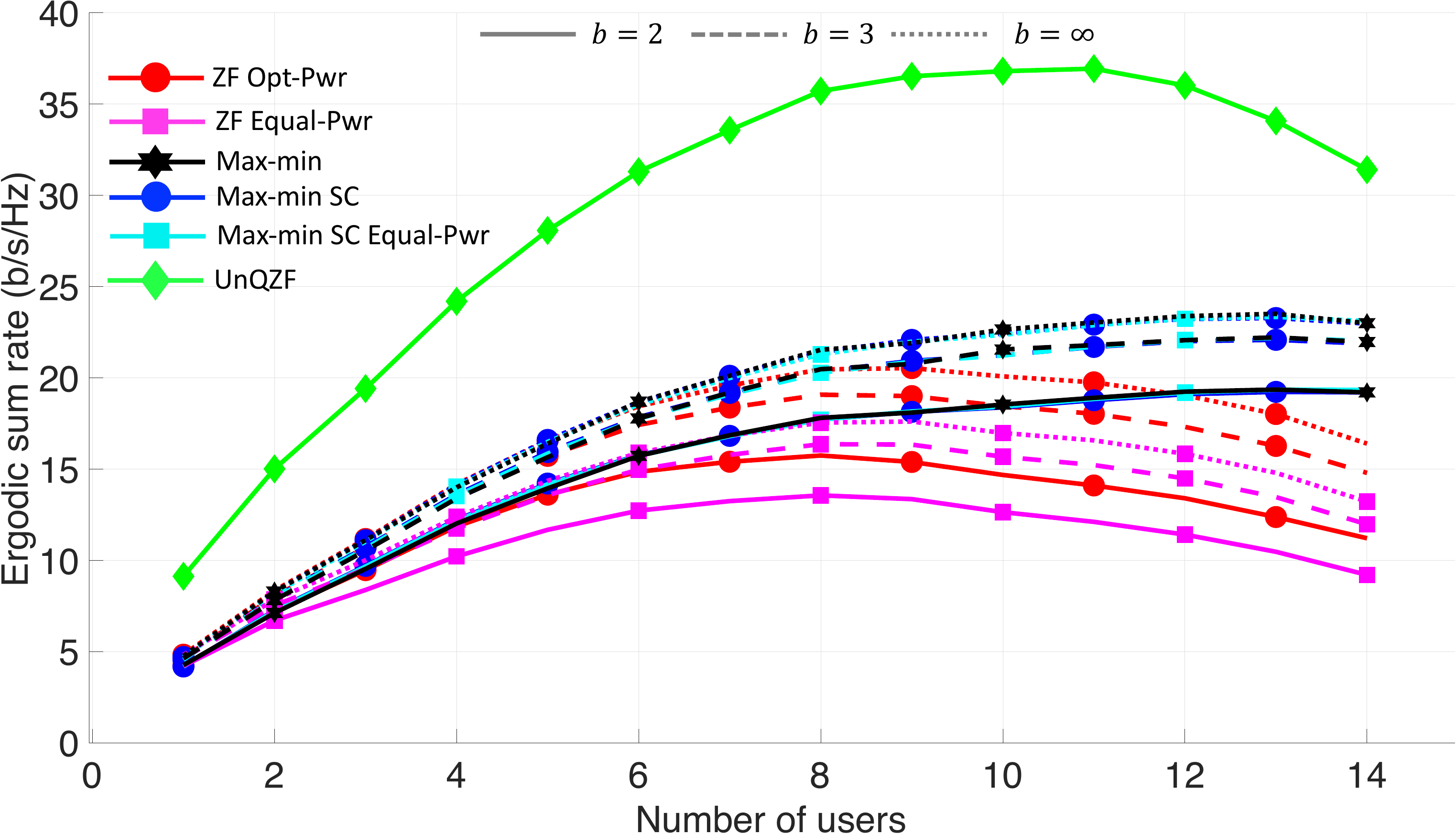}
    	\end{center}
	 \vspace{-.5cm}
    	\caption{Ergodic sum rate versus number of users. All versions of the proposed algorithm increasingly outperform ZF as the number of users increases. Max-min SC and Max-min SC Equal-Pwr achieve performance similar to Max-min. Lastly, the difference in performance between $b=3$ and $b=\infty$ is not significant.}
    	 \label{fig:meanCap}
	 \vspace{-.6cm}
\end{figure}

The linearity of the power amplifier over the dynamic range of the input signal is a critical issue especially for OFDM due to its high peak to average power ratio (PAPR). The design of linear amplifiers with a large dynamic range is further exacerbated at mmWave frequencies. The optimal power allocation described in Section \ref{sec:power} requires amplifiers which are linear over large bandwidths which contradicts the motivation of using low-resolution quantizers. In this context, an equal per-antenna power allocation is a useful solution to further reduce hardware complexity. With this design choice, amplifiers can be made to operate in their saturation region at a fixed power point without any back-off and further reduce the total power consumption. Fig. \ref{fig:meanCap} also illustrates the ergodic sum rate for the proposed solution and ZF with \emph{equal} per-antenna power allocation versus number of users. Looking at the {equal} per-antenna power allocation curves in Fig. \ref{fig:meanCap}, it can be observed that the performance is significantly deteriorated for ZF based precoding compared to optimal power allocation. The proposed solution on the other hand performs the same as optimal per-antenna power allocation. This is another advantage of the proposed method from a power amplifier and circuit design perspective.

\begin{figure}[t]
    	\begin{center}
    		\includegraphics[width=.5\textwidth,clip,keepaspectratio]{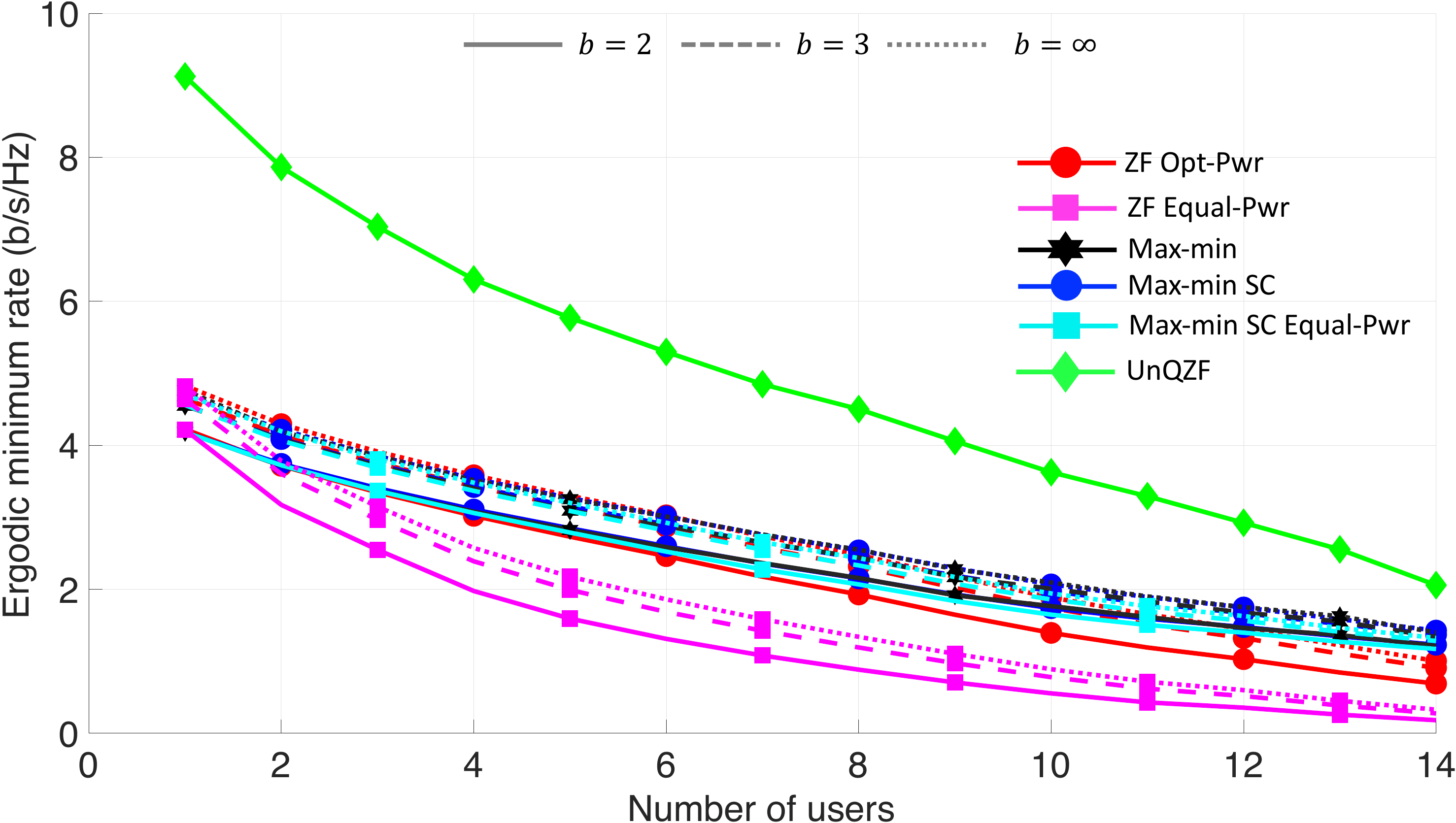}
    	\end{center}
	 \vspace{-.4cm}
    	\caption{Ergodic minimum rate  versus number of users. The performance of ZF deteriorates more as number of users increase. As observed in Fig. \ref{fig:meanCap}, Max-min SC and Max-min SC Equal-Pwr perform the same as Algorithm \ref{alg:alg1}. Furthermore, $b=3$ and $b=\infty$  achieve similar performance.}
    	 \label{fig:minCap}
	 \vspace{-.6cm}
\end{figure}

 The ergodic minimum rate is shown in Fig.~\ref{fig:minCap}. It can be seen that the performance of the ZF precoding (including UnQZF) deteriorates more than that of the proposed solution for larger number of users. For $K=14$, the two solutions differ by 0.5-1 b/s/Hz. The improvement in terms of the minimum rate might seem small but can be very important from an outage probability and fairness perspective. Similar to Fig.~\ref{fig:meanCap}, the per sub-carrier and equal per-antenna power allocation versions of the proposed solution perform the same as the version considering all sub-carriers together whereas the performance of ZF with equal per-antenna power allocation drops sharply. From here onwards, we are only going to consider the equal per-antenna power allocation per sub-carrier version (Max-min SC Equal-Pwr) of the proposed solution and ZF with equal per-antenna power due to their practical importance.



\vspace{-0.3cm}
\subsection{BER results} \label{sec:BER}
Now we look at the coded BER results for transmit symbols drawn from unit-norm normalized QPSK and 16-Quadrature Amplitude Modulation (16-QAM) constellations in an IID manner. For each channel realization, the BER is calculated by generating data bits that span 60 OFDM symbols. The data bits are encoded using a convolution encoder and then randomly interleaved across the sub-carriers. On the receive side, we first use the blind estimation method from \cite{hela} in  which a block of received symbols is used to estimate the appropriate scaling factor before sending the symbols to a max-log detector. The soft output is then fed into a max-log BCJR decoder made available by Christoph Studer in the process of his work in \cite{StuderSQUID} which we are also comparing against. The resulting BER is then further averaged over multiple channel realizations.

\begin{figure}[t]
    	\begin{center}
    		\includegraphics[width=.5\textwidth,clip,keepaspectratio]{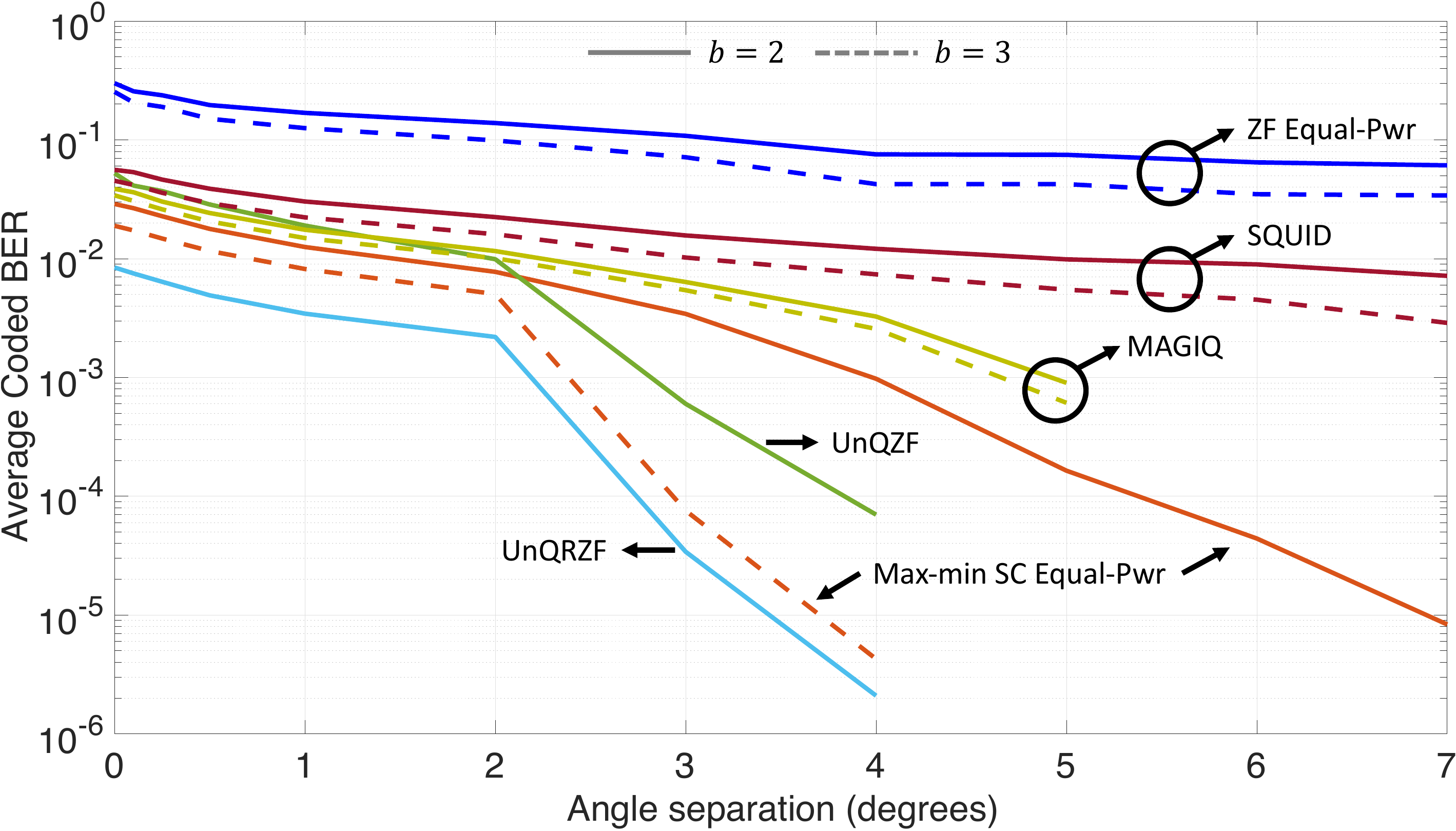}
    	\end{center}
	 \vspace{-.5cm}
    	\caption{Coded BER for QPSK constellation with a rate 1/2 convolution code for $K = 10$, $P_{\BS} = 36$ dBm and $\SG = 24$ against the minimum angle separation between users. The proposed solution outperforms all other techniques and even performs better than UnQZF for $b = 3$. }
    	 \label{fig:ber1}
	 \vspace{-.6cm}
\end{figure}

Fig. \ref{fig:ber1} illustrates the coded BER for the proposed algorithm and benchmark strategies for symbols drawn from the QPSK constellation with a rate 1/2 convolution code, $\SG = 24$, $K = 10$ and $P_{\BS} = 36$ dBm against the minimum spatial separation between active users. It can be seen that the proposed solution outperforms all CEQ precoding algorithms including the UnQZF (for $b = 3$). Furthermore, when the minimum separation between users is not limited, all algorithms (including the unquantized setting) deteriorate in performance due to the increased correlation.

Next, we look at the performance for a fixed minimum angle separation of $2^\circ$ as the ratio of number of BS antennas to the number of users is varied. The solid (dashed) set of lines in Fig. \ref{fig:ber2} illustrates coded BER against the number of BS antennas for $K =10$ ($K = 5$), QPSK constellation with a rate 1/2 convolution code, $\SG = 24$, $P_{\BS} = 40$ dBm and $b = 3$. It can be seen that the proposed solution achieves the best performance out of all CEQ precoding solutions. At relatively higher ratios of the number of BS antennas to the number of users (i.e. for $K = 10$), both SQUID and MAGIQ are not able to achieve acceptable values of coded BER ($\approx 10^{-4}$).

\begin{figure}[b]
	 \vspace{-.5cm}
    	\begin{center}
    		\includegraphics[width=.5\textwidth,clip,keepaspectratio]{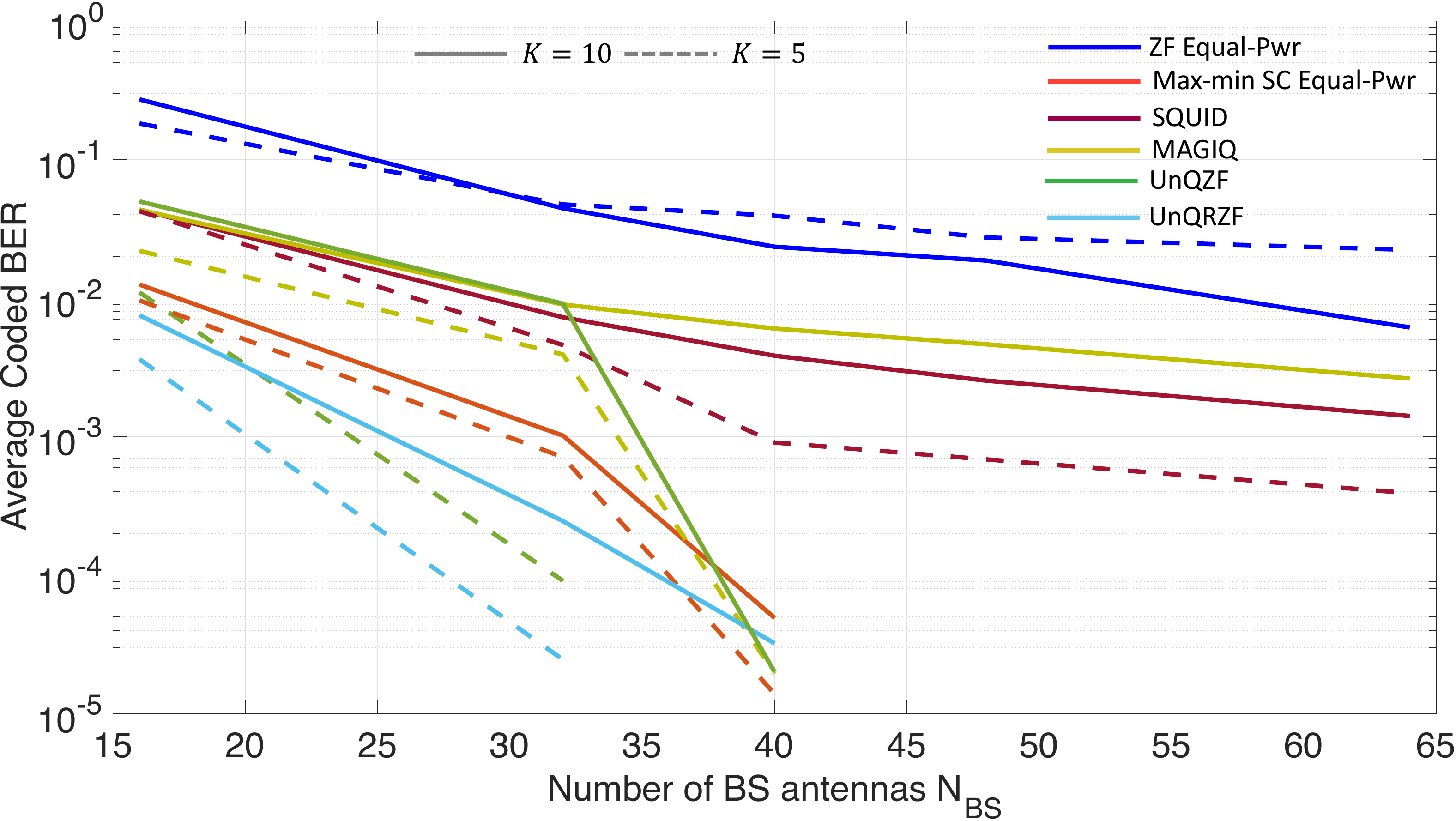}
    	\end{center}
	 \vspace{-.5cm}
    	\caption{Coded BER for $K = 5$ and $K = 10$ (for a fixed minimum angle separation of $2^\circ$) with QPSK constellation with a rate 1/2 convolution code, $\SG = 24$, $P_{\BS} = 40$ dBm and $b = 3$ versus the number of BS antennas. The proposed solution achieves the best performance. }
    	 \label{fig:ber2}
	 \vspace{-.6cm}
\end{figure}

\begin{figure}[t]
    	\begin{center}
    		\includegraphics[width=.5\textwidth,clip,keepaspectratio]{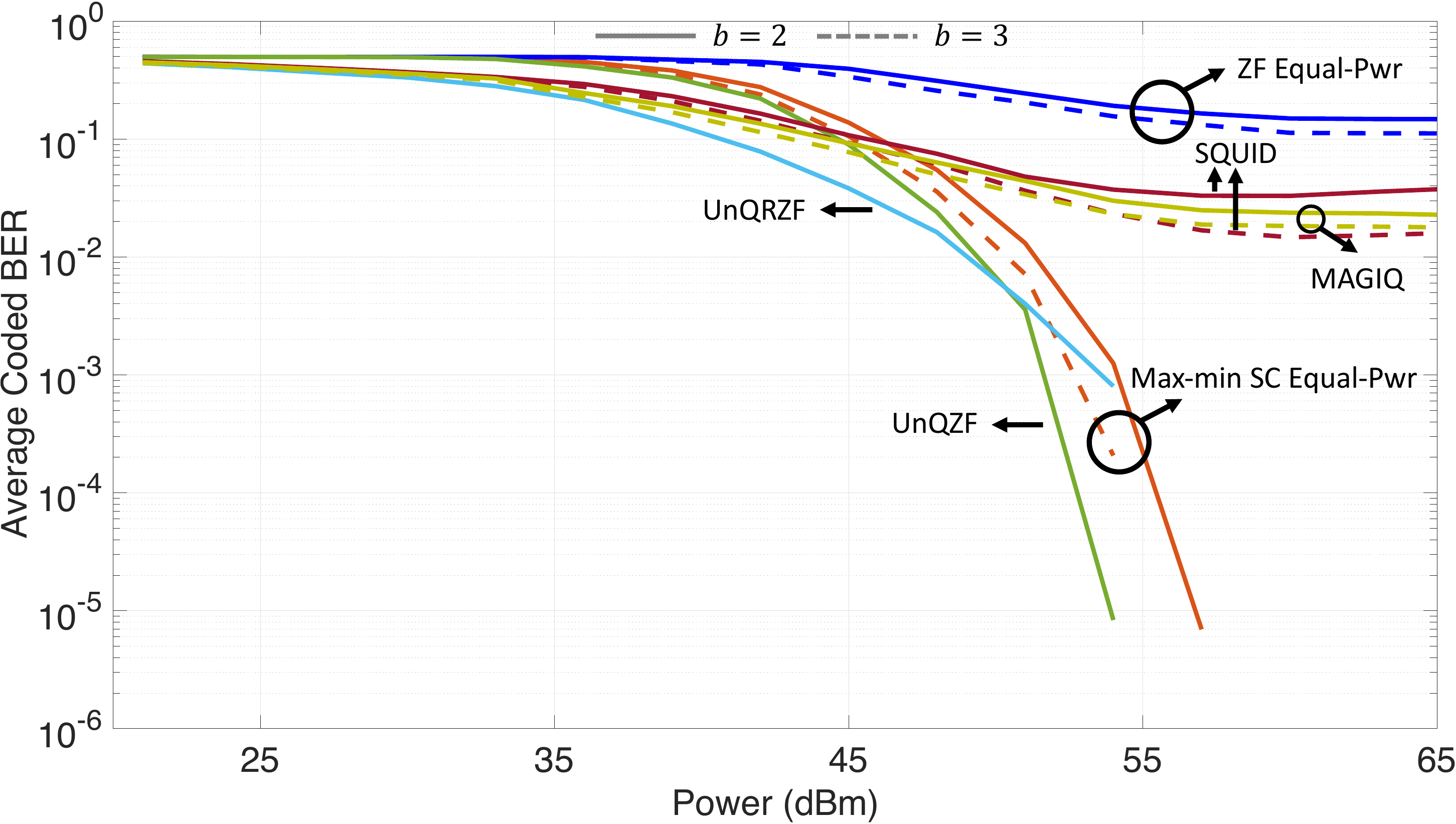}
    	\end{center}
	 \vspace{-.5cm}
    	\caption{Coded BER for QPSK constellation with a rate 1/2 convolution code for $K = 10$ and $\SG = 24$ against the transmit power $P_{\BS}$ for the 3GPP 38.901 UMa NLoS channel model. The proposed solution operates close to the unquantized benchmarks. SQUID and MAGIQ floor out at BERs close to $10^{-2}$ due to the relatively low number of BS antennas to number of users~ratio.}
    	 \label{fig:ber3}
	 \vspace{-.6cm}
\end{figure}

Next, we look at the performance for the 3GPP 38.901 UMa NLoS channel model as the transmit power is varied. Fig. \ref{fig:ber3} illustrates the uncoded BER for the proposed algorithm and benchmark strategies for symbols drawn from a QPSK constellation with a rate 1/2 convolution code, $K = 10$ and $\SG = 24$ versus the transmit power $P_{\BS}$. The first thing to observe is that the power value where the BER goes to 0 or stagnates ($\approx 51 $ dBm) is relatively higher compared to the previous results. This difference can be attributed to the NLoS channel model. We observed that the highest mode of the NLoS channel model generated by Quadriga was a factor of 10 or so less compared to the LoS channel model. Nevertheless, the proposed solution outperforms the non-linear algorithms and operates quite close to the unquantized benchmarks. SQUID and MAGIQ outperform the Max-min solution only at the lower end of the transmit power but that is not of interest due to the high BER. Furthermore, SQUID and MAGIQ floor out at BERs close to $10^{-2}$ due to the relatively low number of BS antennas to number of users~ratio.

In this paper, we assumed the availability of channel state information at the BS and did not explicitly account for the loss due to the channel estimation error. Similar to \cite{Studer,StuderOFDM,magic2}, we look at the performance of all algorithms as the normalized channel estimation error is varied from 0 to 1 to bridge this gap. Channel estimation with low-resolution ADCs is a completely separate topic with a rich existing literature \cite{CE1,CE2} and can not be addressed in this paper due to limited space. The dashed (solid) lines in Fig. \ref{fig:ber4} illustrate the coded BER for the proposed algorithm and benchmark strategies for symbols drawn from the QPSK (16QAM) constellation with a rate 1/2 convolution code, $K = 5$, $\SG = 16$, $P_{\BS} = 55$ dBm (40 dBm) and $b = 2$ ($b = 3$) for the NLoS (LoS) channel model. It can be observed that the proposed method achieves the best performance over a wide range of the normalized channel estimation error for both LoS and NLoS channel models with lower and higher order constellations.

\begin{figure}[t]
    	\begin{center}
    		\includegraphics[width=.5\textwidth,clip,keepaspectratio]{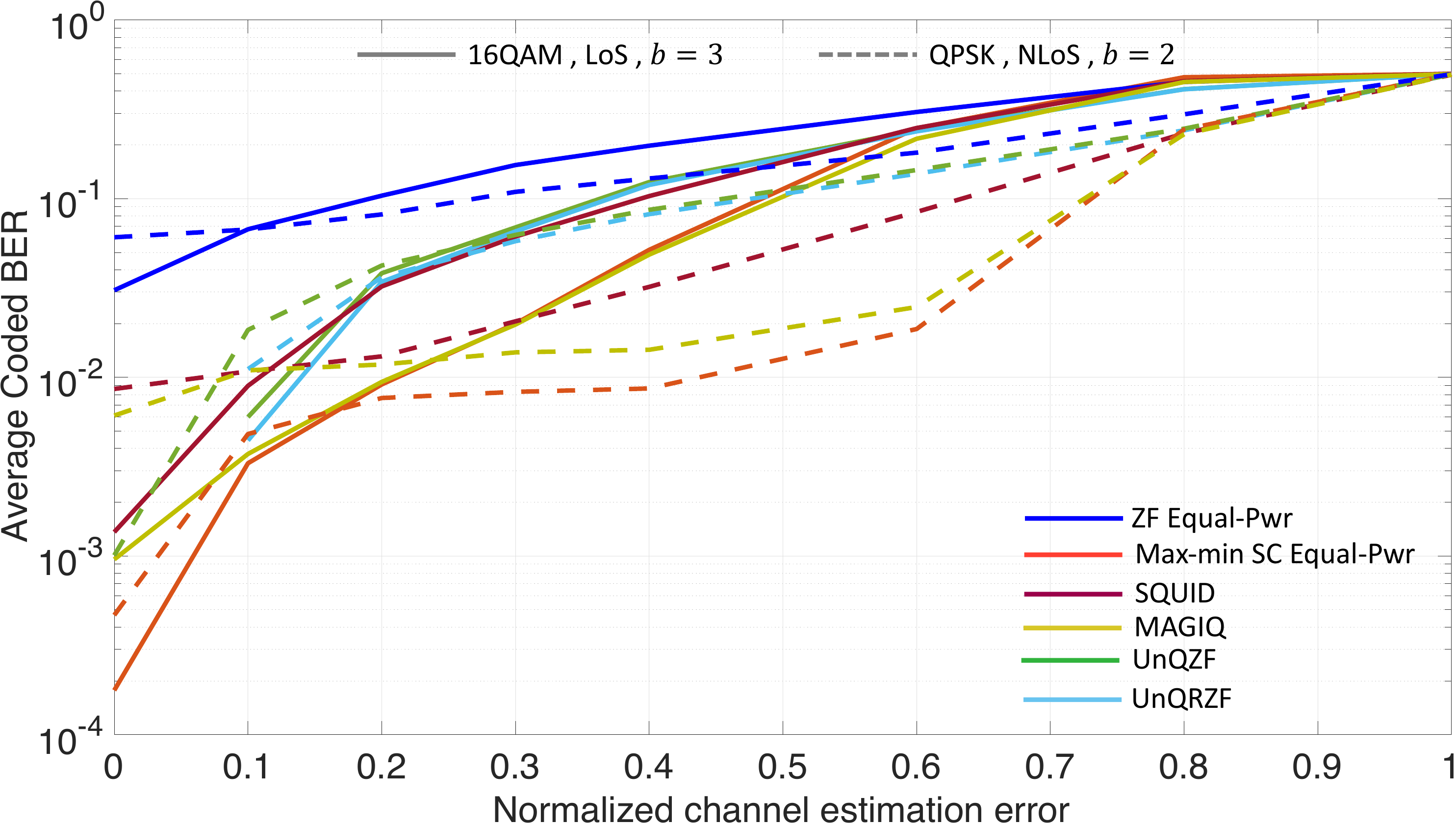}
    	\end{center}
	 \vspace{-.4cm}
    	\caption{Dashed (solid) lines illustrate coded BER for QPSK (16QAM) constellation with a rate 1/2 convolution code for $K = 5$, $\SG = 16$, $P_{\BS} = 55$ dBm (40 dBm) and $b = 2$ ($b = 3$) on the NLoS (LoS) channel model. This result demonstrates that the proposed solution outperforms existing algorithms over both LoS and NLoS channel models for lower and higher order constellations over a wide range of the normalized channel estimation error.}
    	 \label{fig:ber4}
	 \vspace{-.7cm}
\end{figure}

The results presented in Section \ref{sec:SQINR} and \ref{sec:BER} demonstrate that the proposed solution improves considerably on existing linear precoding techniques and even outperforms the benchmark non-linear precoding methods \cite{StuderSQUID,magic2}. This is in contrast to the results published in existing literature \cite{StuderSQUID,MSM1,MSM2,MSM3,magic,magic2,access} which (with the exception of one result in \cite{magic2}) have been obtained on channels with IID Gaussian entries. Our experiments on IID Gaussian channels (not presented in this paper) confirm the results published in prior work. Various factors, however, limit the performance of the non-linear precoding methods when using the realistic channel models presented in this work. The self-correlation of a user's channel entries across the antenna elements adversely affects all methods as also observed in \cite{magic2}. The cross-correlation between channels of different users is another limiting factor that seems to affect the non-linear methods more especially at lower number of BS antennas to users ratio as seen in Fig. \ref{fig:ber1} and \ref{fig:ber2}. Another factor that limits the performance of non-linear methods is the mismatch between channels of different users. A user with a weaker channel is adversely affected a lot more by \cite{StuderSQUID,magic2} bringing down the average performance significantly whereas max-min tries to maximize the performance of the weak user.

 The flexibility to assign per user/sub-carrier SQINRs makes the proposed solution even more attractive. These constraints were all set equal in the results presented in this manuscript. Further improvement might be possible by tweaking these constraints in favor of users/sub-carriers with better channel quality. For example, a lot of power might be wasted if two users cause significant interference to each other. In that setting, the SQINR constraint of one of the users can be reduced lowering its achievable SQINR/BER but improving the SQINR/BER performance of all other users. This is a scheduling problem that can be dealt with by the system operator at a higher level and then integrated into the presented framework using the individual SQINR constraints.

\section{Conclusion} \label{sec:conc}
In this paper, we presented a linear precoding based solution to the MU-MIMO-OFDM DL precoding problem under CEQ DAC constraints at the BS and per-user SQINR constraints. Our proposed solution, based on UL-DL duality, maximized the minimum ratio of the achieved SQINR to target SQINR over all sub-carrier of all users. We further reduced the complexity of the proposed algorithm by parallelizing it over the individual sub-carriers. Our results in terms of the ergodic sum and minimum rate showed that the proposed solution outperforms existing linear precoding strategies. Furthermore, we showed that the max-min solution performs better than the non-linear methods in terms of coded BER for the channel models considered in this~paper. 

The analysis carried out in this paper complements our prior work by generalizing the UL-DL duality principle under 1-bit hardware constraints from flat fading channels to frequency selective channels for CEQs. The key insight in both these results was that the quantization noise has to be uncorrelated. This was enforced by adding optimized dithering to the system in the form of dummy users which operate in the null space of the active users. Our future work in this direction will incorporate ideas from this paper, such as optimized dithering and per-user SQINR constraints, to improve the performance of non-linear algorithms. Another interesting line of work is to conduct a detailed study about the various factors mentioned in this paper that limit the performance of non-linear methods. A detailed evaluation of the proposed and existing approaches from a circuits implementation point of view is another exciting avenue of work.

\vspace{-0.4cm}
\bibliographystyle{IEEEbib}
\bibliography{ref}

\newpage

\section{Supplementary Results} \label{sec:suppResults}
In this supplementary note, we present numerical results similar to those presented in Section \ref{sec:results} but for a larger system. The simulation setup and comparison strategies/metrics stay the same. In this set of results, we only compare with the non-linear algorithm SQUID from \cite{StuderSQUID}. The other non-linear comparison algorithm from the original manuscript, MAGIQ \cite{magic2}, takes an infeasible amount of time to run for this larger system. We, however, expect its performance to be similar to SQUID as observed from the results presented in Section \ref{sec:results} and as demonstrated in \cite{magic2}. 

\vspace{-.3cm}
\subsection{Simulation setup}
The important simulation parameters (unless stated otherwise) are given in Table \ref{table:table2Supp}.

\begin {table}[h]
\begin{center}
\begin{tabular}{ | >{\centering\arraybackslash} m{3.6cm} | >{\centering\arraybackslash} m{4.8cm} | }
  \hline 			
  {Quadriga channel model} & {$\text{3GPP 38.901 UMa LoS / NLoS}$} \\  			
  \hline 
  { Number of antennas ${\BS}$} & 64 \\  			
  \hline 
  { $\text{Antenna element pattern}$} & $\text{0 dBi omni-directional}$ \\  			
  \hline 
  { Total transmit power $P_{\BS}$} & 35 dBm \\  			
  \hline 
  {Carrier frequency $f_c$} & 60 GHz \\  			
  \hline 
  {Bandwidth $B$} & 100 MHz (LoS) / 50 MHz (NLoS) \\  			
  \hline 
  {Number of subcarriers $\SC$} & 128 \\  			
  \hline 
  {Cyclic prefix length $\CP$} & 32 \\  			
  \hline 
  {SQINR constraint $\{\gamma_\kn\}$} & \{ 3 dB \} \\  			
  \hline 
  {CEQ resolution b } & \{ 2 , 3 \} \\  			
  \hline 
\end{tabular}
\end{center}
\vspace{-.3cm}
\caption{Important simulation parameters.}
\label{table:table2Supp} 
\end{table}

\vspace{-.5cm}
\subsection{Benchmark strategies}
As mentioned before, we will compare with the same strategies as the original paper with the exception of MAGIQ \cite{magic2}.

\subsection{SQINR results}\label{sec:SQINRSupp}

The ergodic sum rate is shown in Fig.~\ref{fig:meanCapSupp} as a function of the number of active users. It can be observed that the gap between proposed strategy and ZF is small when the number of users is small. The proposed solution, however, performs increasingly better than ZF with the number of active users. For $K = 14$, the performance in terms of the sum rate differs by about 2-4 b/s/Hz depending on the CEQ resolution.

\begin{figure}[h]
    	\begin{center}
    		\includegraphics[width=.5\textwidth,clip,keepaspectratio]{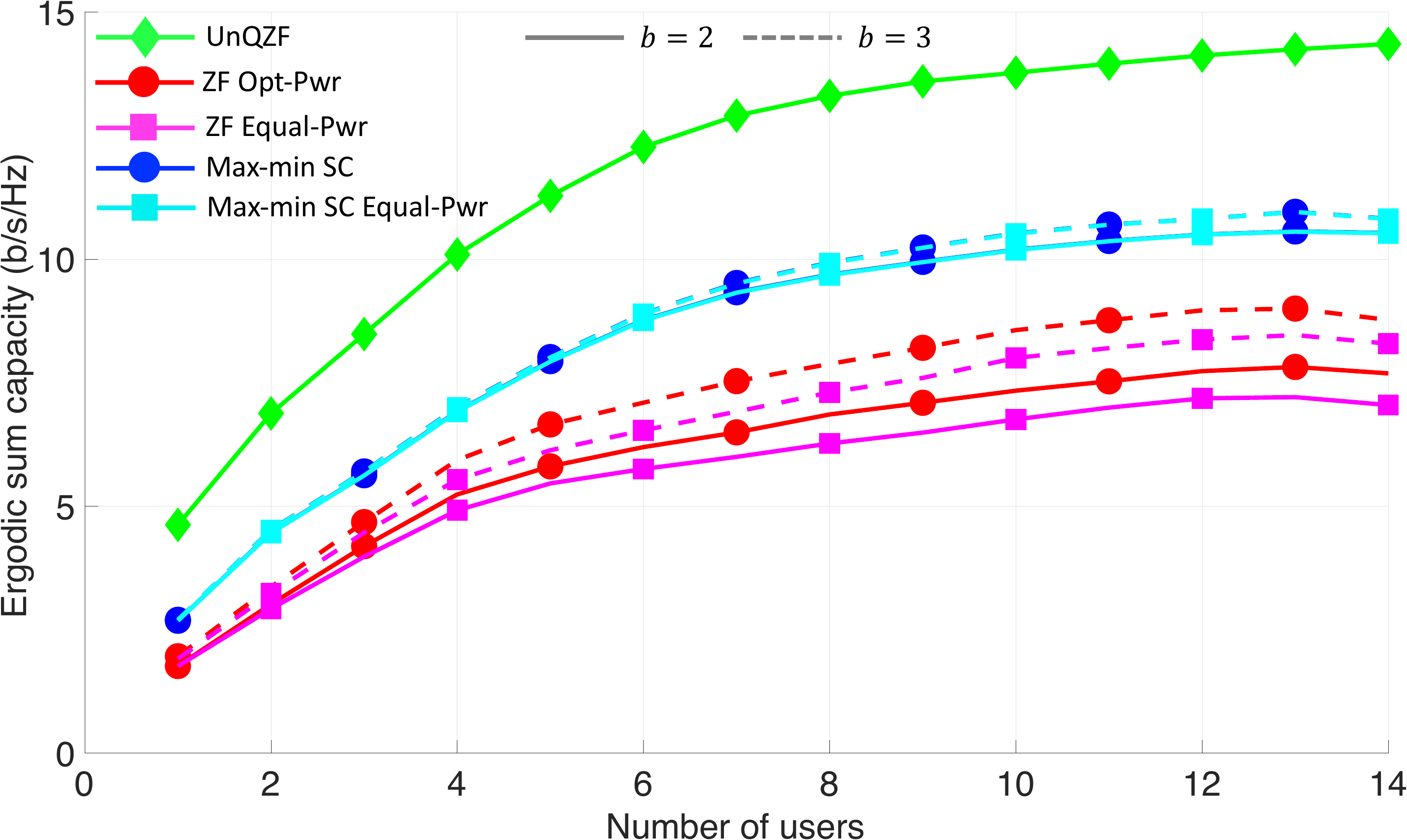}
    	\end{center}
	 \vspace{-.3cm}
    	\caption{Ergodic sum rate versus number of users. The proposed algorithm increasingly outperforms ZF as the number of users increases. Max-min SC and Max-min SC Equal-Pwr achieve similar performance.}
    	 \label{fig:meanCapSupp}
	 \vspace{-.4cm}
\end{figure}

Fig. \ref{fig:meanCapSupp} also illustrates the ergodic sum rate for the proposed solution and ZF with \emph{equal} per-antenna power allocation versus number of users. Looking at the {equal} per-antenna power allocation curves in Fig. \ref{fig:meanCapSupp}, it can be observed that the performance is deteriorated for ZF based precoding compared to optimal power allocation. The proposed solution on the other hand performs the same as optimal per-antenna power allocation. This is another advantage of the proposed method from a power amplifier and circuit design perspective.

\begin{figure}[h]
    	\begin{center}
    		\includegraphics[width=.5\textwidth,clip,keepaspectratio]{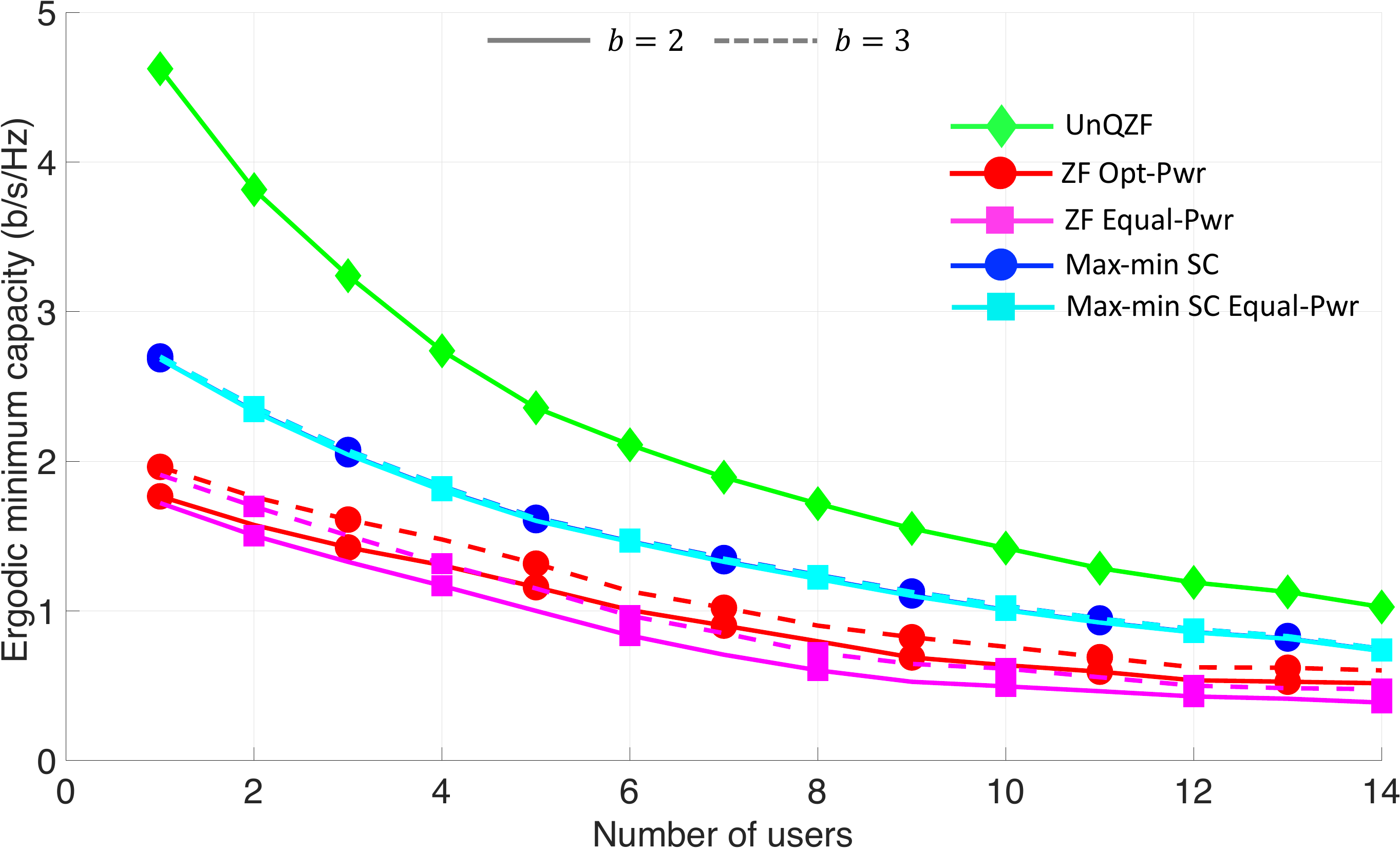}
    	\end{center}
	 \vspace{-.3cm}
    	\caption{Ergodic minimum rate  versus number of users. The performance of ZF deteriorates more as number of users increase.}
    	 \label{fig:minCapSupp}
	 \vspace{-.4cm}
\end{figure}

 The ergodic minimum rate is shown in Fig.~\ref{fig:minCapSupp}. It can be seen that the performance of the ZF precoding deteriorates more than that of the proposed solution for larger number of users. For $K=14$, the two solutions differ by 0.25-0.5 b/s/Hz. The improvement in terms of the minimum rate might seem small but can be very important from an outage probability and fairness perspective. Similar to Fig.~\ref{fig:meanCapSupp}, the optimal per-antenna and equal per-antenna power allocation versions of the proposed solution perform the same. The performance of ZF with equal per-antenna power allocation suffers more degradation. 
 


\vspace{-0.3cm}
\subsection{BER results} \label{sec:BERSupp}

Fig. \ref{fig:ber1Supp} illustrates the coded BER for the proposed algorithm and benchmark strategies for symbols drawn from the 16QAM constellation with a rate 1/2 convolution code, $\SG = 96$, $K = 10$, $P_{\BS} = 52$ dBm, and $b = 3$ against the minimum spatial separation between active users. It can be seen that the proposed solution outperforms all ZF based precoding as well as the non-linear algorithm SQUID. Furthermore, when the minimum separation between users is not limited, all algorithms (including the unquantized setting) deteriorate in performance due to the increased correlation among the user channels.

\begin{figure}[t]
    	\begin{center}
    		\includegraphics[width=.5\textwidth,clip,keepaspectratio]{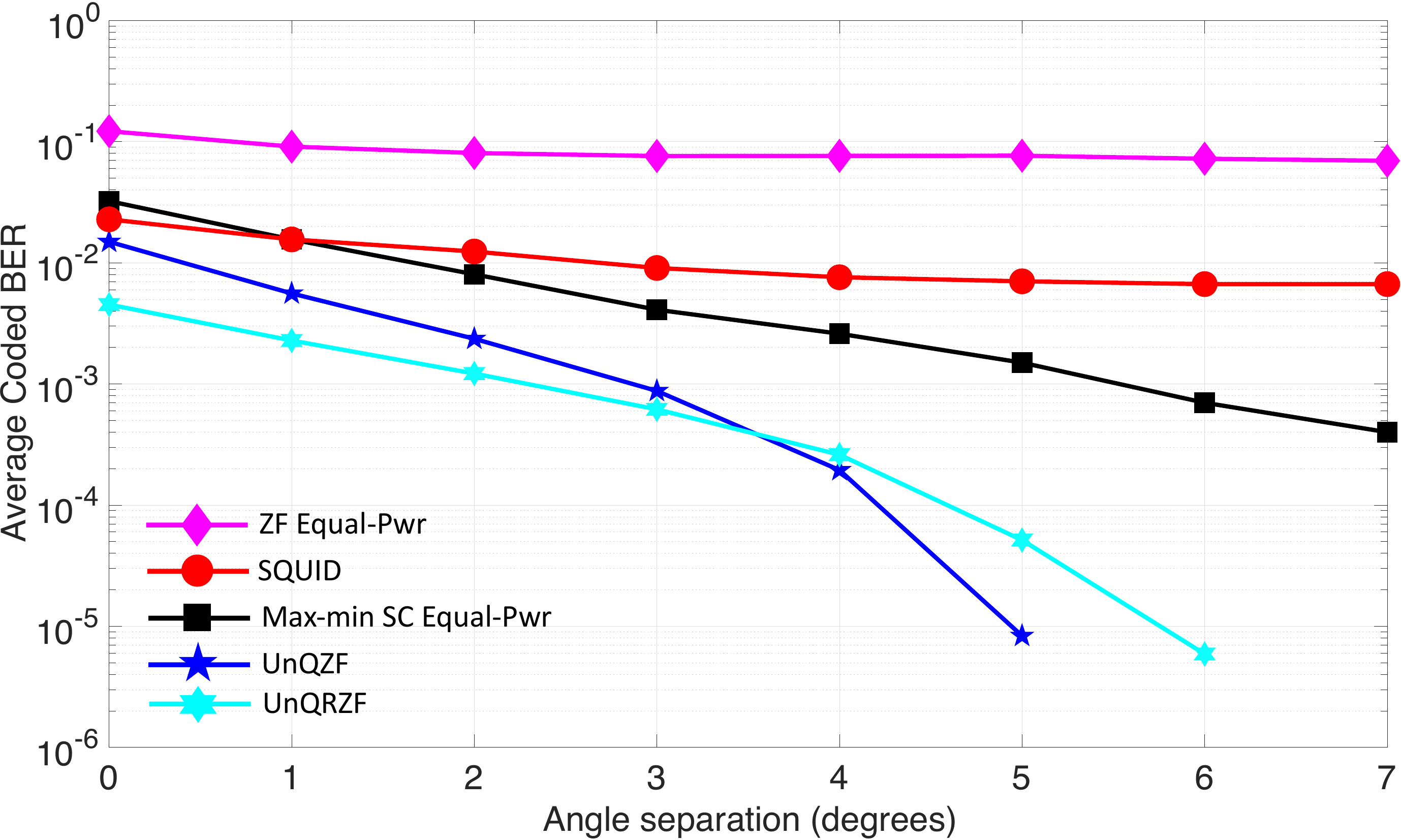}
    	\end{center}
	 \vspace{-.3cm}
    	\caption{Coded BER for 16QAM constellation with a rate 1/2 convolution code for $K = 10$, $P_{\BS} = 42$ dBm, $\SG = 96$, and $b = 3$ against the minimum angle separation between users. The proposed solution outperforms all other techniques. }
    	 \label{fig:ber1Supp}
	 \vspace{-.4cm}
\end{figure}

\begin{figure}[t]
    	\begin{center}
    		\includegraphics[width=.5\textwidth,clip,keepaspectratio]{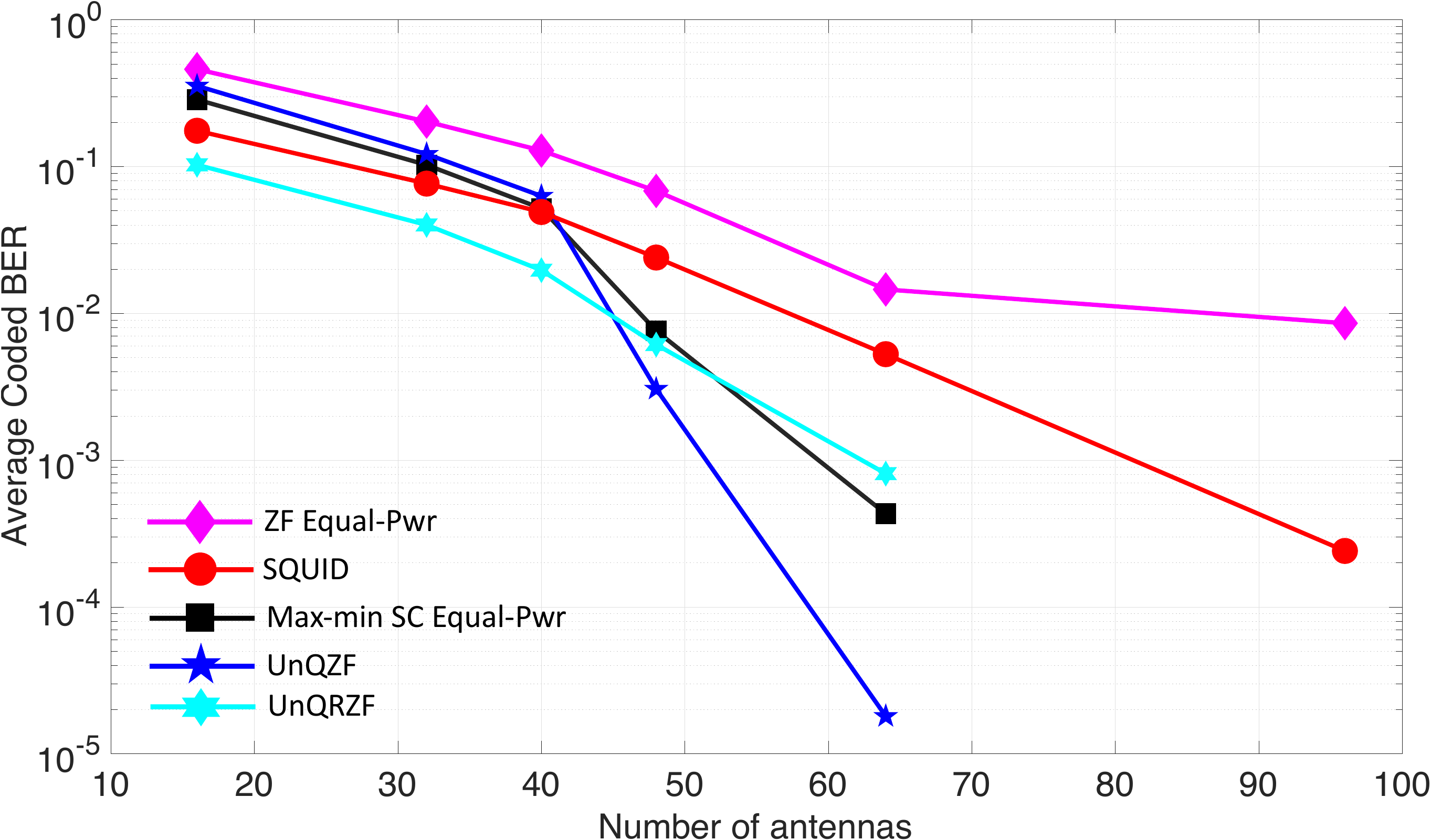}
    	\end{center}
	 \vspace{-.3cm}
    	\caption{Coded BER for $K = 10$ (for a fixed minimum angle separation of $2^\circ$) with QPSK constellation with a rate 1/2 convolution code, $\SG = 96$, $P_{\BS} = 36$ dBm and $b = 3$ versus the number of BS antennas. The proposed solution achieves the best performance. }
    	 \label{fig:ber2Supp}
	 \vspace{-.4cm}
\end{figure}

Next, we look at the performance for a fixed minimum angle separation of $2^\circ$ as the ratio of number of BS antennas to the number of users is varied. Fig. \ref{fig:ber2Supp} illustrates coded BER against the number of BS antennas for $K =10$, QPSK constellation with a rate 1/2 convolution code, $\SG = 96$, $P_{\BS} = 36$ dBm and $b = 3$. It can be seen that the proposed solution achieves the best performance out of all CEQ precoding solutions. At very high ratio of number of BS antennas to the number of users (i.e. for small number of BS antennas), SQUID does slightly better than the proposed solution but those numbers are not meaningful due to the coded BER being greater than 0.1 for all strategies being compared (including the unquantized setting).

Next, we look at the performance as the transmit power is varied. Fig. \ref{fig:ber3Supp} illustrates the uncoded BER for the 3GPP 38.901 UMa LoS channel model  for symbols drawn from a QPSK constellation with a rate 1/2 convolution code, $K = 7$, $\SG = 96$, and $b = 2$ versus the transmit power $P_{\BS}$. It can be observed that the proposed solution operates quite close to the unquantized setting and outperforms both SQUID and ZF based CEQ precoding.

\begin{figure}[h]
    	\begin{center}
    		\includegraphics[width=.5\textwidth,clip,keepaspectratio]{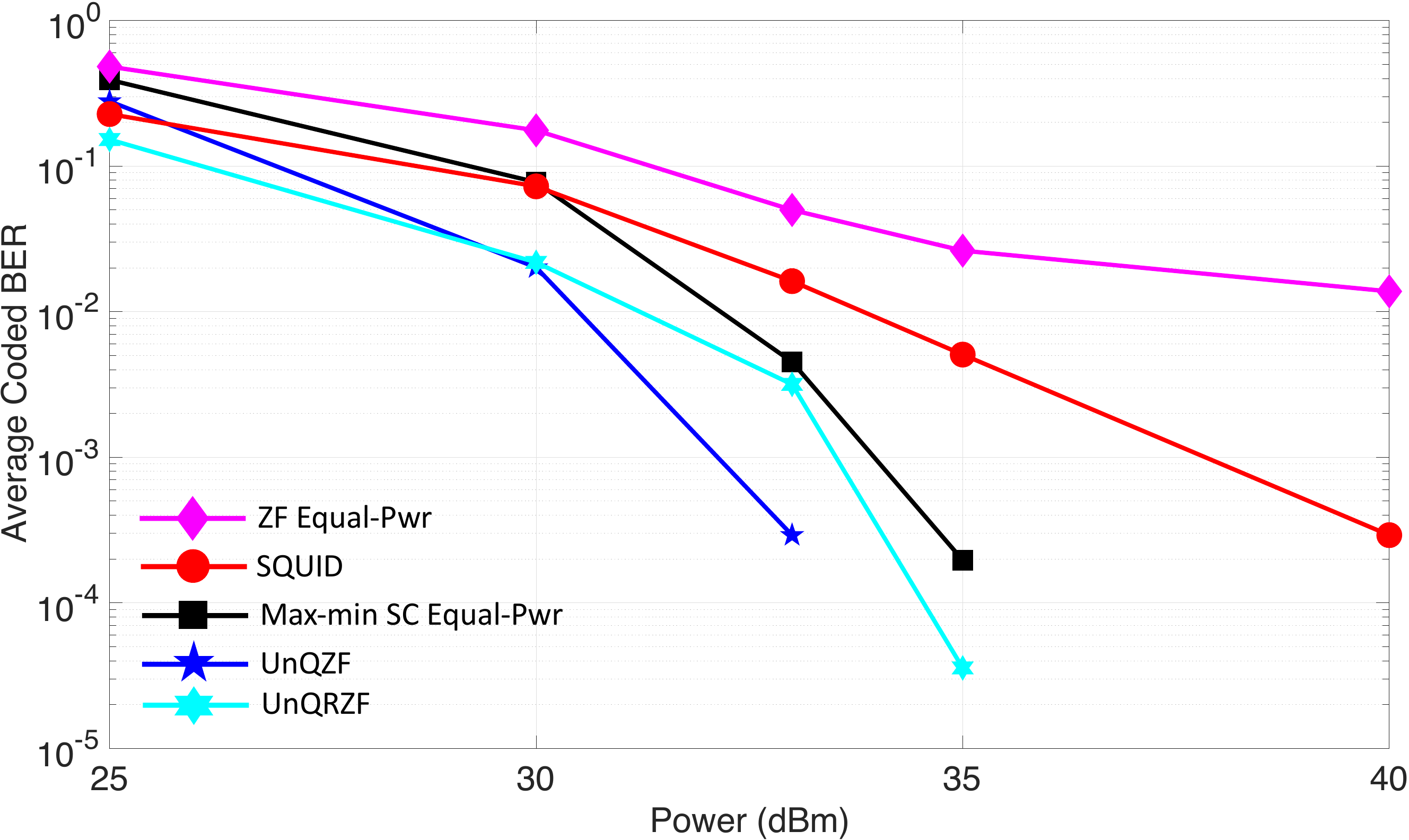}
    	\end{center}
	 \vspace{-.3cm}
    	\caption{Coded BER for QPSK constellation with a rate 1/2 convolution code for $K = 7$, $\SG = 96$, and $b = 2$ against the transmit power $P_{\BS}$ for the 3GPP 38.901 UMa LoS channel model. The proposed solution operates close to the unquantized benchmarks outperforming SQUID and ZF.}
    	 \label{fig:ber3Supp}
	 \vspace{-.4cm}
\end{figure}

\begin{figure}[t]
    	\begin{center}
    		\includegraphics[width=.5\textwidth,clip,keepaspectratio]{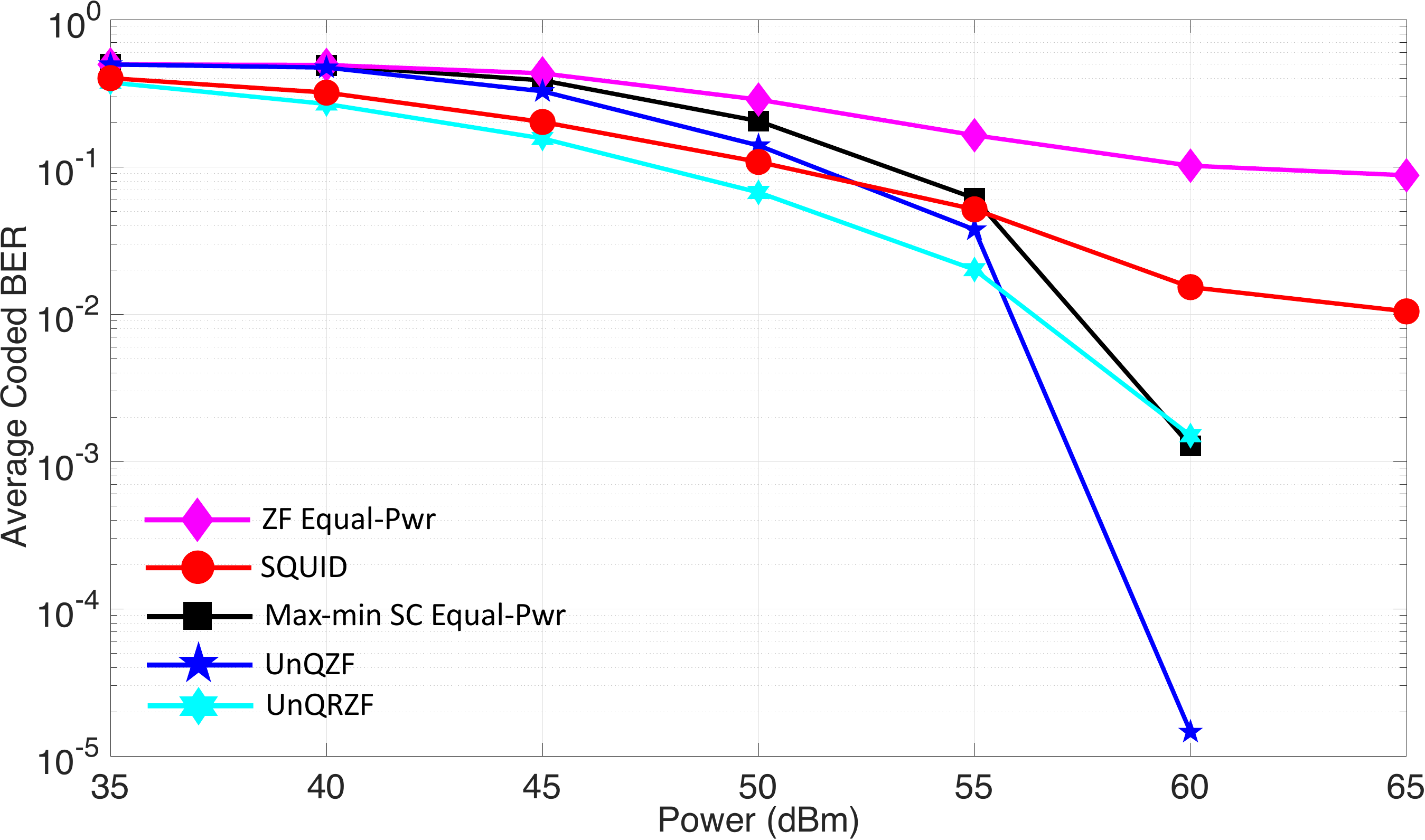}
    	\end{center}
	 \vspace{-.3cm}
    	\caption{Coded BER for QPSK constellation with a rate 1/2 convolution code for $K = 10$, $\SG = 96$ and $b = 2$ against the transmit power $P_{\BS}$ for the 3GPP 38.901 UMa NLoS channel model. The proposed solution operates close to the unquantized benchmarks. SQUID and ZF floor out at BERs close to $10^{-2}$.}
    	 \label{fig:ber32Supp}
	 \vspace{-.4cm}
\end{figure}

Fig. \ref{fig:ber32Supp} illustrates the uncoded BER for the 3GPP 38.901 UMa NLoS channel model  for symbols drawn from a QPSK constellation with a rate 1/2 convolution code, $K = 10$, $\SG = 96$, and $b = 2$ versus the transmit power $P_{\BS}$.
The first thing to observe is that the power value where the BER goes to 0 or stagnates (in case of SQUID and ZF $\approx 60 $ dBm) is relatively higher compared to the previous results. This difference can be attributed to the NLoS channel model. We observed that the highest mode of the NLoS channel model generated by Quadriga was a factor of 10 or so less compared to the LoS channel model. Nevertheless, the proposed solution achieves the performance closet to the unquantized precoders. SQUID outperforms the max-min solution only at the lower end of the transmit power but that is not of interest due to the high BER. Furthermore, SQUID floors out at BERs close to $10^{-2}$.

This last set of results in Fig. \ref{fig:ber4Supp} and Fig. \ref{fig:ber42Supp} illustrates the performance of the proposed solution and the benchmark strategies as the normalized channel estimation error is varied from 0 to 1. Fig. \ref{fig:ber4Supp} plots the coded BER for symbols drawn from the 16QAM constellation with a rate 1/2 convolution code, $K = 5$, $\SG = 64$, $P_{\BS} = 40$ dBm, and $b = 3$ for the LoS channel model. Fig. \ref{fig:ber42Supp} plots the coded BER for symbols drawn from the QPSK constellation with a rate 1/2 convolution code, $K = 5$, $\SG = 64$, $P_{\BS} = 51$ dBm, and $b = 3$ for the NLoS channel model. It can be observed from Fig. \ref{fig:ber4Supp} and Fig. \ref{fig:ber42Supp} that the proposed method achieves the best performance over a wide range of the normalized channel estimation error for both LoS and NLoS channel models with lower and higher order constellations. The proposed solution also outperforms the unquantized precoders when the CSI is not perfect demonstrating its robustness to channel estimation errors.

\begin{figure}[t]
    	\begin{center}
    		\includegraphics[width=.5\textwidth,clip,keepaspectratio]{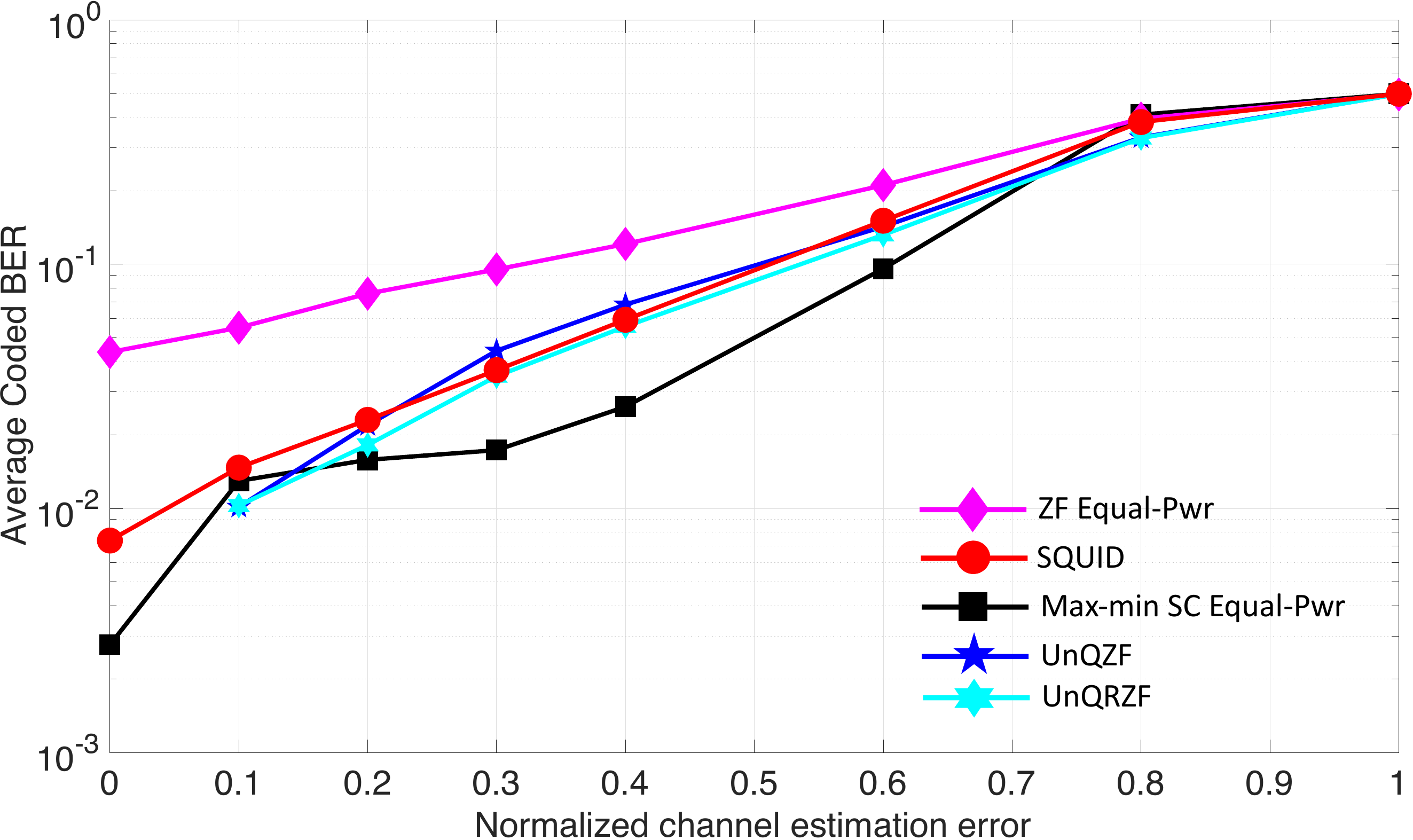}
    	\end{center}
	 \vspace{-.3cm}
    	\caption{Coded BER for 16QAM constellation with a rate 1/2 convolution code for $K = 5$, $\SG = 64$, $P_{\BS} = 40$ dBm and $b = 3$ on the LoS channel model. This result demonstrates that the proposed solution outperforms existing algorithms for higher order constellations over a wide range of the normalized channel estimation error.}
    	 \label{fig:ber4Supp}
	 \vspace{-.4cm}
\end{figure}

\begin{figure}[t]
    	\begin{center}
    		\includegraphics[width=.5\textwidth,clip,keepaspectratio]{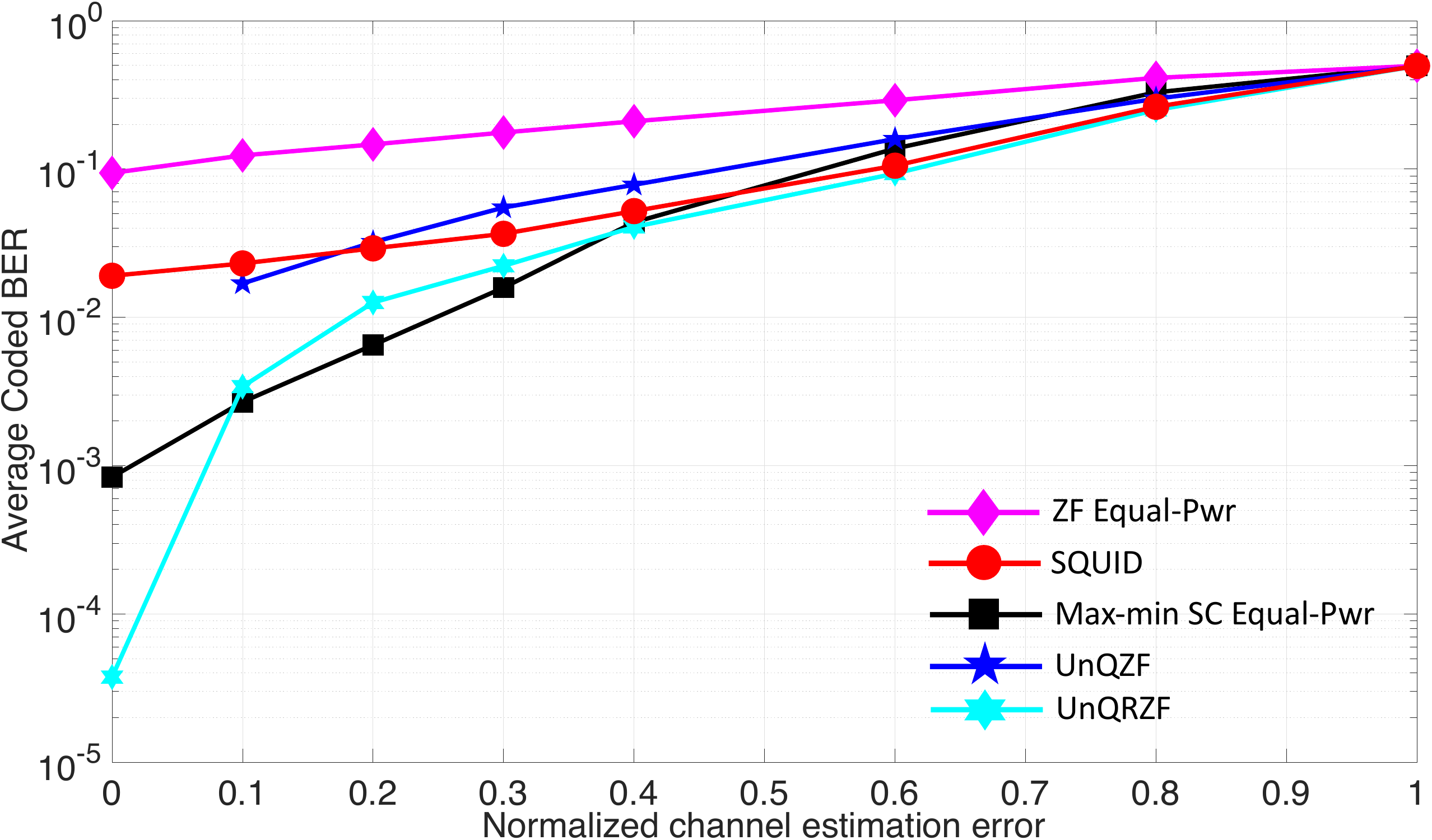}
    	\end{center}
	 \vspace{-.3cm}
    	\caption{Coded BER for QPSK constellation with a rate 1/2 convolution code for $K = 5$, $\SG = 64$, $P_{\BS} = 51$ dBm, and $b = 2$ on the NLoS channel model. This result demonstrates that the proposed solution outperforms existing algorithms over NLoS channel models as well over a wide range of the normalized channel estimation error.}
    	 \label{fig:ber42Supp}
	 \vspace{-.4cm}
\end{figure}

The results presented in this note supplement the results presented in Section \ref{sec:results} of the original manuscript and demonstrate that the proposed max-min formulation and the resulting solution significantly outperforms existing precoding strategies over a wide range of system parameters.

\newpage

\section{Supplementary Appendices}

In this supplementary note, we give the proofs of Lemma \ref{lemma:lemma1} and Lemma \ref{lemma:lemma5} for ease of reviewers. These proofs closely follow the proofs of \cite[Lemma 3.2 and Lemma 3.3]{pw2} and \cite[Lemma 4.3]{pw2} in philosophy with a few changes accounting for the OFDM signal model and CEQs.

\appendices
\section{Proof of lemma \ref{lemma:lemma1}}\label{sec:A}
By dropping the noise term on the RHS of (\ref{eq:downlinkSINR_covar2}), let us define the signal-to-quantization-plus-interference ratio (SQIR) for the $n^\thh$ sub-carrier of the $k^\thh$ user as 
\begin{equation}
\begin{split}
\hat{\gamma}_\kn^{\text{DL}}(\bar{\bm{\sfT}} , \bm{\sfq}) = \frac{ \sfq_\kn {\bm{\sft}}_\kn^\T {\bm{\sfR}}_\kn {\bm{\sft}}_\kn^*}{  \sum_{ \substack{i = 1\\ i \neq k}}^K \sfq_\inn {\bm{\sft}}_\inn^\T  {\bm{\sfR}}_\kn {\bm{\sft}}_\inn^*  + \left( \frac{1}{\zeta_b^2} - 1 \right) \tr \left(  \frac{1}{\SC}  \sum_{i = 1}^{K} \sum_{j = 1}^{\SC}  \sfq_\ijj \bm{\sft}_\ijj^\T \diag \left( {\bm{\sfR}}_\kn^\ast \right)  \bm{\sft}_\ijj^* \right)}
\label{eq:A1}
\end{split}
\end{equation}
It can be observed from (\ref{eq:A1}) that the SQIR is a constant function of scalar multiples of the DL power allocation vector $\bm{\sfq}$ i.e.
\begin{equation}
\hat{\gamma}_\kn^{\text{DL}}(\bar{\bm{\sfT}} , \lambda \bm{\sfq}) = \hat{\gamma}_\kn^{\text{DL}}(\bar{\bm{\sfT}} , \bm{\sfq}),
\label{eq:A2}
\end{equation}
for all positive $\lambda$. It can also be observed from (\ref{eq:downlinkSINR_covar2}) that the DL SQINR, ${\gamma}_\kn^{\text{DL}}(\bar{\bm{\sfT}} , \bm{\sfq}) $, is a monotonically increasing function of scalar multiples of the DL power allocation vector $\bm{\sfq}$ i.e.
\begin{equation}
{\gamma}_\kn^{\text{DL}}(\bar{\bm{\sfT}} , \lambda \bm{\sfq})  > {\gamma}_\kn^{\text{DL}}(\bar{\bm{\sfT}} , \bm{\sfq}) ,
\label{eq:A3}
\end{equation}
for $\lambda > 1$. Furthermore by comparing (\ref{eq:downlinkSINR_covar2}) and (\ref{eq:A1}), it can be seen that
\begin{equation}
\lim_{||\bm{\sfq}||_2 \to \infty } {\gamma}_\kn^{\text{DL}}(\bar{\bm{\sfT}} , \bm{\sfq}) = \lim_{||\bm{\sfq}||_2 \to \infty } \hat{\gamma}_\kn^{\text{DL}}(\bar{\bm{\sfT}} , \bm{\sfq}).
\label{eq:A4}
\end{equation}
For a target DL SQINR set $\{ \gamma_\kn \}$ to be feasible
\begin{equation} \label{eq:A5}
\begin{split}
1 &\leq \min_{\substack{ {1\leq k \leq K} \\ {1 \leq n \leq \SC}}}   \frac{{\gamma}_\kn^{\text{DL}}( \bar{\bm{\sfT}} , \bm{\sfq})}{\gamma_\kn} \\
   & \overset{(\ref{eq:A3})}{<} \max_{||\bm{\sfq}||_2 \to \infty } \left( \min_{\substack{ {1\leq k \leq K} \\ {1 \leq n \leq \SC}}}   \frac{{\gamma}_\kn^{\text{DL}}( \bar{\bm{\sfT}} , \bm{\sfq})}{\gamma_\kn} \right) \triangleq R^\star.
\end{split}
\end{equation}
Making use of (\ref{eq:A2}) and (\ref{eq:A4}), the upper bound (\ref{eq:A5}) is equivalently given by
\begin{equation} \label{eq:A6}
\begin{split}
R^\star & \overset{(\ref{eq:A4})}{=} \max_{||\bm{\sfq}||_2 \to \infty } \left( \min_{\substack{ {1\leq k \leq K} \\ {1 \leq n \leq \SC}}}   \frac{\hat{\gamma}_\kn^{\text{DL}}( \bar{\bm{\sfT}} , \bm{\sfq})}{\gamma_\kn} \right) \\
& \overset{(\ref{eq:A2})}{=} \max_{||\bm{\sfq}||_2 = 1 } \left(  \min_{\substack{ {1\leq k \leq K} \\ {1 \leq n \leq \SC}}}   \frac{\hat{\gamma}_\kn^{\text{DL}}( \bar{\bm{\sfT}} , \bm{\sfq})}{\gamma_\kn} \right).
\end{split}
\end{equation}
The solution to the optimization problem (\ref{eq:A6}) results in \emph{equal} achieved SQIR to target SQIR ratio for all $K \SC$ channels given by
\begin{equation} \label{eq:A7}
R^\star = \frac{\hat{\gamma}_{1,1}^{\text{DL}}( \bar{\bm{\sfT}} , \bm{\sfq}^\star)}{\gamma_{1,1}} = \dots = \frac{\hat{\gamma}_{K,\SC}^{\text{DL}}( \bar{\bm{\sfT}} , \bm{\sfq}^\star)}{\gamma_{K,\SC}},
\end{equation}
where $\bm{\sfq}^\star$ is the power allocation vector which solves (\ref{eq:A6}). This claim is proved in Appendix \ref{sec:C}. The $K \SC$ equations in (\ref{eq:A7}) can be written in matrix form as 
\begin{equation} \label{eq:A8}
\bm{\sfq}^\star \frac{1}{R^\star} =\bar{\bm{\sfD}}(\bar{\bm{\sfT}})  \bar{\bm{\Psi}}(\bar{\bm{\sfT}}) \bm{\sfq}^\star + \bar{\bm{\sfD}}(\bar{\bm{\sfT}})  \bar{\bm{\Phi}}(\bar{\bm{\sfT}})\bm{\sfq}^\star.
\end{equation}
It can be observed from (\ref{eq:A8}) that the achieved SQIR to target SQIR balance value, $R^\star$, equals the reciprocal of an eigenvalue of the matrix $\bar{\bm{\sfD}}(\bar{\bm{\sfT}})  \bar{\bm{\Psi}}(\bar{\bm{\sfT}}) + \bar{\bm{\sfD}}(\bar{\bm{\sfT}})  \bar{\bm{\Phi}}(\bar{\bm{\sfT}})$ and the optimal power allocation vector is given by the corresponding eigenvector. It is also known from Perron-Frobenius theory \cite{MUDL,pw2} that the optimal eigenvalue/eigenvector pair correspond to the maximal eigenvalue of the non-negative matrix $\bar{\bm{\sfD}}(\bar{\bm{\sfT}})  \bar{\bm{\Psi}}(\bar{\bm{\sfT}}) + \bar{\bm{\sfD}}(\bar{\bm{\sfT}})  \bar{\bm{\Phi}}(\bar{\bm{\sfT}})$. Hence
\begin{equation} \label{eq:A9}
\lambda_\maxx \left( \bar{\bm{\sfD}}(\bar{\bm{\sfT}})  \bar{\bm{\Psi}}(\bar{\bm{\sfT}}) + \bar{\bm{\sfD}}(\bar{\bm{\sfT}})  \bar{\bm{\Phi}}(\bar{\bm{\sfT}}) \right) = \frac{1}{R^\star} \overset{(\ref{eq:A5})}{<} 1.
\end{equation}
This establishes that for any feasible target SQINR set $\{ \gamma_\kn \}$, $\lambda_\maxx \left( \bar{\bm{\sfD}}(\bar{\bm{\sfT}})  \bar{\bm{\Psi}}(\bar{\bm{\sfT}}) + \bar{\bm{\sfD}}(\bar{\bm{\sfT}})  \bar{\bm{\Phi}}(\bar{\bm{\sfT}}) \right) < 1$. 

Next, assume that the matrix $\left(  \bm{\sfI}_{K\SC} -  \bar{\bm{\sfD}}(\bar{\bm{\sfT}})  \bar{\bm{\Psi}}(\bar{\bm{\sfT}}) - \bar{\bm{\sfD}}(\bar{\bm{\sfT}})  \bar{\bm{\Phi}}(\bar{\bm{\sfT}}) \right)$ is not invertible. This must mean that for some vector $\bm{\sfb}$
\begin{equation} \label{eq:B1}
\begin{split}
\left(  \bm{\sfI}_{K\SC} -  \bar{\bm{\sfD}}(\bar{\bm{\sfT}})  \bar{\bm{\Psi}}(\bar{\bm{\sfT}}) - \bar{\bm{\sfD}}(\bar{\bm{\sfT}})  \bar{\bm{\Phi}}(\bar{\bm{\sfT}}) \right) \bm{\sfb} &= \bm{0}_{K\SC} \\
\Rightarrow  \left( \bar{\bm{\sfD}}(\bar{\bm{\sfT}})  \bar{\bm{\Psi}}(\bar{\bm{\sfT}}) +  \bar{\bm{\sfD}}(\bar{\bm{\sfT}})  \bar{\bm{\Phi}}(\bar{\bm{\sfT}}) \right) \bm{\sfb}  &=  \bm{\sfb}.
\end{split}
\end{equation}
\sloppy This implies that the matrix $\bar{\bm{\sfD}}(\bar{\bm{\sfT}})  \bar{\bm{\Psi}}(\bar{\bm{\sfT}}) + \bar{\bm{\sfD}}(\bar{\bm{\sfT}})  \bar{\bm{\Phi}}(\bar{\bm{\sfT}})$ has an eigenvalue equal to 1. We know from (\ref{eq:A9}) that $\lambda_\maxx \left( \bar{\bm{\sfD}}(\bar{\bm{\sfT}})  \bar{\bm{\Psi}}(\bar{\bm{\sfT}}) + \bar{\bm{\sfD}}(\bar{\bm{\sfT}})  \bar{\bm{\Phi}}(\bar{\bm{\sfT}}) \right) < 1$. Hence this is a contradiction and the matrix $\left(  \bm{\sfI}_{K\SC} -  \bar{\bm{\sfD}}(\bar{\bm{\sfT}})  \bar{\bm{\Psi}}(\bar{\bm{\sfT}}) - \bar{\bm{\sfD}}(\bar{\bm{\sfT}})  \bar{\bm{\Phi}}(\bar{\bm{\sfT}}) \right)$ is invertible for any feasible target DL SQINR set $\{ \gamma_\kn \}$.

\section{Proof of (\ref{eq:A7}) }\label{sec:C}
Let $(i,j)$ be the user-subcarrier index such that 
\begin{equation} \label{eq:C1}
\frac{\hat{\gamma}_\ijj^{\text{DL}}( \bar{\bm{\sfT}} , \bm{\sfq^\star})}{\gamma_\ijj} > R^\star =  \min_{\substack{ {1\leq k \leq K} \\ {1 \leq n \leq \SC}}}   \frac{\hat{\gamma}_\kn^{\text{DL}}( \bar{\bm{\sfT}} , \bm{\sfq}^\star )}{\gamma_\kn} .
\end{equation}
It can be seen from (\ref{eq:A1}) that the DL SQIR $\hat{\gamma}_\kn^{\text{DL}}(\hat{\bm{\sfT}} , \bm{\sfq})$ is an increasing function of $\sfq_\kn$ and a decreasing function of $\sfq_{\ell,m}$ for $\ell \neq k$ and $n \neq m$. The power allocated to the $(i,j)$ user-subcarrier pair, $\sfq_\ijj$, can be decreased without reducing the objective function $\min_{\substack{ {1\leq k \leq K} , {1 \leq n \leq \SC}}}   \frac{\hat{\gamma}_\kn^{\text{DL}}( \bar{\bm{\sfT}} , \bm{\sfq}^\star )}{\gamma_\kn}$. This excess power can then be allocated to the $(\ell,m)$ user-subcarrier pair whose achieved to target SQIR ratio equals $R^\star$ thus resulting in a larger optimum value of the objective function $\min_{\substack{ {1\leq k \leq K} , {1 \leq n \leq \SC}}}   \frac{\hat{\gamma}_\kn^{\text{DL}}( \bar{\bm{\sfT}} , \bm{\sfq}^\star )}{\gamma_\kn}$. Consequently, the initial assumption was a contradiction and all $\SC$ sub-carriers of the $K$ users achieve the same achieved SQIR to target SQIR ratio.

\section{Proof of lemma \ref{lemma:lemma5}}\label{sec:D}
It was shown in \cite{MUDL} that for any positive $N$-dimensional vectors $\bm{\sfb}$ and $\bm{\sfc}$
\begin{equation}  \label{eq:D1}
\max_{\bm{\sfx}} \frac{ \bm{\sfx}^\T \bm{\sfb} }{ \bm{\sfx}^\T \bm{\sfc} } = \max_{ 1 \leq n \leq N } \frac{ \bm{\sfb}_n }{ \bm{\sfc}_n }.
\end{equation}
Using (\ref{eq:D1}) and the non-negativity of $\bar{\bm{\Lambda}}(\bar{\bm{\sfT}} , P_{\BS}) \bm{\sfp}_\ext $, it follows that
\begin{equation} \label{eq:D2}
 \bar{\lambda} \left( \bar{\bm{\sfT}}, P_{\BS}, \bm{\sfp}_\ext \right) = \max_{1 \leq n \leq K\SC +1}  \frac{ \bm{\sfee}_n^\T \bar{\bm{\Lambda}}(\bar{\bm{T}} , P_{\BS}) \bm{\sfp}_\ext  }{ \bm{\sfee}_n^\T \bm{\sfp}_\ext }.
\end{equation} 
Using (\ref{eq:uplinkSINR_covar5}) and (\ref{eq:gammaMatrix}), the first $K \SC$ equations in (\ref{eq:D2}) can be written as
\begin{equation} \label{eq:D3}
\max_{1 \leq n \leq K\SC}  \frac{ \bm{\sfee}_n^\T \bar{\bm{\Lambda}}(\bar{\bm{\sfT}} , P_{\BS}) \bm{\sfp}_\ext   }{ \bm{\sfee}_n^\T \bm{\sfp}_\ext } = \max_{1 \leq n \leq K\SC} \frac{\gamma_\kn}{\gamma_\kn^{\text{UL}}\left(  \bm{\sft}_\kn ,  \bm{\sfp} \right) }.
\end{equation}
It also follows from (\ref{eq:uplinkSINR_covar5}) and (\ref{eq:gammaMatrix}) that
\begin{equation} \label{eq:D4}
\begin{split}
\frac{ \bm{\sfee}_{K\SC+1}^\T \bar{\bm{\Lambda}}(\bar{\bm{\sfT}} , P_{\BS}) \bm{\sfp}_\ext   }{ \bm{\sfee}_{K\SC+1}^\T \bm{\sfp}_\ext } & = \frac{1}{P_\BS \SC} \sum_{k=1}^K \sum_{n=1}^{\SC}  \frac{\bm{\sfp}_\kn \gamma_\kn}{\gamma_\kn^{\text{UL}}\left(  \bm{\sft}_\kn ,  \bm{\sfp} \right) }\\
& \overset{(a)}{\leq}  \left( \max_{\substack{ {1\leq k \leq K} \\ {1 \leq n \leq \SC}}}\frac{\gamma_\kn}{\gamma_\kn^{\text{UL}}\left(  \bm{\sft}_\kn ,  \bm{\sfp} \right) } \right) \frac{1}{P_\BS \SC} \sum_{k=1}^K  \sum_{n=1}^{\SC} \bm{\sfp}_\kn \\
& =  \max_{\substack{ {1\leq k \leq K} \\ {1 \leq n \leq \SC}}} \frac{\gamma_\kn}{\gamma_\kn^{\text{UL}}\left(  \bm{\sft}_\kn ,  \bm{\sfp} \right) }.
\end{split}
\end{equation}
(a) follows because the max is greater than the average. This shows that the $(K\SC+1)^\thh$ equation in (\ref{eq:D2}) is smaller than or equal to the first $K\SC$ equations. Hence
\begin{equation} \label{eq:D5}
 \bar{\lambda} \left( \bar{\bm{\sfT}}, P_{\BS}, \bm{\sfp}_\ext \right) = \max_{\substack{ {1\leq k \leq K} \\ {1 \leq n \leq \SC}}}\frac{\gamma_\kn}{\gamma_\kn^{\text{UL}}\left(  \bm{\sft}_\kn ,  \bm{\sfp} \right) }.
\end{equation}

\end{document}